\newtheorem{definition}{Definition}[section]
\newtheorem{theorem}[definition]{Theorem}
\newtheorem{prop}[definition]{Proposition}
\newtheorem{lemma}[definition]{Lemma}
\newtheorem{remark}[definition]{Remark}
\newtheorem{corollary}[definition]{Corollary}
\def\a{\alpha}
\def\R{\mathbb{R}}
\def\E{\mathcal{E}}
\def\F{\mathcal{F}}
\def\H{\mathcal{H}}
\def\K{\mathcal{K}}
\def\L{\mathcal{L}}
\def\M{\mathcal{M}}
\def\S{\mathcal{S}}
\DeclareMathOperator{\Tr}{Tr}
\def\hil{{\mathcal H}}
\def\kil{{\mathcal K}}
\def\B{{\mathcal B}}
\def\I{\mathcal{I}}
\def\M{\mathcal{M}}
\def\S{{\mathcal S}}
\def\X{{\mathcal X}}
\def\half{\frac{1}{2}}
\def\iff{\Longleftrightarrow}
\def\imp{\Longrightarrow}
\def\ep{\varepsilon}
\def\bN{\mathbb{N}}
\def\bC{\mathbb{C}}
\def\bR{\mathbb{R}}
\def\bz{\left(}
\def\jz{\right)}
\def\kii{\emph}
\def\kiii{}
\def\map{\F}
\def\mapp{\F'}
\def\old{^{\mathrm{(old)}}}
\def\nw{^{\mathrm{(new)}}}
\def\what{\widehat}
\newcommand{\ki}{\emph}
\newcommand{\ds}{\mbox{ }\mbox{ }}
\newcommand{\vecc}[1]{\underline{#1}}
\newcommand{\diad}[2]{|#1\rangle\langle #2|}
\newcommand{\pr}[1]{\diad{#1}{#1}}
\newcommand{\sr}[2]{D\bz #1\,\|\, #2\jz}
\newcommand{\rsr}[3]{D_{#3}\bz #1\,\|\, #2\jz}
\newcommand{\rsro}[3]{D_{#3}^{\mathrm{(old)}}\bz #1\,\|\, #2\jz}
\newcommand{\rsrn}[3]{D_{#3}^{\mathrm{(new)}}\bz #1\,\|\, #2\jz}
\newcommand{\dmax}[2]{D_{\max}\bz #1\,\|\, #2\jz}
\DeclareMathOperator{\supp}{supp}
\DeclareMathOperator{\Exp}{\mathbb{E}}
\def\nn{\nonumber \\}
\newenvironment{proof}[1][Proof:]{\begin{trivlist}
\item[\hskip \labelsep {\it #1}]}{\hfill$\Box$\end{trivlist}}
\begin{document}

\title{
Quantum hypothesis testing and the operational \\
interpretation of the quantum R\'enyi relative entropies
}

\author{Mil\'an Mosonyi}
\email{milan.mosonyi@gmail.com}
\affiliation{
F\'{\i}sica Te\`{o}rica: Informaci\'{o} i Fenomens Qu\`{a}ntics,
Universitat Aut\`{o}noma de Barcelona, ES-08193 Bellaterra (Barcelona), Spain.
}
\affiliation{
Mathematical Institute, Budapest University of Technology and Economics, \\
Egry J\'ozsef u~1., Budapest, 1111 Hungary.
}

\author{Tomohiro Ogawa}
\email{ogawa@is.uec.ac.jp}
\affiliation{
Graduate School of Information Systems,
University of Electro-Communications,
1-5-1 Chofugaoka, Chofu-shi, Tokyo, 182-8585, Japan.
}

\begin{abstract}
We show that the new quantum extension of R\'enyi's $\alpha$-relative entropies, introduced recently by
M\"uller-Lennert, Dupuis, Szehr, Fehr and Tomamichel, \kiii{J.~Math.~Phys.} \textbf{54}, 122203, (2013), and Wilde, Winter, Yang, \kiii{Commun.~Math.~Phys.}, \textbf{331}, (2014),
have an operational interpretation in the strong converse problem of quantum hypothesis testing.
Together with related results for the direct part of quantum hypothesis testing, known as the quantum Hoeffding bound, our result suggests that the operationally relevant definition of the quantum R\'enyi relative entropies depends on the parameter $\alpha$: for $\alpha<1$, the right choice seems to be the traditional definition $\rsro{\rho}{\sigma}{\alpha}:=\frac{1}{\alpha-1}\log\Tr\rho^{\alpha}\sigma^{1-\alpha}$, whereas for $\alpha>1$ the right choice is the newly introduced version
$\rsrn{\rho}{\sigma}{\alpha}:=
\frac{1}{\alpha-1}\log\Tr\bz\sigma^{\frac{1-\alpha}{2\alpha}}\rho\sigma^{\frac{1-\alpha}{2\alpha}}\jz^{\alpha}$.

As a sideresult, we
show that the new R\'enyi $\alpha$-relative entropies are asymptotically attainable by measurements for $\alpha>1$, and
give a new simple proof for their monotonicity under completely positive trace-preserving maps.
\end{abstract}

\maketitle

\section{Introduction}

R\'enyi in his seminal paper \cite{Renyi} introduced a generalization of the Kullback-Leibler divergence (relative entropy). According to his definition, the $\alpha$-divergence of two probability distributions (more generally, two positive functions) $p$ and $q$ on a finite set $\X$ for a parameter $\alpha\in[0,+\infty)\setminus\{1\}$ is given by
\begin{align}\label{Renyi def}
\rsr{p}{q}{\alpha}:=
\begin{cases}
\frac{1}{\alpha-1}\log\sum_{x\in\X}p(x)^{\alpha}q(x)^{1-\alpha}
-\frac{1}{\alpha-1}\log\sum_{x\in\X}p(x),
& \supp p\subseteq\supp q\text{ or }\alpha\in[0,1),\\
+\infty,&\text{otherwise}.
\end{cases}
\end{align}
The limit $\alpha\to 1$ yields the standard relative entropy.
These quantities turned out to play a central role in information theory and statistics; indeed, the R\'enyi relative entropies and derived quantities quantify the trade-off between the exponents of the relevant quantities in many information-theoretic tasks, including hypothesis testing, source coding and noisy channel coding; see, e.g.~\cite{Csiszar} for an overview of these results. It was also shown in \cite{Csiszar} that the R\'enyi relative entropies, and other related quantities, like the R\'enyi entropies and the R\'enyi capacities, have direct operational interpretations as so-called generalized cutoff rates in the corresponding information-theoretic tasks.

In quantum theory, the state of a system is described by a density operator instead of a probability distribution, and the definition \eqref{Renyi def} can be extended for pairs of density operators (more generally, positive operators) in various inequivalent ways, due to the non-commutativity of operators.
There are some basic requirements any such extension should satisfy; most importantly, positivity and monotonicity under CPTP (completely positive and trace-preserving) maps. That is, if $D_{\alpha}$ is an extension of \eqref{Renyi def} to pairs of positive semidefinite operators, then it should satisfy
\begin{align*}
\rsr{\rho}{\sigma}{\alpha}\ge 0\ds\ds\ds\text{and}\ds\ds\ds\rsr{\rho}{\sigma}{\alpha}= 0\iff\rho=\sigma
& &\text{(positivity)}
\end{align*}
for any density operators $\rho,\sigma$ and $\alpha>0$, and if $\map$ is a CPTP map then
\begin{align}
\rsr{\map(\rho)}{\map(\sigma)}{\alpha}\le\rsr{\rho}{\sigma}{\alpha}& &\text{(monotonicity)}
\label{eq:2}
\end{align}
should hold.

One formal extension has been known in the literature for a long time, defined as
\begin{equation}\label{old Renyi}
\rsro{\rho}{\sigma}{\alpha}:=
\begin{cases}
\frac{1}{\alpha-1}\log\Tr\rho^{\alpha}\sigma^{1-\alpha}-\frac{1}{\alpha-1}\log\Tr\rho,&
\supp\rho\subseteq\supp\sigma\text{ or }\alpha\in[0,1),\\
+\infty,&\text{otherwise}.
\end{cases}
\end{equation}
H\"older's inequality ensures positivity of $D_{\alpha}^{\mathrm{(old)}}$ for every
$\alpha>0$. Monotonicity has been proved for $\alpha\in[0,2]\setminus\{1\}$ with various
methods \cite{Lieb,Petz,Uhlmann}, but it doesn't hold for $\alpha>2$ in general, as it was noted, e.g., in \cite{Renyi_new}. Monotonicity under measurements, however, is still true for $\alpha>2$ \cite{H:text}.
In the limit $\alpha\to 1$, these divergences yield Umegaki's
relative entropy \cite{Umegaki}
\begin{equation}\label{Umegaki}
\rsr{\rho}{\sigma}{1}:=\lim_{\alpha\to 1}\rsro{\rho}{\sigma}{\alpha}=
\sr{\rho}{\sigma}:=
\begin{cases}
\frac{1}{\Tr\rho}\Tr\rho(\log\rho-\log\sigma),&\supp\rho\subseteq\supp\sigma,\\
+\infty,&\text{otherwise}.
\end{cases}
\end{equation}
The quantum Stein's lemma \cite{HP,ON} gives an operational interpretation to Umegaki's relative entropy (which we will call simply relative entropy for the rest)
in a state discrimination problem,
as the optimal decay rate of the type II error under the assumption that the type I error goes to $0$ (see section \ref{sec:hypotesting} for details).
This shows that Umegaki's relative entropy is the right non-commutative extension of the Kullback-Leibler divergence from an information-theoretic point of view.

It has been shown in \cite{MH} that, similarly to the classical case, the R\'enyi $\alpha$-relative entropies
$D_{\alpha}^{\mathrm{(old)}}$ with $\alpha\in(0,1)$ have a direct operational interpretation as generalized cutoff rates
in binary state discrimination. This in turn is based on the so-called quantum Hoeffding bound theorem, that
quantifies the trade-off between the optimal exponential decay rates of the two error probabilities in binary state discrimination \cite{Hayashi,HMO2,Nagaoka,ANSzV}. In more detail, it says that if the type II error is required to vanish asymptotically as $\sim e^{-nr}$ for some $r>0$
($n$ is the number of the copies of the system, all prepared in state $\rho$ or all prepared in state $\sigma$)
then the optimal type I error goes to $0$ exponentially fast with the exponent given by the Hoeffding divergence
\begin{equation}\label{Hoeffding div}
H_r(\rho\|\sigma):=\sup_{0<\alpha<1}\frac{\alpha-1}{\alpha}\left[r-\rsro{\rho}{\sigma}{\alpha}\right],
\end{equation}
as long as
$r<\sr{\rho}{\sigma}$. The transformation rule defining $H_r(\rho\|\sigma)$ from the $\alpha$-relative entropies can be inverted, and $\rsro{\rho}{\sigma}{\alpha}$ can be expressed in terms of the Hoeffding divergences for any $\alpha\in(0,1)$. These results suggest that $D_{\alpha}^{\mathrm{(old)}}$ gives the right quantum extension of the R\'enyi $\alpha$-relative entropies for the parameter range $\alpha\in(0,1)$.

Recently, a new quantum extension of the R\'enyi $\alpha$-relative entropies have been proposed in \cite{Renyi_new,WWY}, defined as
\begin{equation}\label{new Renyi def}
\rsrn{\rho}{\sigma}{\alpha}:=\begin{cases}
\frac{1}{\alpha-1}\log\Tr\bz\sigma^{\frac{1-\alpha}{2\alpha}}\rho\sigma^{\frac{1-\alpha}{2\alpha}}\jz^{\alpha}-\frac{1}{\alpha-1}\log\Tr\rho,&
\supp\rho\subseteq\supp\sigma\text{ or }\alpha\in[0,1),\\
+\infty,&\text{otherwise}.
\end{cases}
\end{equation}
These new R\'enyi divergences also yield Umegaki's relative entropy in the limit $\alpha\to 1$.
Monotonicity for the range $\alpha\in(1,2]$ has been shown in \cite{Renyi_new,WWY} and extended to $\alpha\in(1,+\infty)$ in \cite{Beigi} and, independently and with a different proof method, for the range $\alpha\in[\half,1)\cup(1,+\infty)$ in \cite{FL}. It is claimed in \cite{Renyi_new} that these new R\'enyi relative entropies are not monotone for $\alpha\in[0,\half)$. Positivity follows immediately from
the monotonicity for $\alpha\in[\half,1)\cup(1,+\infty)$. The Araki-Lieb-Thirring inequality
\cite{Araki,LT} (see also \cite[Theorem IX.2.10]{Bhatia}) implies that
\begin{equation}\label{ALT}
\rsrn{\rho}{\sigma}{\alpha}\le\rsro{\rho}{\sigma}{\alpha}
\end{equation}
for every $\rho,\sigma$ and $\alpha\in(0,+\infty)\setminus\{1\}$.
Moreover, the results of \cite{Hiai ALT} yield that for non-commuting operators the above inequality is strict for all
$\alpha\in(0,+\infty)\setminus\{1\}$.
 The converse Araki-Lieb-Thirring inequality of \cite{Aud-ALT} implies lower bounds on $D_{\alpha}\nw$ in terms of $D_{\alpha}\old$ \cite{M}.

In this paper we show that the new R\'enyi relative entropies with $\alpha>1$ play the same role in the converse part of binary state discrimination as the old R\'enyi relative entropies with $\alpha\in(0,1)$ play in the direct part. Namely, we show
(in Theorem \ref{thm:exponent}) that
if the type II error is required to vanish asymptotically as $\sim e^{-nr}$ with some $r>\sr{\rho}{\sigma}$
then the optimal type I error goes to $1$ exponentially fast,
with the exponent given by the converse Hoeffding divergence
\begin{equation}\label{converse Hoeffding}
H_r^*(\rho\|\sigma):=\sup_{1<\alpha}\frac{\alpha-1}{\alpha}\left[r-\rsrn{\rho}{\sigma}{\alpha}
\right].
\end{equation}
From this,
we derive (in Theorem \ref{thm:cutoff}) a representation of the new R\'enyi relative
entropies as generalized cutoff rates in the strong converse domain, thus providing a direct
operational interpretation of
the new R\'enyi relative entropies for $\alpha>1$.
These results are direct quantum counterparts of the well-known classical results by Han and Kobayashi \cite{HK} and Csisz\'ar \cite{Csiszar}.
In the quantum case, Hayashi \cite{H:text} obtained a limiting formula
for the strong converse exponent
using the classical R\'enyi relative entropies;
see Remarks \ref{Remark-Hayashi-1} and \ref{Remark-Hayashi-2}.
Our formula \eqref{converse Hoeffding} can be seen 
as a single-letterization of Hayashi's exponent.

In the proof we only use the monotonicity of the new R\'enyi relative entropies under pinching \cite[Proposition 13]{Renyi_new}, and show (in Theorem \ref{thm:attainability}) that the new R\'enyi relative entropies can be asymptotically attained by measurements, similarly to the relative entropy \cite{HP}. Based on this,
we provide a simple new proof for the monotonicity of $D_{\alpha}^{\mathrm{(new)}}$ under CPTP maps for $\alpha>1$ as a side-
result. We give an overview of the monotonicity and attainability properties of the old and the new R\'enyi relative 
entropies in Appendix \ref{sec:mon}.

Our results suggest that, somewhat surprisingly, the right formula to define the R\'enyi $\alpha$-relative entropies for quantum states depends on whether the parameter $\alpha$ is below or above $1$; it seems that for $\alpha<1$, one should use the old R\'enyi relative entropies, while for $\alpha>1$, the new R\'enyi relative entropies are the right choice. Hence, we suggest to define the R\'enyi relative entropies for quantum states
(more generally, for positive operators) $\rho,\sigma$
as
\begin{equation*}
\rsr{\rho}{\sigma}{\alpha}:=\begin{cases}
\frac{1}{\alpha-1}\log\Tr\rho^{\alpha}\sigma^{1-\alpha}-\frac{1}{\alpha-1}\log\Tr\rho,&
\alpha\in[0,1),\\
\frac{1}{\alpha-1}\log\Tr\bz\sigma^{\frac{1-\alpha}{2\alpha}}\rho\sigma^{\frac{1-\alpha}{2\alpha}}\jz^{\alpha}
-\frac{1}{\alpha-1}\log\Tr\rho, & \alpha>1\text{ and }\supp\rho\subseteq\supp\sigma,\\
+\infty,&\text{otherwise}.
\end{cases}
\end{equation*}

\section{Preliminaries}

For a finite-dimensional Hilbert space $\hil$, let $\L(\hil)$ denote the set of linear operators on $\hil$,
let $\L(\hil)_+$ denote the set of positive semidefinite operators,
and $\S(\H)$ be the set of density operators (states) on $\hil$ (i.e., positive semidefinite operators with trace $1$).
A finite-valued POVM (positive operator valued measure) on $\hil$ is a map $M:\,\I\to\L(\hil)$, where $\I$ is some finite set, $0\le M_i,\,i\in\I$, and $\sum_{i\in\I}M_i=I$. We denote the set of POVMs on $\hil$ by
$\M(\hil)$.

Any Hermitian operator $A\in\L(\hil)$ admits a spectral decomposition $A=\sum_i a_iP_i$, where $a_i\in\bR$ and the $P_i$ are orthogonal projections. We introduce the notation
$\{A>0\}:=\sum_{i:\,a_i>0}P_i$ for the spectral projection of $A$
corresponding to the positive half-line $(0,+\infty)$.
The spectral projections $\{A\ge 0\},\,\{A<0\}$ and $\{A\le 0\}$
are defined similarly.
The positive part of $A$ is defined as
\begin{align}
A_+:= A \{A>0\},
\label{eq:37}
\end{align}
and it is easy to see that
\begin{align}
\Tr A_+=\Tr A \{A>0\} = \max_{0\le T\le I} \Tr AT\ge 0.
\label{eq:38}
\end{align}
In particular, if $\rho_n$ and $\sigma_n$ are self-adjoint operators then for any $a\in\bR$ the
application of \eqref{eq:38} to $A=\rho_n-e^{na}\sigma_n$ yields
\begin{align}
\Tr\rho_n\{\rho_n-e^{na}\sigma_n>0\}
\ge e^{na}\Tr\sigma_n\{\rho_n-e^{na}\sigma_n>0\}.
\label{eq:39}
\end{align}

If $\F$ is a positive trace-preserving map then
\begin{align*}
\Tr\F(A)_+&=\max_{0\le T\le I}\Tr\F(A)T=\max_{0\le T\le I}\Tr A\F^*(T)
\le\max_{0\le S\le I}\Tr AS=\Tr A_+.
\end{align*}
In particular, we have the following lemma.
\begin{lemma}
\label{mono:positive}
Let $\rho_n$ and $\sigma_n$ be self-adjoint operators
and $\F$ be a positive trace-preserving map.
Then for any $a\in\bR$,
\begin{align}
\Tr(\rho_n-e^{na}\sigma_n)_+
\ge\Tr(\F(\rho_n)-e^{na}\F(\sigma_n))_+ .
\label{eq:40}
\end{align}
\end{lemma}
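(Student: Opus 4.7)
The plan is to note that Lemma \ref{mono:positive} is essentially an immediate specialization of the variational inequality displayed just before its statement, once we verify that the required passage through the adjoint map is valid for a positive trace-preserving $\F$. So my strategy is to spell out two ingredients: (i) the variational identity $\Tr A_+=\max_{0\le T\le I}\Tr AT$ from \eqref{eq:38}, which is already available, and (ii) the fact that the Heisenberg dual $\F^*$ maps the order interval $[0,I]$ into itself.

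First I would recall that $\F^*$, defined by $\Tr \F(X)Y=\Tr X\F^*(Y)$, is positive because $\F$ is positive (positivity is preserved by taking adjoints via the Hilbert--Schmidt inner product), and unital because $\F$ is trace-preserving: for every $Y$, $\Tr Y=\Tr\F(I)Y=\Tr I\cdot\F^*(Y)$ for all $Y$ would force $\F^*(I)=I$; more directly, $\Tr X\F^*(I)=\Tr\F(X)I=\Tr X$ for all $X$, hence $\F^*(I)=I$. Consequently, if $0\le T\le I$ then $0\le \F^*(T)\le \F^*(I)=I$, i.e.\ $\F^*(T)$ is an admissible test on the input side.

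Next I would apply the variational identity with $A:=\rho_n-e^{na}\sigma_n$, which is self-adjoint, and use linearity of $\F$ to get $\F(A)=\F(\rho_n)-e^{na}\F(\sigma_n)$. Then
\begin{align*}
\Tr\bigl(\F(\rho_n)-e^{na}\F(\sigma_n)\bigr)_+
&=\max_{0\le T\le I}\Tr\F(A)T
=\max_{0\le T\le I}\Tr A\,\F^*(T)\\
&\le\max_{0\le S\le I}\Tr AS
=\Tr(\rho_n-e^{na}\sigma_n)_+,
\end{align*}
where the inequality uses the inclusion $\{\F^*(T):0\le T\le I\}\subseteq\{S:0\le S\le I\}$ established above.

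There is no real obstacle; the only conceptual point to be careful about is that the argument works under the weaker hypothesis of positivity (rather than complete positivity) of $\F$, because positivity alone suffices both to make $\F^*$ positive and to justify the variational formula via $0\le T\le I$ tests. This is precisely what makes the lemma applicable to measurements (which give positive, but typically not completely positive, classical-quantum maps) as well as to CPTP channels.
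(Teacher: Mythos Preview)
Your proof is correct and is exactly the argument the paper gives in the display immediately preceding the lemma: the variational identity \eqref{eq:38}, passage to the adjoint, and the observation that a positive trace-preserving $\F$ has a positive unital dual $\F^*$, so $\F^*([0,I])\subseteq[0,I]$. One small correction to your closing parenthetical: measurement maps $\rho\mapsto\sum_x(\Tr\rho M_x)\pr{x}$ are in fact completely positive, not merely positive; the point of only requiring positivity here is rather that it covers maps like the pinching-plus-partial-trace used later in the proof of Theorem~\ref{thm:alpha}, and more generally the class considered in Remark~\ref{rem:EPPMON}.
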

\medskip

Let $A$ be a Hermitian operator on $\hil$ with spectral decomposition $A=\sum_i a_iE_i$. The \ki{pinching operation} $\E_A$ corresponding to $A$ is defined as
\begin{align}
\E_A(B):=\sum_iE_iBE_i,\ds\ds\ds B\in\L(\hil).
\label{eq:8}
\end{align}
It is also denoted by $\E_E(B)$ in terms of the PVM (projection-valued measure) $E=\{E_i\}_i$.
Note that $\E_A(B)$ is the unique operator in the commutant $\{A\}'$ of $\{A\}$ satisfying
\begin{align}
\forall C\in\{A\}',\ds\Tr BC = \Tr \E_A(B)C.
\label{eq:9}
\end{align}

The following lemma is from \cite{H:pinching,H:text}:
\begin{lemma}[pinching inequality]\label{lemma:pinching}
Let $A$ be self-adjoint and $B$ be a positive semidefinite operator on $\hil$. Then
\begin{equation*}
B\le v(A)\E_A(B),
\end{equation*}
where $v(A)$ denotes the number of different eigenvalues of $A$.
\end{lemma}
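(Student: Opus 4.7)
The plan is to realize the pinching $\E_A$ as a uniform average over a finite family of unitaries commuting with $A$, and then use positivity of $B$ to drop all but one term of the average.

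First I would enumerate the distinct eigenvalues of $A$ as $a_1,\dots,a_v$ with $v=v(A)$, so that $A=\sum_{j=1}^v a_j E_j$ with $E_j$ pairwise orthogonal projections summing to $I$. Setting $\omega:=e^{2\pi i/v}$, I would introduce the operators
\begin{equation*}
U_k\ :=\ \sum_{j=1}^v \omega^{kj}\,E_j,\qquad k=0,1,\dots,v-1.
\end{equation*}
Because the $E_j$ are mutually orthogonal projections summing to $I$, each $U_k$ is unitary, with $U_k^*=\sum_j \omega^{-kj}E_j$, and clearly $U_0=I$.

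Next I would expand
\begin{equation*}
\sum_{k=0}^{v-1} U_k B U_k^*\ =\ \sum_{j,\ell=1}^v \Bigl(\sum_{k=0}^{v-1}\omega^{k(j-\ell)}\Bigr)\,E_j B E_\ell
\ =\ v\sum_{j=1}^v E_j B E_j\ =\ v\,\E_A(B),
\end{equation*}
using the discrete orthogonality $\sum_k \omega^{k(j-\ell)}=v\,\delta_{j\ell}$. Since $B\ge 0$, each summand $U_k B U_k^*$ is positive semidefinite, so singling out the $k=0$ term gives
\begin{equation*}
B\ =\ U_0 B U_0^*\ \le\ \sum_{k=0}^{v-1} U_k B U_k^*\ =\ v\,\E_A(B),
\end{equation*}
which is the desired inequality.

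There is no real obstacle here once one hits on the unitary-average representation; the only point to verify carefully is that the discrete Fourier phases $\omega^{kj}$ precisely kill the off-diagonal terms $E_j B E_\ell$ with $j\ne\ell$, leaving exactly $v$ times the pinched operator. The argument uses only self-adjointness of $A$ (to ensure the $E_j$ are orthogonal projections summing to $I$) and positivity of $B$ (to discard the $k\ge 1$ summands), matching the hypotheses of the lemma.
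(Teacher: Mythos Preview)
Your proof is correct. The paper itself does not give a proof of this lemma but simply cites Hayashi \cite{H:pinching,H:text}; the argument you wrote (realizing the pinching as the uniform average $\frac{1}{v}\sum_{k=0}^{v-1}U_k(\cdot)U_k^*$ with $U_k=\sum_j\omega^{kj}E_j$ and then dropping all terms but $k=0$) is precisely the standard proof that appears in those references.
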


All through the paper, $\rho$ and $\sigma$ will denote positive semidefinite operators on some finite-dimensional Hilbert space $\hil$, and we use the notation
\begin{equation}\label{eq:10}
\rho_n:=\rho^{\otimes n},\ds\ds\ds
\sigma_n:=\sigma^{\otimes n},\ds\ds\ds
\widehat\rho_n:=\E_{\sigma_{n}}(\rho_n),\ds\ds\ds
v_n:=v(\sigma_n),
\end{equation}
where $\E_{\sigma_{n}}$ is the pinching operation corresponding to $\sigma_n$, and $v_n$ denotes the number of different eigenvalues of $\sigma_n$.
Note that $v_n\le(n+1)^{\dim\H}$, and lemma \ref{lemma:pinching} yields
\begin{align}
\rho_n\le v_n\widehat\rho_n\le (n+1)^{\dim\H}\widehat\rho_n.
\label{eq:11}
\end{align}
The power of the pinching inequality for asymptotic analysis comes from the fact that
\begin{equation*}
\lim_{n\to+\infty}\frac{1}{n}\log v_n=0,
\end{equation*}
which we will use repeatedly and without further explanation in the paper.
\smallskip

We will use the convention that powers of a positive semidefinite operator are only taken on its support and defined to be $0$ on the orthocomplement of its support.
That is, if $a_1,\ldots,a_r$ are the eigenvalues of $A\ge 0$, with corresponding eigenprojections $P_1,\ldots,P_r$, then $A^{p}:=\sum_{i:\,a_i>0}a_i^p P_i$ for any
$p\in\bR$. In particular, $A^0$ is the projection onto the support of $A$.
We will also use the convention $\log 0:=-\infty$.

\section{Properties of the new R\'enyi relative entropies}\label{sec:attainability}

For positive semidefinite operators $\rho$ and $\sigma$, and $\alpha\in\bR$, let
\begin{align}
F_{\a}(\rho\|\sigma):=\log
\Tr\left(\sigma^{\frac{1-\a}{2\a}}\rho\sigma^{\frac{1-\a}{2\a}}\right)^{\a}.
\label{eq:5}
\end{align}
For a POVM $M=\{M_x\}_x$,
we can consider the corresonding classical quantity as
\begin{align}
F_{\a}^{M}(\rho\|\sigma)
:=\log\left(\sum_x\{\Tr\rho M_x\}^{\a}\{\Tr\sigma M_x\}^{1-\a}\right).
\label{eq:6}
\end{align}
Note that for states $\rho$ and $\sigma$ such that $\supp\rho\subseteq\supp\sigma$, $\frac{1}{\alpha-1}F_{\a}(\rho\|\sigma)$ is the new R\'enyi $\alpha$-relative entropy defined in \eqref{new Renyi def}, and $\frac{1}{\alpha-1}F_{\a}^{M}(\rho\|\sigma)$ is the post-measurement R\'enyi $\alpha$-relative entropy.

In this section we show that for every $\alpha>1$, the new R\'enyi $\alpha$-relative entropies are asymptotically attainable by measurements in the limit of infinitely many copies of $\rho$ and $\sigma$; for this we only use that the new R\'enyi $\alpha$-relative entropies are monotonic under pinching by the reference state, which is very simple to show.
From this we derive a new simple proof for the monotonicity of the new R\'enyi $\alpha$-relative entropies.

Monotonicity in the classical case is well-known and easy to prove; we state it explicitly here for completeness:
\begin{lemma}[classical monotonicity]\label{lemma:classical monotonicity}
Let $\rho,\sigma\in\B(\hil)_+$ be commuting operators such that $\supp\rho\subseteq\supp\sigma$, and let $\F:\,\B(\hil)\to\B(\kil)$ be a positive trace-preserving map such that $\F(\rho)$ commutes with $\F(\sigma)$.
For every $\alpha>1$,
$F_{\alpha}(\F(\rho)\|\F(\sigma))\le F_{\alpha}(\rho\|\sigma)$.
\end{lemma}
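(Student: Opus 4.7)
The plan is to reduce the statement to a classical data-processing inequality for Rényi divergence on a finite alphabet, and then apply Jensen's inequality. Since $\rho$ and $\sigma$ commute, I can pick an orthonormal basis $\{|x\rangle\}$ jointly diagonalizing them, with eigenvalues $p_x$ of $\rho$ and $q_x$ of $\sigma$; similarly pick a joint eigenbasis $\{|y\rangle\}$ of $\F(\rho)$ and $\F(\sigma)$, with eigenvalues $p'_y$ and $q'_y$. In these bases $\sigma^{(1-\alpha)/(2\alpha)}\rho\sigma^{(1-\alpha)/(2\alpha)}$ is diagonal with entries $p_x q_x^{(1-\alpha)/\alpha}$, so $F_{\alpha}(\rho\|\sigma)=\log\sum_x p_x^{\alpha}q_x^{1-\alpha}$ and analogously $F_{\alpha}(\F(\rho)\|\F(\sigma))=\log\sum_y(p'_y)^{\alpha}(q'_y)^{1-\alpha}$.

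Next I would extract a stochastic kernel from $\F$ using the output PVM. Setting $M_y:=|y\rangle\langle y|$ and $N_y:=\F^*(M_y)$, positivity of $\F$ gives $N_y\ge 0$ and trace-preservation gives $\sum_y N_y=I$, so the numbers $W(y|x):=\langle x|N_y|x\rangle$ satisfy $W(y|x)\ge 0$ and $\sum_y W(y|x)=1$. By construction $p'_y=\Tr(\F(\rho)M_y)=\sum_x W(y|x)p_x$ and likewise $q'_y=\sum_x W(y|x)q_x$, so the question becomes the classical monotonicity of $\sum p^\alpha q^{1-\alpha}$ under a stochastic map.

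The heart of the argument is a single Jensen step. Define the reversed conditional distribution $\pi(x|y):=W(y|x)q_x/q'_y$ (on the $y$'s with $q'_y>0$); it is a probability measure in $x$, and
\begin{align*}
\frac{p'_y}{q'_y}=\sum_x \pi(x|y)\,\frac{p_x}{q_x}.
\end{align*}
Since $\alpha>1$ the function $t\mapsto t^{\alpha}$ is convex, so Jensen yields $(p'_y/q'_y)^{\alpha}\le\sum_x\pi(x|y)(p_x/q_x)^{\alpha}$, equivalently $(p'_y)^{\alpha}(q'_y)^{1-\alpha}\le\sum_x W(y|x)\,p_x^{\alpha}q_x^{1-\alpha}$. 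Summing over $y$ and using $\sum_y W(y|x)=1$ gives $\sum_y(p'_y)^{\alpha}(q'_y)^{1-\alpha}\le\sum_x p_x^{\alpha}q_x^{1-\alpha}$, and the lemma follows by taking logarithms.

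The only delicate point is the support bookkeeping: I need $p'_y=0$ whenever $q'_y=0$ so that the corresponding terms can be discarded without penalty. This is where the hypothesis $\supp\rho\subseteq\supp\sigma$ enters, as it gives $\rho\le c\sigma$ for some $c>0$, hence $\F(\rho)\le c\F(\sigma)$ by positivity of $\F$, and so $\supp\F(\rho)\subseteq\supp\F(\sigma)$; the vanishing $q_x$ on the input side are handled analogously by the conventions fixed in the preliminaries. I do not expect a real obstacle here, since the whole argument is essentially the classical proof of monotonicity of the Rényi divergence, the only quantum content being the short PVM/$\F^*$ manipulation that produces the stochastic matrix $W$.
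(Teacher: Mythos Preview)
Your proposal is correct and takes essentially the same approach as the paper. The paper's proof is a one-line reference to the standard argument based on convexity of $x\mapsto x^{\alpha}$ (citing \cite[Proposition A.3]{HMPB}); your write-up simply spells out that argument in full, including the reduction to a classical stochastic kernel via the output PVM and $\F^*$, and the Jensen step.
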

\begin{proof}
The proof is an elementary argument based on the convexity of the function $x\mapsto x^{\alpha}$ on $[0,+\infty)$ for $\alpha>1$; details can bee found e.g.~in
\cite[Proposition A.3]{HMPB}.
\end{proof}

The following has been shown in \cite[Proposition 13]{Renyi_new}. We reproduce the proof here for readers' convenience.
\begin{lemma}[monotonicity under pinching]
\label{mono:pinching}
Let $\rho,\sigma\in\L(\hil)_+$ and $\alpha\ge 1$. Then
\begin{align}
F_{\a}(\E_{\sigma}(\rho)\|\sigma)\le F_{\a}(\rho\|\sigma).
\label{eq:12}
\end{align}
\end{lemma}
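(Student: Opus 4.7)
The plan is to reduce the claim to the statement that pinching does not increase $\Tr(\cdot)^\alpha$ on the positive cone for $\alpha\ge 1$, and then prove that reduction via the standard ``pinching is an average of unitaries'' trick together with convexity. First I would set $T:=\sigma^{\frac{1-\alpha}{2\alpha}}$. Writing $\sigma=\sum_i\lambda_i E_i$ in spectral form, $T$ is a function of $\sigma$ and therefore commutes with every $E_i$. Consequently $T\,\E_\sigma(\rho)\,T=\sum_i T E_i\rho E_i T=\sum_i E_i(T\rho T)E_i=\E_\sigma(T\rho T)$, so that $F_\alpha(\E_\sigma(\rho)\|\sigma)=\log\Tr\E_\sigma(T\rho T)^\alpha$ while $F_\alpha(\rho\|\sigma)=\log\Tr(T\rho T)^\alpha$. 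Setting $Y:=T\rho T\ge 0$, the lemma therefore reduces to the inequality $\Tr\E_\sigma(Y)^\alpha\le\Tr Y^\alpha$, which is to be proved for every $Y\ge 0$ and every $\alpha\ge 1$.

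The next step is to realize $\E_\sigma$ as a uniform convex combination of unitary conjugations inside $\{\sigma\}'$. Let $k:=v(\sigma)$ and, for $j=1,\ldots,k$, define $U_j:=\sum_{\ell=1}^k e^{2\pi ij\ell/k}E_\ell$. Each $U_j$ is unitary because the $E_\ell$ are mutually orthogonal projections summing to $I$, and from the character orthogonality relation $\frac{1}{k}\sum_{j=1}^k e^{2\pi ij(\ell-\ell')/k}=\delta_{\ell\ell'}$ one obtains $\frac{1}{k}\sum_{j=1}^k U_j Y U_j^*=\sum_{\ell} E_\ell Y E_\ell=\E_\sigma(Y)$.

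The conclusion then follows from convexity combined with unitary invariance. The map $Y\mapsto\Tr Y^\alpha$ is convex on the positive semidefinite cone for $\alpha\ge 1$, since $\|Y\|_\alpha=(\Tr Y^\alpha)^{1/\alpha}$ is a norm (hence convex) and $t\mapsto t^\alpha$ is convex and nondecreasing on $[0,\infty)$ for $\alpha\ge 1$. Moreover $\Tr(UYU^*)^\alpha=\Tr Y^\alpha$ for any unitary $U$. Hence $\Tr\E_\sigma(Y)^\alpha=\Tr\bigl(\tfrac{1}{k}\sum_j U_j Y U_j^*\bigr)^\alpha\le\tfrac{1}{k}\sum_j\Tr(U_j Y U_j^*)^\alpha=\Tr Y^\alpha$, which gives the lemma.

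There is no serious obstacle; the only mild subtlety is the case $\supp\rho\not\subseteq\supp\sigma$, where $T$ annihilates the part of $\rho$ living outside $\supp\sigma$ symmetrically on both sides of the identity $T\E_\sigma(\rho)T=\E_\sigma(T\rho T)$, so the argument applies verbatim. (The boundary case $\alpha=1$ is immediate since $F_1(\rho\|\sigma)=\log\Tr\rho=\log\Tr\E_\sigma(\rho)$.)
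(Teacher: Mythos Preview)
Your proof is correct and follows essentially the same route as the paper's: both first reduce to showing $\Tr\E_\sigma(Y)^\alpha\le\Tr Y^\alpha$ for $Y=\sigma^{\frac{1-\alpha}{2\alpha}}\rho\,\sigma^{\frac{1-\alpha}{2\alpha}}$ via the commutation of $\sigma^{\frac{1-\alpha}{2\alpha}}$ with the spectral projections of $\sigma$, and then use that pinching is an average of unitary conjugations together with convexity of $t\mapsto t^\alpha$. The paper packages the second step as a citation to Bhatia (Problem~II.5.5, which gives the majorization $\lambda(\E_\sigma(Y))\prec\lambda(Y)$ via exactly the roots-of-unity averaging you wrote out, combined with Theorem~II.3.1 on convex functions and majorization), whereas you spell out the averaging explicitly and invoke convexity of the Schatten $\alpha$-norm; these are two phrasings of the same argument. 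One tiny remark: your formula $F_1(\rho\|\sigma)=\log\Tr\rho$ in the boundary-case comment assumes $\supp\rho\subseteq\supp\sigma$; in general both sides equal $\log\Tr\sigma^0\rho\sigma^0$, but this is harmless since your main argument already covers $\alpha=1$.
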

\begin{proof}
It is easy to see that
$\sigma^{\frac{1-\a}{2\a}}\E_{\sigma}(\rho)\sigma^{\frac{1-\a}{2\a}}
=
\E_{\sigma}\bz\sigma^{\frac{1-\a}{2\a}}\rho\sigma^{\frac{1-\a}{2\a}}\jz$,
and Problem II.5.5 with Theorem II.3.1 in \cite{Bhatia}, applied to the convex function $f(t)=t^{\alpha}$, yields the assertion.
\end{proof}

Using the above two lemmas, we can prove monotonicity under measurements.

\begin{lemma}[monotonicity under measurements]
\label{mono:measurement}
Let $\rho,\sigma\in\L(\hil)_+$ be such that $\supp\rho\subseteq\supp\sigma$.
For any POVM $M=\{M_x\}_x\in\M(\hil)$, we have
\begin{align}
F_{\a}^{M}(\rho\|\sigma)\le F_{\a}(\rho\|\sigma), \ds\ds\ds\a\ge 1.
\label{eq:13}
\end{align}
\end{lemma}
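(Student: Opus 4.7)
The plan is to reduce the monotonicity under a general POVM to the classical case by first pinching $\rho$ to commute with $\sigma$ (using Lemma~\ref{mono:pinching}) and then invoking the classical monotonicity Lemma~\ref{lemma:classical monotonicity} on the measurement channel, whose output lies automatically in a commutative algebra.

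Concretely, I would first apply Lemma~\ref{mono:pinching} to $(\rho,\sigma)$ to obtain $F_\alpha(\hat\rho\|\sigma)\le F_\alpha(\rho\|\sigma)$, where $\hat\rho:=\E_\sigma(\rho)$ commutes with $\sigma$ by construction. Second, viewing $M$ as a positive trace-preserving map
\[
\F_M(X):=\sum_x\Tr(XM_x)\,|x\rangle\langle x|
\]
into the algebra of diagonal operators on $\bC^{|\X|}$, Lemma~\ref{lemma:classical monotonicity} applies to the commuting pair $(\hat\rho,\sigma)$ and gives $F_\alpha(\F_M(\hat\rho)\|\F_M(\sigma))\le F_\alpha(\hat\rho\|\sigma)$. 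Chaining these two inequalities yields $F_\alpha^M(\hat\rho\|\sigma)\le F_\alpha(\rho\|\sigma)$.

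The main obstacle is that this chain controls $F_\alpha^M(\hat\rho\|\sigma)$ rather than the desired $F_\alpha^M(\rho\|\sigma)$: when $M$ does not commute with $\sigma$ one has $\Tr\rho M_x\ne\Tr\hat\rho M_x$, so these two quantities are generically different (a small $2\times 2$ example with $\sigma=\mathrm{diag}(3/4,1/4)$, $\rho=|+\rangle\langle+|$, $M=\{|\pm\rangle\langle\pm|\}$ already shows that $F_\alpha^M(\rho\|\sigma)>F_\alpha^M(\hat\rho\|\sigma)$ can occur). For the special case of the ``pinched'' POVM $\hat M:=\{\E_\sigma(M_x)\}$, which commutes with $\sigma$, the identity $\Tr\rho\,\hat M_x=\Tr\hat\rho M_x$ immediately upgrades the chain to $F_\alpha^{\hat M}(\rho\|\sigma)\le F_\alpha(\rho\|\sigma)$, so at least this restricted form of the inequality is essentially a direct corollary of the two lemmas.

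To handle an arbitrary $M$, I would complement the above argument by an operator H\"older estimate applied to the decomposition $\Tr\rho M_x=\Tr R\,(\sigma^{-s}M_x\sigma^{-s})$ with $R:=\sigma^s\rho\sigma^s$ and $s:=(1-\alpha)/(2\alpha)$, combined with the Araki-Lieb-Thirring inequality (already invoked in the introduction), which bounds $\Tr(\sigma^{-s}M_x\sigma^{-s})^{\alpha/(\alpha-1)}$ by $\Tr\sigma M_x$. The delicate step, and the main obstacle of the whole proof, is to reassemble the per-$x$ bounds into a tight global inequality whose right-hand side is exactly $\Tr R^\alpha=e^{F_\alpha(\rho\|\sigma)}$ rather than a weaker $|\X|$-dependent surrogate; this is where the main technical work resides, and where the pinching lemma enters in a less transparent way than in the commuting-POVM case.
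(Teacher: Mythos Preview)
Your diagnosis of the obstacle is exactly right: pinching first and then measuring yields control of $F_\alpha^M(\hat\rho\|\sigma)$, not of $F_\alpha^M(\rho\|\sigma)$, and for a POVM that does not commute with $\sigma$ these differ. However, your proposed remedy via an operator H\"older/ALT argument is left as a sketch, and you yourself flag the ``reassembly'' step as the unresolved obstacle; as stated it is not a proof.

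The paper closes the gap by a completely different mechanism that you are missing: a tensor-power trick based on the pinching inequality (Lemma~\ref{lemma:pinching}), not merely on Lemma~\ref{mono:pinching}. Work on $\hil^{\otimes n}$ with $M_n:=M^{\otimes n}$. The operator inequality $\rho_n\le v_n\,\widehat\rho_n$ gives, for each outcome, $\Tr\rho_n M_n(x)\le v_n\,\Tr\widehat\rho_n M_n(x)$, and hence
\[
\sum_x (\Tr\rho_n M_n(x))^{\alpha}(\Tr\sigma_n M_n(x))^{1-\alpha}
\le v_n^{\alpha}\sum_x (\Tr\widehat\rho_n M_n(x))^{\alpha}(\Tr\sigma_n M_n(x))^{1-\alpha}.
\]
Now $\widehat\rho_n$ and $\sigma_n$ commute, so classical monotonicity (Lemma~\ref{lemma:classical monotonicity}) bounds the right-hand side by $v_n^{\alpha}\Tr\widehat\rho_n^{\,\alpha}\sigma_n^{1-\alpha}$, and Lemma~\ref{mono:pinching} bounds this in turn by $v_n^{\alpha}\bigl(\Tr(\sigma^{\frac{1-\alpha}{2\alpha}}\rho\,\sigma^{\frac{1-\alpha}{2\alpha}})^{\alpha}\bigr)^{n}$. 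Since the left-hand side factorizes as $\bigl(e^{F_\alpha^M(\rho\|\sigma)}\bigr)^n$, take logarithms, divide by $n$, and let $n\to\infty$; the term $\frac{\alpha}{n}\log v_n$ vanishes because $v_n$ grows only polynomially. This is precisely the step that converts your ``loose'' single-copy bound into a tight one: the slack is a factor $v_n^{\alpha}$, which is subexponential and hence disappears after taking $n$th roots. No H\"older or ALT estimate is needed.
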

\begin{proof}
For any POVM $M_n=\{M_{n}(x)\}_{x}$ on $\hil^{\otimes n}$ and any $\alpha\ge 1$,
\begin{align}
\sum_{x}\bz\Tr\rho_nM_n(x)\jz^{\a}
\bz\Tr\sigma_nM_n(x)\jz^{1-\a}
&\le
v_n^{\a}\sum_{x}\bz\Tr\widehat\rho_nM_n(x)\jz^{\a}
\bz\Tr\sigma_nM_n(x)\jz^{1-\a}\label{eq:14}\\
&\le
v_n^{\a}\Tr\widehat\rho_n^{\a}\sigma_n^{1-\a}
\label{eq:14_1}\\
&\le v_n^{\a}
\Tr\left(\sigma_n^{\frac{1-\a}{2\a}}\rho_n\sigma_n^{\frac{1-\a}{2\a}}\right)^{\a},
\label{eq:16}
\end{align}
where the first inequality is due to \eqref{eq:11}, the second inequality follows from
Lemma \ref{lemma:classical monotonicity}, and the third one from Lemma~\ref{mono:pinching}.

Now let $M=\{M_x\}_{x\in\X}\in\M(\hil)$ be a POVM on a single copy, and
$M_n$ be its $n$th i.i.d.~extension, i.e.,
\begin{align}
M_n(\vecc{x}):=M_{x_1}\otimes\ldots\otimes M_{x_n},\ds\ds\ds
\vecc{x}\in\X^n.
\label{eq:18}
\end{align}
Then we obtain
\begin{align}
\left(
\sum_{x}\bz\Tr\rho M_x\jz^{\a}\bz\Tr\sigma M_x\jz^{1-\a}
\right)^n
&=\sum_{\vecc{x}}\bz\Tr\rho_nM_n(\vecc{x})\jz^{\a}
\bz\Tr\sigma_nM_n(\vecc{x})\jz^{1-\a}
\nn
&\le v_n^{\a}
\Tr\left(\sigma_n^{\frac{1-\a}{2\a}}\rho_n\sigma_n^{\frac{1-\a}{2\a}}\right)^{\a}
\nn
&= v_n^{\a}\bz
\Tr\left(\sigma^{\frac{1-\a}{2\a}}\rho\sigma^{\frac{1-\a}{2\a}}\right)^{\a}
\jz^n.
\label{eq:19}
\end{align}
Taking the logarithm and dividing by $n$ yields
\begin{align}
F_{\a}^{M}(\rho\|\sigma)\le
F_{\a}(\rho\|\sigma)+\frac{\a}{n}\log v_n,
\label{eq:20}
\end{align}
which proves the lemma by taking the limit $n\to\infty$.
\end{proof}

\begin{remark}\label{Remark-Hayashi-1}
The technique used in the proof of the above lemma 
is essentially due to \cite{H:text} (see around page 88),
where the inequalities \eqref{eq:14} and \eqref{eq:14_1}
have been shown.
\end{remark}

\begin{remark}
Note that the assumption $\supp\rho\subseteq\supp\sigma$ was necessary to apply classical monotonicity in \eqref{eq:14_1}.
In fact, the statement of Lemma \ref{mono:measurement} need not hold without this assumption. Indeed, in the extreme case where
$\rho$ and $\sigma$ have orthogonal supports, we have $F_{\alpha}(\rho\|\sigma)=-\infty$, and the trivial POVM $M=\{I\}$ yields
$F_{\alpha}^M(\rho\|\sigma)=\log (\Tr\rho)^{\alpha}(\Tr\sigma)^{1-\alpha}$, which is a finite number unless $\rho$ or $\sigma$ is equal to $0$.
\end{remark}

The following lemma is standard:
\begin{lemma}
\label{lem:norm}
Let $A$ and $B$ be Hermitian operators on $\hil$ with their spectrum in some interval $I$, and let $f:\,I\to\bR$ be a monotone increasing function. If $A\le B$ then $\Tr f(A)\le\Tr f(B)$. In particular,
\begin{equation*}
0\le A\le B\ds\imp\ds \Tr A^{\alpha}\le \Tr B^{\alpha}\ds\ds\ds \alpha>0.
\end{equation*}
\end{lemma}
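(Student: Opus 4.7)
The plan is to reduce the trace inequality to the pointwise comparison of ordered eigenvalues, which is Weyl's monotonicity principle. Write $A$ and $B$ with their eigenvalues arranged in decreasing order as $\lambda_1(A)\ge\ldots\ge\lambda_d(A)$ and $\lambda_1(B)\ge\ldots\ge\lambda_d(B)$, where $d=\dim\hil$. I would first establish that $A\le B$ implies $\lambda_k(A)\le\lambda_k(B)$ for every $k=1,\ldots,d$.

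The cleanest route is the Courant--Fischer min-max characterization
\begin{equation*}
\lambda_k(A)=\max_{\substack{V\subseteq\hil\\\dim V=k}}\min_{\substack{v\in V\\\norm{v}=1}}\inner{v}{Av}.
\end{equation*}
Since $A\le B$ means $\inner{v}{Av}\le\inner{v}{Bv}$ for every $v$, the min over each subspace can only be smaller for $A$ than for $B$, and then the max over $k$-dimensional subspaces preserves the inequality, giving $\lambda_k(A)\le\lambda_k(B)$. Because all eigenvalues lie in $I$, we may apply the monotone function $f$ term by term to get $f(\lambda_k(A))\le f(\lambda_k(B))$, and summing yields
\begin{equation*}
\Tr f(A)=\sum_{k=1}^{d}f(\lambda_k(A))\le\sum_{k=1}^{d}f(\lambda_k(B))=\Tr f(B).
\end{equation*}

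The ``in particular'' assertion is immediate: for $0\le A\le B$, the spectra lie in $[0,+\infty)$, and for $\alpha>0$ the function $f(t)=t^{\alpha}$ is monotone increasing on this interval, so the first part applies.

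I do not expect any real obstacle here, since Weyl's monotonicity is a classical one-line consequence of the min-max principle and the rest is just bookkeeping on eigenvalues. The one thing to emphasize (to forestall confusion) is that this argument is \emph{not} claiming the operator inequality $f(A)\le f(B)$, which would require operator monotonicity of $f$ and already fails for $f(t)=t^2$ on $[0,+\infty)$; it only uses that the ordered spectra are comparable, which is weaker than the operator ordering of $f(A)$ and $f(B)$ but strong enough for a trace comparison.
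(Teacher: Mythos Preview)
Your proof is correct and follows essentially the same route as the paper: the paper invokes the Courant--Fischer--Weyl minimax principle (citing Bhatia, Corollary III.1.2) to conclude $\lambda_i^{\downarrow}(A)\le\lambda_i^{\downarrow}(B)$ and then says the assertion follows. You spell out both the min-max argument and the eigenvalue-summation step explicitly, but the underlying idea is identical.
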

\begin{proof}
Let $\{\lambda^{\downarrow}_i(A)\}_{i=1}^{\dim\hil}$ denote the sequence of decreasingly ordered eigenvalues of $A$. By the Courant-Fischer-Weyl minimax principle \cite[Corollary III.1.2]{Bhatia}, $\lambda^{\downarrow}_i(A)\le \lambda^{\downarrow}_i(B),\,1\le i\le \dim\hil$, from which the assertion follows.
\end{proof}

\begin{theorem}[asymptotic attainability]
\label{thm:attainability}
Let $\rho,\sigma\in\L(\hil)_+$ be such that $\supp\rho\subseteq\supp\sigma$.
For any $\a\ge 1$, we have
\begin{align}
F_{\a}(\rho\|\sigma)
&=\lim_{n\to\infty}\frac{1}{n}F_{\a}(\widehat\rho_n\|\sigma_n)
\nn
&=\lim_{n\to\infty}\frac{1}{n}\max_{M_n\in\M(\hil^{\otimes n})}F_{\a}^{M_n}(\rho_n\|\sigma_n),
\label{eq:22}
\end{align}
where the maximization in the second line is over all POVMs on $\H^{\otimes n}$.
\end{theorem}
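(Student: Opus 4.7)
Plan. The core idea is to establish the finite-$n$ sandwich
$$F_\a(\widehat\rho_n\|\sigma_n) \;\le\; \max_{M_n\in\M(\hil^{\otimes n})} F_\a^{M_n}(\rho_n\|\sigma_n) \;\le\; F_\a(\rho_n\|\sigma_n) \;\le\; F_\a(\widehat\rho_n\|\sigma_n) + \a\log v_n,$$
together with the trivial multiplicative identity $F_\a(\rho_n\|\sigma_n)=nF_\a(\rho\|\sigma)$ (which follows from $\sigma_n^{(1-\a)/(2\a)}\rho_n\sigma_n^{(1-\a)/(2\a)}=(\sigma^{(1-\a)/(2\a)}\rho\sigma^{(1-\a)/(2\a)})^{\otimes n}$ and multiplicativity of trace). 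Dividing by $n$ and using $\frac{1}{n}\log v_n\to 0$ then delivers both equalities simultaneously.

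The middle upper bound is exactly Lemma~\ref{mono:measurement}. For the leftmost inequality I would exhibit a concrete POVM attaining $F_\a(\widehat\rho_n\|\sigma_n)$: since $\widehat\rho_n\in\{\sigma_n\}'$, the two operators admit a common eigenbasis, and I take the associated rank-one PVM $\{N_j\}$. Each $N_j$ lies in $\{\sigma_n\}'$, so the characteristic property \eqref{eq:9} of pinching gives $\Tr\rho_n N_j=\Tr\widehat\rho_n N_j$. Hence $F_\a^N(\rho_n\|\sigma_n)=\log\sum_j p_j^\a q_j^{1-\a}$, where $p_j,q_j$ are the joint eigenvalues of $\widehat\rho_n,\sigma_n$; by commutativity this equals $\log\Tr\widehat\rho_n^\a\sigma_n^{1-\a}$, which in turn is $F_\a(\widehat\rho_n\|\sigma_n)$ because $[\widehat\rho_n,\sigma_n]=0$ makes the sandwich $\sigma_n^{(1-\a)/(2\a)}\widehat\rho_n\sigma_n^{(1-\a)/(2\a)}$ collapse to $\widehat\rho_n\sigma_n^{(1-\a)/\a}$ and the $\a$-th power factors.

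The remaining rightmost inequality is the only genuine computation, and this is where the pinching inequality enters. From $\rho_n\le v_n\widehat\rho_n$ I would conjugate by the positive Hermitian operator $\sigma_n^{(1-\a)/(2\a)}$ (defined on $\supp\sigma_n$) to obtain
$$\sigma_n^{\tfrac{1-\a}{2\a}}\rho_n\sigma_n^{\tfrac{1-\a}{2\a}} \;\le\; v_n\,\sigma_n^{\tfrac{1-\a}{2\a}}\widehat\rho_n\sigma_n^{\tfrac{1-\a}{2\a}},$$
apply Lemma~\ref{lem:norm} with $f(t)=t^\a$ to take traces of $\a$-th powers, and use $[\widehat\rho_n,\sigma_n]=0$ once more to simplify the right-hand side to $v_n^\a\Tr\widehat\rho_n^\a\sigma_n^{1-\a}$. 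Taking logs closes the chain as $F_\a(\rho_n\|\sigma_n)\le \a\log v_n+F_\a(\widehat\rho_n\|\sigma_n)$.

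I do not anticipate any substantive obstacle; the only point that requires a moment's care is the support condition, but $\supp\rho\subseteq\supp\sigma$ guarantees $\supp\rho_n\subseteq\supp\sigma_n$, and $\widehat\rho_n=\E_{\sigma_n}(\rho_n)$ inherits that support, so $\sigma_n^{(1-\a)/(2\a)}$ acts meaningfully wherever it is used and the commutativity identifications $(\sigma_n^{(1-\a)/(2\a)}\widehat\rho_n\sigma_n^{(1-\a)/(2\a)})^\a=\widehat\rho_n^\a\sigma_n^{1-\a}$ go through on that support. Everything else reduces to direct invocation of Lemmas~\ref{mono:pinching}, \ref{mono:measurement} and~\ref{lem:norm}.
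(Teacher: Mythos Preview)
Your proposal is correct and follows essentially the same route as the paper: the same sandwich
\[
F_{\a}(\widehat\rho_n\|\sigma_n)=F_{\a}^{E_n}(\rho_n\|\sigma_n)\le\max_{M_n}F_{\a}^{M_n}(\rho_n\|\sigma_n)\le F_{\a}(\rho_n\|\sigma_n)\le F_{\a}(\widehat\rho_n\|\sigma_n)+\a\log v_n,
\]
built from the joint eigenbasis PVM (using \eqref{eq:9}), Lemma~\ref{mono:measurement}, and the pinching inequality plus Lemma~\ref{lem:norm}. The only minor slip is the closing reference to Lemma~\ref{mono:pinching}: what you actually invoke in the body is the pinching inequality (Lemma~\ref{lemma:pinching}), not monotonicity under pinching.
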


\begin{proof}
Since $\sigma_n$ and $\widehat\rho_n$ commute, they have a common eigenbasis $\{e_n(i)\}_{i=1}^{d_n},\,d_n=(\dim\hil)^n$. Let
$E_n=\{E_n(i)=\pr{e_n(i)}\}_{i=1}^{d_n}$ be the corresponding projection-valued measure. Then
\begin{align}
\frac{1}{n}F_{\a}(\widehat\rho_n\|\sigma_n)
=
\frac{1}{n}F_{\a}^{E_n}(\rho_n\|\sigma_n)
\le
\frac{1}{n}\max_{M_n}F_{\a}^{M_n}(\rho_n\|\sigma_n)
\le
\frac{1}{n}F_{\a}(\rho_n\|\sigma_n)
=
F_{\a}(\rho\|\sigma),
\label{eq:23}
\end{align}
where the last inequality is due to Lemma \ref{mono:measurement}.
By Lemma \ref{lemma:pinching},
\begin{align}
0\le
\sigma_n^{\frac{1-\a}{2\a}}\rho_n\sigma_n^{\frac{1-\a}{2\a}}
\le
v_n
\sigma_n^{\frac{1-\a}{2\a}}\widehat\rho_n\sigma_n^{\frac{1-\a}{2\a}}
=
v_n\sum_{i=1}^{d_n}\bz\Tr\rho_n E_n(i)\jz\bz\Tr\sigma_n E_n(i)\jz^{\frac{1-\a}{\a}}E_n(i),
\label{eq:25}
\end{align}
and Lemma~\ref{lem:norm} yields
\begin{align}
\Tr\left(\sigma_n^{\frac{1-\a}{2\a}}\rho_n\sigma_n^{\frac{1-\a}{2\a}}\right)^{\a}
\le
v_n^{\a}\Tr
\left(\sigma_n^{\frac{1-\a}{2\a}}\widehat\rho_n\sigma_n^{\frac{1-\a}{2\a}}\right)^{\a}
=
v_n^{\a}\sum_{i=1}^{d_n}\bz\Tr\rho_n E_n(i)\jz^{\alpha}\bz\Tr\sigma_n E_n(i)\jz^{1-\a}.
\label{eq:26}
\end{align}
Taking the logarithm, we obtain
\begin{align}
F_{\a}(\rho\|\sigma)\le
\frac{1}{n}F_{\a}(\widehat\rho_n\|\sigma_n)+\frac{\a}{n}\log v_n
=
\frac{1}{n}F_{\a}^{E_n}(\rho_n\|\sigma_n)+\frac{\a}{n}\log v_n
\le
\frac{1}{n}\max_{M_n}F_{\a}^{M_n}(\rho_n\|\sigma_n)+\frac{\a}{n}\log v_n.
\label{eq:27}
\end{align}
Combining this with \eqref{eq:23}, and taking the limit $n\to+\infty$, the assertion follows.
\end{proof}

Theorem \ref{thm:attainability} implies the asymptotic attainability for the R\'enyi relative entropies:
\begin{corollary}\label{cor:Renyi attainability}
For any $\rho,\sigma\in\L(\hil)_+$ and $\a> 1$, we have
\begin{align}
D_{\a}\nw(\rho\|\sigma)
&=\lim_{n\to\infty}\frac{1}{n}D_{\a}\nw(\widehat\rho_n\|\sigma_n)
\nn
&=\lim_{n\to\infty}\frac{1}{n}\max_{M_n\in\M(\hil^{\otimes n})}D_{\a}\nw\bz\{\Tr\rho_n M_n(x)\}_{x\in\X}\|\{\Tr\sigma_n M_n(x)\}_{x\in\X}\jz,\label{Renyi attainability}
\end{align}
where the maximization in the second line is over all POVMs on $\H^{\otimes n}$.
\end{corollary}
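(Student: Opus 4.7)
The plan is to reduce the corollary to Theorem~\ref{thm:attainability} using the elementary identity
\[
D_{\alpha}\nw(\rho\|\sigma)=\tfrac{1}{\alpha-1}F_{\alpha}(\rho\|\sigma)-\tfrac{1}{\alpha-1}\log\Tr\rho
\]
valid whenever $\supp\rho\subseteq\supp\sigma$, together with the analogous identity for the classical $D_{\alpha}\nw$ of the post-measurement distributions. Splitting into the cases $\supp\rho\subseteq\supp\sigma$ and $\supp\rho\not\subseteq\supp\sigma$ will be the organisational backbone of the argument.

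In the first case, observe that $\Tr\widehat\rho_n=\Tr\rho_n=(\Tr\rho)^n$ because the pinching $\E_{\sigma_n}$ is trace-preserving, and likewise $\sum_x\Tr\rho_n M_n(x)=(\Tr\rho)^n$ for every POVM $M_n$ since $\sum_x M_n(x)=I$. Substituting these into the identity above gives
\[
\tfrac{1}{n}D_{\alpha}\nw(\widehat\rho_n\|\sigma_n)=\tfrac{1}{\alpha-1}\Big[\tfrac{1}{n}F_{\alpha}(\widehat\rho_n\|\sigma_n)-\log\Tr\rho\Big]
\]
and an analogous identity with $F_{\alpha}^{M_n}(\rho_n\|\sigma_n)$ on the right. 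Since $\alpha>1$, maximising over $M_n$ commutes with the prefactor $\tfrac{1}{\alpha-1}>0$, so the maximum over POVMs corresponds exactly to the maximum over POVMs in Theorem~\ref{thm:attainability}. Passing to the limit $n\to\infty$ and applying Theorem~\ref{thm:attainability} to both occurrences of $F_{\alpha}$ then yields both equalities of~\eqref{Renyi attainability} at once, with common limit $\tfrac{1}{\alpha-1}F_{\alpha}(\rho\|\sigma)-\tfrac{1}{\alpha-1}\log\Tr\rho=D_{\alpha}\nw(\rho\|\sigma)$.

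In the second case the left-hand side equals $+\infty$, and I need to check that the right-hand side is $+\infty$ for every $n$. For the pinching expression this follows because the projection $P_0$ onto $\ker\sigma$ is one of the spectral projections of $\sigma$, so $\langle v,\E_\sigma(\rho)v\rangle=\langle v,\rho v\rangle$ for $v\in\ker\sigma$; hence some tensor power of $P_0$ lies in $\ker\sigma_n$ while carrying positive weight of $\widehat\rho_n$, giving $\supp\widehat\rho_n\not\subseteq\supp\sigma_n$ and thus $D_{\alpha}\nw(\widehat\rho_n\|\sigma_n)=+\infty$. For the measurement expression, take the PVM $E_n$ from the proof of Theorem~\ref{thm:attainability}: it contains a projector $E_n(i)\le I-P_{\sigma_n}$ with $\Tr\sigma_n E_n(i)=0$ but $\Tr\rho_n E_n(i)>0$, so the classical distributions fail the support condition and $D_{\alpha}\nw(\{\Tr\rho_n M_n(x)\}\|\{\Tr\sigma_n M_n(x)\})=+\infty$ for this choice of $M_n=E_n$.

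The only non-routine point is verifying that pinching by $\sigma_n$ does not accidentally move the support into $\supp\sigma_n$; the computation with $P_0$ above handles that in one line. Everything else is bookkeeping around the separation $D_{\alpha}\nw=\tfrac{1}{\alpha-1}(F_{\alpha}-\log\Tr\rho)$ and an invocation of Theorem~\ref{thm:attainability}.
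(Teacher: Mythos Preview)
Your proposal is correct and follows essentially the same route as the paper: split into the two support cases, reduce the case $\supp\rho\subseteq\supp\sigma$ to Theorem~\ref{thm:attainability} via the defining relation $D_{\alpha}\nw=\tfrac{1}{\alpha-1}(F_{\alpha}-\log\Tr\rho)$, and argue that every term equals $+\infty$ when $\supp\rho\nsubseteq\supp\sigma$. You spell out more justification in the second case (the paper simply asserts $\supp\widehat\rho_n\nsubseteq\supp\sigma_n$ and that the measurement maximum is $+\infty$ without elaborating), but the argument is the same.
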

\begin{proof}
The case where $\supp\rho\subseteq\supp\sigma$ is immediate from Theorem \ref{thm:attainability}. On the other hand, if
$\supp\rho\nsubseteq\supp\sigma$ then
also $\supp\widehat\rho_n\nsubseteq\supp\sigma_n$, and hence, by the definition \eqref{new Renyi def},
$D_{\a}\nw(\rho\|\sigma)=D_{\a}\nw(\widehat\rho_n\|\sigma_n)=
\max_{M_n\in\M(\hil^{\otimes n})}D_{\a}\nw\bz\{\Tr\rho_n M_n(x)\}_{x\in\X}\|\{\Tr\sigma_n M_n(x)\}_{x\in\X}\jz=+\infty$
for every $n\in\bN$, making the assertion trivial.
\end{proof}

\begin{remark}
The same statement for the relative entropy has been shown in \cite{HP}.
\end{remark}

\begin{remark}
The maximum over all measurements in \eqref{Renyi attainability} can be replaced by a concrete binary POVM given by a Neyman-Pearson test; see Corollary \ref{cor:achievability by NP}.
\end{remark}

Theorem \ref{thm:attainability} has a number of important further corollaries:

\begin{corollary}[convexity]\label{cor:convexity}
For any fixed $\rho,\sigma\in\L(\hil)_+$ such that $\supp\rho\subseteq\supp\sigma$,
$F_{\a}(\rho\|\sigma)$ is a convex function of $\alpha$ for $\a\ge 1$.
\end{corollary}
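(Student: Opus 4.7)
The plan is to exploit Theorem~\ref{thm:attainability} to reduce the problem to a classical statement about measurement outcomes, where convexity in $\alpha$ reduces to the well-known convexity of the cumulant generating function of a real-valued random variable.

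First I would fix a POVM $M_n\in\M(\hil^{\otimes n})$ and consider the single-variable function
\begin{equation*}
\alpha\mapsto F_{\alpha}^{M_n}(\rho_n\|\sigma_n)=\log\sum_x p_n(x)^{\alpha}q_n(x)^{1-\alpha},
\end{equation*}
where $p_n(x):=\Tr\rho_n M_n(x)$ and $q_n(x):=\Tr\sigma_n M_n(x)$. Because $\supp\rho\subseteq\supp\sigma$, whenever $q_n(x)=0$ we also have $p_n(x)=0$, so the sum effectively runs over indices with $q_n(x)>0$. Writing it as
\begin{equation*}
F_{\alpha}^{M_n}(\rho_n\|\sigma_n)=\log\sum_{x:\,q_n(x)>0}q_n(x)\exp\bz\alpha\log\tfrac{p_n(x)}{q_n(x)}\jz,
\end{equation*}
we recognise it as the logarithmic moment generating function of the random variable $\log(p_n/q_n)$ under the distribution $q_n$, and Cauchy--Schwarz applied to the weighted measure $q_n(x)(p_n(x)/q_n(x))^{\alpha}$ gives convexity in $\alpha\in\bR$ in the standard way.

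Next I would take the pointwise supremum over $M_n$: since each function $\alpha\mapsto F_{\alpha}^{M_n}(\rho_n\|\sigma_n)$ is convex, the function
\begin{equation*}
\alpha\mapsto \frac{1}{n}\max_{M_n\in\M(\hil^{\otimes n})}F_{\alpha}^{M_n}(\rho_n\|\sigma_n)
\end{equation*}
is again convex as a supremum of a family of convex functions, scaled by $1/n$. Finally, by Theorem~\ref{thm:attainability}, for every $\alpha\ge 1$ this quantity converges to $F_{\alpha}(\rho\|\sigma)$ as $n\to\infty$, and a pointwise limit of convex functions is convex. Taking $\alpha,\beta\ge 1$ and $\lambda\in[0,1]$, applying the convexity inequality at level $n$ and then passing to the limit yields the claimed convexity.

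There is essentially no obstacle: the only point that requires care is the handling of indices with $q_n(x)=0$, which is dealt with by the support hypothesis $\supp\rho\subseteq\supp\sigma$; everything else is standard convex-analysis bookkeeping once Theorem~\ref{thm:attainability} is in hand.
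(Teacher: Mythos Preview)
Your proof is correct and follows essentially the same approach as the paper: reduce via Theorem~\ref{thm:attainability} to a classical (commuting) situation where convexity of $\alpha\mapsto\log\sum_x p^{\alpha}q^{1-\alpha}$ is elementary, and then use that a pointwise limit of convex functions is convex. The only cosmetic difference is that the paper invokes the first identity in Theorem~\ref{thm:attainability} (the pinched sequence $\frac{1}{n}F_{\alpha}(\widehat\rho_n\|\sigma_n)$), whereas you use the second identity and add the trivial step that a supremum of convex functions is convex.
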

\begin{proof}
It is easy to see (by computing its second derivative) that $F_{\a}(\widehat\rho_n\|\sigma_n)$ is a convex function of $\alpha$. Thus by Theorem \ref{thm:attainability}, $F_{\a}(\rho\|\sigma)$ is a pointwise limit of convex functions, and hence it is convex.
\end{proof}

\begin{corollary}
For any fixed $\rho,\sigma\in\L(\hil)_+$,
the function
$\alpha\mapsto\rsrn{\rho}{\sigma}{\alpha}$ is monotone increasing for $\alpha>1$.
\end{corollary}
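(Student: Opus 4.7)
The plan is to exhibit $\rsrn{\rho}{\sigma}{\alpha}$ as the slope of a chord (based at $\alpha=1$) of a convex function, and then invoke the elementary monotonicity of such chord-slopes for convex functions. All the real work has already been done in Corollary~\ref{cor:convexity}.

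First I would dispose of the trivial case: if $\supp\rho\nsubseteq\supp\sigma$, then by the definition \eqref{new Renyi def} one has $\rsrn{\rho}{\sigma}{\alpha}=+\infty$ for every $\alpha>1$, and monotonicity holds trivially. From now on assume $\supp\rho\subseteq\supp\sigma$.

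Next I would identify the boundary value $F_1(\rho\|\sigma)=\log\Tr\rho$. At $\alpha=1$ the exponent $\tfrac{1-\alpha}{2\alpha}$ vanishes, so $\sigma^{(1-\alpha)/(2\alpha)}|_{\alpha=1}=\sigma^0$ is the projection onto $\supp\sigma$ by the convention fixed in the preliminaries; by the support assumption this projection fixes $\rho$, so $(\sigma^0\rho\sigma^0)^1=\rho$ and the trace gives $\Tr\rho$. Now set
\begin{equation*}
g(\alpha):=F_\alpha(\rho\|\sigma)-\log\Tr\rho,
\end{equation*}
so that $g(1)=0$ and $\rsrn{\rho}{\sigma}{\alpha}=g(\alpha)/(\alpha-1)$ for $\alpha>1$. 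By Corollary~\ref{cor:convexity}, $g$ is convex on $[1,+\infty)$.

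Finally, I would invoke the standard chord-slope lemma: for a convex function $g$ on $[1,+\infty)$ with $g(1)=0$, the map $\alpha\mapsto g(\alpha)/(\alpha-1)$ is non-decreasing on $(1,+\infty)$. This is immediate by writing, for $1<\alpha<\beta$, the point $\alpha$ as the convex combination $\alpha=\tfrac{\beta-\alpha}{\beta-1}\cdot 1+\tfrac{\alpha-1}{\beta-1}\cdot\beta$, applying convexity, and using $g(1)=0$. There is no real obstacle here: the convexity provided by Corollary~\ref{cor:convexity} is the substantive step, and the only subtlety is the boundary identification $F_1=\log\Tr\rho$, which relies on the support-projection convention already fixed in Section~2.
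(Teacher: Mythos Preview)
Your proof is correct and follows essentially the same approach as the paper: dispose of the unsupported case, identify $F_1(\rho\|\sigma)=\log\Tr\rho$, rewrite $\rsrn{\rho}{\sigma}{\alpha}=(F_\alpha-F_1)/(\alpha-1)$, and invoke the convexity of $F_\alpha$ from Corollary~\ref{cor:convexity}. The only difference is that you spell out the chord-slope argument explicitly, whereas the paper simply says the assertion follows from convexity.
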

\begin{proof}
We can assume that $\supp\rho\subseteq\supp\sigma$, since otherwise $\rsrn{\rho}{\sigma}{\alpha}=+\infty$ for every $\alpha>1$, and the assertion holds trivially.
Note that $\supp\rho\subseteq\supp\sigma$ implies that $F_{1}(\rho\|\sigma)=\log\Tr\rho$, and hence $\rsrn{\rho}{\sigma}{\alpha}=\frac{F_{\a}(\rho\|\sigma)-F_{1}(\rho\|\sigma)}{\alpha-1}$. The assertion then follows from Corollary \ref{cor:convexity}.
\end{proof}

\begin{corollary}[monotonicity]\label{cor:F monotonicity}
Let $\rho,\sigma\in\L(\hil)_+$ be such that $\supp\rho\subseteq\supp\sigma$, and let $\F:\,\L(\hil)\to\L(\kil)$ be a CPTP map. Then
\begin{equation*}
F_{\a}(\F(\rho)\|\F(\sigma))\le F_{\a}(\rho\|\sigma),\ds\ds\ds \alpha>1.
\end{equation*}
\end{corollary}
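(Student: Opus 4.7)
The plan is to derive the monotonicity from Theorem \ref{thm:attainability} via a measurement-duality argument that is completely standard once asymptotic attainability is in hand.

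First, I would verify that the support condition is preserved: since $\supp\rho\subseteq\supp\sigma$ gives $\rho\le\lambda\sigma$ for some $\lambda>0$, applying $\F$ (which is positive) yields $\F(\rho)\le\lambda\F(\sigma)$, hence $\supp\F(\rho)\subseteq\supp\F(\sigma)$. So both sides of the claimed inequality are finite and Theorem \ref{thm:attainability} applies to the pair $(\F(\rho),\F(\sigma))$ as well as to $(\rho,\sigma)$.

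Next, I would exploit the duality between measurements on the output and measurements on the input. Given any POVM $N_n=\{N_n(x)\}_x$ on $\kil^{\otimes n}$, define $M_n(x):=(\F^*)^{\otimes n}(N_n(x))$. Since $\F$ is CPTP, $\F^*$ is unital and completely positive, so $M_n$ is a POVM on $\hil^{\otimes n}$. Moreover, for every $x$,
\begin{align*}
\Tr\F(\rho)^{\otimes n}N_n(x)=\Tr\rho_n M_n(x),\ds\ds\ds\Tr\F(\sigma)^{\otimes n}N_n(x)=\Tr\sigma_n M_n(x),
\end{align*}
so by the very definition \eqref{eq:6},
\begin{equation*}
F_{\a}^{N_n}\bz\F(\rho)^{\otimes n}\|\F(\sigma)^{\otimes n}\jz=F_{\a}^{M_n}(\rho_n\|\sigma_n).
\end{equation*}
Taking the maximum over $N_n$ on the left is bounded above by the maximum over all POVMs on $\hil^{\otimes n}$ on the right, since the $M_n$ range over (a subset of) POVMs on $\hil^{\otimes n}$.

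Finally, dividing by $n$ and letting $n\to\infty$, Theorem \ref{thm:attainability} gives
\begin{align*}
F_{\a}(\F(\rho)\|\F(\sigma))
&=\lim_{n\to\infty}\frac{1}{n}\max_{N_n\in\M(\kil^{\otimes n})}F_{\a}^{N_n}\bz\F(\rho)^{\otimes n}\|\F(\sigma)^{\otimes n}\jz\\
&\le\lim_{n\to\infty}\frac{1}{n}\max_{M_n\in\M(\hil^{\otimes n})}F_{\a}^{M_n}(\rho_n\|\sigma_n)=F_{\a}(\rho\|\sigma).
\end{align*}
There is no real obstacle here: the only subtlety is verifying that $(\F^*)^{\otimes n}$ does send POVMs to POVMs (unitality and positivity of $\F^*$) and preserving the support condition under $\F$; both are immediate from $\F$ being CPTP and positive, respectively. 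Everything nontrivial has already been absorbed into Theorem \ref{thm:attainability}.
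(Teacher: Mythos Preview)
Your proof is correct and follows essentially the same route as the paper: pull back POVMs on $\kil^{\otimes n}$ through the unital positive dual $(\F^{\otimes n})^*$ to obtain POVMs on $\hil^{\otimes n}$, compare the two suprema of post-measurement $F_\alpha$'s, and conclude via Theorem~\ref{thm:attainability}. Your explicit verification that $\supp\F(\rho)\subseteq\supp\F(\sigma)$ is a useful detail the paper leaves implicit, and note that only positivity (not complete positivity) of $\F^*$ is needed to map POVMs to POVMs---the paper in fact phrases the argument so that only positivity of $\F^{\otimes n}$ is used (cf.\ Remark~\ref{rem:EPPMON}).
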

\begin{proof}
By complete positivity, $\F_n:=\F^{\otimes n}$ is positive for every $n\in\bN$.
Let $\F_n^*:\L(\K^{\otimes n})\to\L(\H^{\otimes n})$ be the dual (adjoint) of $\F_n$, defined by
\begin{align}
\forall\omega\in\S(\H^{\otimes n}),\,\forall A\in\L(\kil^{\otimes n}),\,
\Tr\F_n(\omega)A=\Tr\omega\F_n^*(A).
\label{eq:29}
\end{align}
Then $\F_n^*$ is a unital positive map.
Thus, if $\{M(x)\}_{x\in\X}\in\M(\kil^{\otimes n})$ is a POVM on $\K^{\otimes n}$ then
$\F_n^*(M):=\{\F_n^*(M(x))\}_{x\in\X}$ is a POVM on $\H^{\otimes n}$.
Hence,
\begin{align}
\max_{M\in\M(\K^{\otimes n})}F_{\a}^{M}(\F_n(\rho_n)\|\F_n(\sigma_n))
=\max_{M\in\M(\K^{\otimes n})}F_{\a}^{\F_n^*(M)}(\rho_n\|\sigma_n)
\le\max_{M\in\M(\H^{\otimes n})}F_{\a}^{M}(\rho_n\|\sigma_n)
\label{eq:30}
\end{align}
for any $n$.
Now \eqref{eq:30} and Theorem~\ref{thm:attainability} yield the assertion.
\end{proof}

Corollary \ref{cor:F monotonicity} immediately implies the following:
\begin{corollary}\label{Renyi monotonicity}
The new R\'enyi relative entropies are monotone under CPTP maps for $\alpha>1$. That is,
if $\rho,\sigma\in\L(\hil)_+$ and $\F:\,\L(\hil)\to\L(\kil)$ is a CPTP map then
\begin{align}
D_{\a}(\F(\rho)\|\F(\sigma))\le D_{\a}(\rho\|\sigma),\ds\ds\ds \alpha>1,
\label{eq:21}
\end{align}
and the limit $\alpha\searrow 1$ yields the same monotonicity property for the relative entropy.
\end{corollary}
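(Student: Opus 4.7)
The plan is to deduce the monotonicity of $\rsrn{\rho}{\sigma}{\alpha}$ directly from the monotonicity of $F_\alpha$ already established in Corollary~\ref{cor:F monotonicity}, after a short case split according to whether $\supp\rho\subseteq\supp\sigma$.

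In the case $\supp\rho\subseteq\supp\sigma$, I would first observe that this condition is inherited under $\F$: the inclusion yields $\rho\le c\sigma$ for some $c>0$, and positivity of $\F$ then gives $\F(\rho)\le c\F(\sigma)$, so $\supp\F(\rho)\subseteq\supp\F(\sigma)$. Both quantities in~\eqref{eq:21} therefore lie in the finite branch of~\eqref{new Renyi def}. Using that $\F$ is trace preserving and the identity
\[
\rsrn{\rho}{\sigma}{\alpha}=\frac{1}{\alpha-1}\bigl(F_{\alpha}(\rho\|\sigma)-\log\Tr\rho\bigr),
\]
together with $\alpha-1>0$, the desired inequality reduces immediately to Corollary~\ref{cor:F monotonicity}.

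The complementary case $\supp\rho\not\subseteq\supp\sigma$ is trivial: by~\eqref{new Renyi def} the right-hand side of~\eqref{eq:21} equals $+\infty$, so the inequality holds regardless of the value of the left-hand side.

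Finally, for the extension to Umegaki's relative entropy I would pass to the limit $\alpha\searrow 1$, using the convergence $\rsrn{\rho}{\sigma}{\alpha}\to\sr{\rho}{\sigma}$ (and the analogue after applying $\F$) recalled in the introduction; the inequality is preserved under this pointwise limit. I do not anticipate a substantive obstacle: the real work is already packaged in Corollary~\ref{cor:F monotonicity}, and the only care required is to verify that positivity of $\F$ preserves the support inclusion so that we remain in the finite branch of~\eqref{new Renyi def} before invoking that result.
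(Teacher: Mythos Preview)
Your proposal is correct and is exactly what the paper intends: the paper states the corollary as an immediate consequence of Corollary~\ref{cor:F monotonicity} without writing out a proof, and your case split on the support condition together with the trace-preservation of $\F$ is precisely the routine unpacking needed to turn ``immediate'' into a formal argument.
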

\smallskip

For $\rho,\sigma\in\L(\hil)_+$, let
\begin{equation*}
Q_{\alpha}\nw(\rho\|\sigma):=\Tr\left(\sigma^{\frac{1-\a}{2\a}}\rho\sigma^{\frac{1-\a}{2\a}}\right)^{\a},\ds\ds\ds
\alpha\in\bR_+.
\end{equation*}
This is an analogy of the quasi-entropy \cite{Petz} (or quantum $f$-divergence \cite{HMPB}) corresponding to the function $x\mapsto x^{\alpha}$. However, $Q_{\alpha}\nw$ cannot be written in the form of an $f$-divergence \cite[Corollary 2.10]{HMPB}.
Corollary \ref{cor:F monotonicity} is equivalent to the monotonicity of $Q$:
\begin{corollary}[monotonicity of $Q$]\label{cor:Q monotonicity}
Let $\rho,\sigma\in\L(\hil)_+$ be such that $\supp\rho\subseteq\supp\sigma$, and let $\F:\,\L(\hil)\to\L(\kil)$ be a CPTP map. Then
\begin{equation*}
Q_{\a}\nw(\F(\rho)\|\F(\sigma))\le Q_{\a}\nw(\rho\|\sigma),\ds\ds\ds \alpha>1.
\end{equation*}
\end{corollary}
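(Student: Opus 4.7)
The plan is to observe that this corollary is a cosmetic restatement of Corollary \ref{cor:F monotonicity} obtained by stripping off the outer logarithm. By definition \eqref{eq:5}, $F_{\alpha}(\rho\|\sigma)=\log Q_{\alpha}\nw(\rho\|\sigma)$, and the same identity holds with $\F(\rho),\F(\sigma)$ in place of $\rho,\sigma$. Since $\log$ is a strictly monotone increasing bijection of $(0,+\infty)$ onto $\bR$, the inequality $F_{\alpha}(\F(\rho)\|\F(\sigma))\le F_{\alpha}(\rho\|\sigma)$ furnished by Corollary \ref{cor:F monotonicity} will transfer directly to $Q_{\alpha}\nw(\F(\rho)\|\F(\sigma))\le Q_{\alpha}\nw(\rho\|\sigma)$ upon exponentiating both sides.

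To make the exponentiation legitimate I would first verify that both sides of the $F_{\alpha}$ inequality are actual real numbers, i.e.\ that $Q_{\alpha}\nw(\F(\rho)\|\F(\sigma))>0$, so that $F_{\alpha}(\F(\rho)\|\F(\sigma))\neq-\infty$. This reduces to checking that $\supp\F(\rho)\subseteq\supp\F(\sigma)$, which follows from the hypothesis $\supp\rho\subseteq\supp\sigma$ by the standard argument: that hypothesis is equivalent to the existence of $c>0$ with $\rho\le c\sigma$, and applying the positive (hence monotone) map $\F$ gives $\F(\rho)\le c\F(\sigma)$, whence the support inclusion. With both quantities strictly positive and finite, exponentiating the inequality from Corollary \ref{cor:F monotonicity} yields the claim.

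There is no genuine obstacle: all of the analytic content — the asymptotic attainability (Theorem \ref{thm:attainability}), together with the reduction of $F_{\alpha}$ monotonicity to classical monotonicity via measurements (Corollary \ref{cor:F monotonicity}) — has already been carried out above. The present corollary simply records the exponentiated form, which is sometimes the more convenient one to cite (for instance when one wants to avoid worrying about whether the logarithm takes the value $-\infty$ in degenerate cases).
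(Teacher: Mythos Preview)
Your proposal is correct and matches the paper's treatment: the paper simply remarks that Corollary~\ref{cor:F monotonicity} ``is equivalent to the monotonicity of $Q$'' and states the present corollary without a separate proof, which is exactly the exponentiation step you describe. Your additional care in verifying $\supp\F(\rho)\subseteq\supp\F(\sigma)$ so that both sides are finite is a welcome detail that the paper leaves implicit.
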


Following the argument of \cite{Petz}, we immediately obtain the joint convexity of $Q$:
\begin{corollary}[joint convexity]\label{cor:joint convexity}
Let $\rho_i,\sigma_i\in\L(\hil)_+$ be such that $\supp\rho_i\subseteq\supp\sigma_i,\,i=1,\ldots,r$, and let
$p_1,\ldots,p_r$ be a probability distribution. Then
\begin{align*}
Q_{\alpha}\nw\bz\sum_{i=1}^rp_i\rho_i\Big\|\sum_{i=1}^rp_i\sigma_i\jz
\le
\sum_{i=1}^r p_i Q_{\alpha}\nw(\rho_i\|\sigma_i).
\end{align*}
\end{corollary}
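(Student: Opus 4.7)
The plan is to carry out the standard Petz block-diagonal embedding trick, which reduces joint convexity to the monotonicity already established in Corollary~\ref{cor:Q monotonicity}. On the enlarged Hilbert space $\hil\otimes\kil'$, with $\kil'$ an auxiliary $r$-dimensional Hilbert space with orthonormal basis $\{|1\rangle,\ldots,|r\rangle\}$, I would introduce the block-diagonal positive operators
\[
\tilde\rho := \sum_{i=1}^r p_i\,\rho_i\otimes|i\rangle\langle i|,\qquad \tilde\sigma := \sum_{i=1}^r p_i\,\sigma_i\otimes|i\rangle\langle i|.
\]
The assumption $\supp\rho_i\subseteq\supp\sigma_i$ for each $i$ immediately gives $\supp\tilde\rho\subseteq\supp\tilde\sigma$, so the new R\'enyi quantities are finite on this pair.

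The key step is to show that $Q_{\alpha}\nw(\tilde\rho\|\tilde\sigma)=\sum_{i=1}^r p_i\,Q_{\alpha}\nw(\rho_i\|\sigma_i)$. Since $\tilde\sigma$ is block-diagonal, the functional calculus acts blockwise, and the block-$i$ component of $\tilde\sigma^{(1-\alpha)/(2\alpha)}$ is simply $p_i^{(1-\alpha)/(2\alpha)}\sigma_i^{(1-\alpha)/(2\alpha)}$. A direct multiplication, using the exponent identity $1+2\cdot\tfrac{1-\alpha}{2\alpha}=\tfrac{1}{\alpha}$, shows that the block-$i$ component of $\tilde\sigma^{(1-\alpha)/(2\alpha)}\tilde\rho\,\tilde\sigma^{(1-\alpha)/(2\alpha)}$ equals $p_i^{1/\alpha}\sigma_i^{(1-\alpha)/(2\alpha)}\rho_i\sigma_i^{(1-\alpha)/(2\alpha)}$. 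Raising to the $\alpha$-th power blockwise converts the scalar prefactor into $p_i$, and taking the trace collects the blocks into the asserted identity.

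To finish, I would apply the partial-trace CPTP map $\F:\L(\hil\otimes\kil')\to\L(\hil)$, $\F(X):=\Tr_{\kil'}X$, which sends $\tilde\rho\mapsto\sum_i p_i\rho_i$ and $\tilde\sigma\mapsto\sum_i p_i\sigma_i$. Corollary~\ref{cor:Q monotonicity} then yields
\[
Q_{\alpha}\nw\bz\sum_{i=1}^r p_i\rho_i\,\Big\|\,\sum_{i=1}^r p_i\sigma_i\jz\,\le\,Q_{\alpha}\nw(\tilde\rho\|\tilde\sigma)\,=\,\sum_{i=1}^r p_i\,Q_{\alpha}\nw(\rho_i\|\sigma_i),
\]
which is the claim. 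I do not anticipate any real obstacle here: both ingredients (the blockwise spectral computation of $\tilde\sigma^{(1-\alpha)/(2\alpha)}\tilde\rho\,\tilde\sigma^{(1-\alpha)/(2\alpha)}$ and partial-trace monotonicity) are routine, and the only bookkeeping point is to verify that the support hypothesis transfers to the extended pair, which is immediate from the block structure.
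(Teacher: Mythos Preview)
Your proposal is correct and follows essentially the same approach as the paper: the paper also defines the block-diagonal operators $\rho=\sum_i p_i\delta_i\otimes\rho_i$, $\sigma=\sum_i p_i\delta_i\otimes\sigma_i$ on $\bC^r\otimes\hil$ and applies Corollary~\ref{cor:Q monotonicity} with $\F=\Tr_{\bC^r}$. Your version simply spells out the blockwise computation that the paper leaves implicit.
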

\begin{proof}
Let $\delta_1,\ldots,\delta_r$ be orthogonal rank $1$ projections on $\kil:=\bC^r$, and define
$\rho:=\sum_{i=1}^rp_i\delta_i\otimes \rho_i$,
$\sigma:=\sum_{i=1}^rp_i\delta_i\otimes \sigma_i$.
Taking $\F:=\Tr_{\kil}$ to be the partial trace over $\kil$ in
Corollary \ref{cor:Q monotonicity}, the assertion follows.
\end{proof}

\begin{remark}
In Corollary \ref{cor:joint convexity}, we obtained the joint convexity from the monotonicity of $Q_{\alpha}\nw$.
In \cite{FL} (and also in \cite{Renyi_new,WWY} for $\alpha\in(1,2]$) the authors followed the opposite approach: they first established joint convexity of $Q_{\alpha}\nw$, and from that they obtained its monotonicity under CPTP maps by a standard argument using the Stinespring representation and decomposing the trace as a convex combination of unitary conjugations.
\end{remark}

\begin{remark}\label{rem:EPPMON}
Note that the monotonicity properties in Corollaries \ref{cor:F monotonicity}, \ref{Renyi monotonicity} and
\ref{cor:Q monotonicity} hold for any trace-preserving linear map $\F$ such that $\F^{\otimes n}$ is positive for every 
$n\in\bN$. This is a weaker condition than complete positivity.
\end{remark}

We give an overview of the various monotonicity and attainability properties of the old and the new R\'enyi relative entropies in Appendix \ref{sec:mon}.

\section{Strong Converse Exponent in Quantum Hypothesis Testing}
\label{sec:sc}

\subsection{Simple Quantum Hypothesis Testing}\label{sec:hypotesting}

We study the simple hypothesis testing problem for the null hypothesis
$H_0$: $\rho_n$ versus the alternative hypothesis $H_1$: $\sigma_n$,
where $\rho_n=\rho^{\otimes n}$ and $\sigma_n=\sigma^{\otimes n}$
are the $n$-fold tensor products of arbitrarily given
density operators $\rho$ and $\sigma$ in $\S(\H)$.
The problem is to decide which hypothesis is true based on the outcome
drawn from a quantum measurement, which is described by a POVM on $\H_n=\hil^{\otimes n}$.
In the hypothesis testing problem,
it is sufficient to treat a two-valued POVM $\{T_n(0),T_n(1)\}\in\M(\hil^{\otimes n})$,
where $0$ and $1$ indicate
the acceptance of $H_0$ and $H_1$, respectively.
Since $T_n(1)=I-T_n(0)$, the POVM is uniquely determined by $T_n=T_n(0)$, and
the only constraint on $T_n$ is that $0\le T_n\le I_n$. We will call such operators tests.
For a test $T_n$, the error probabilities of the first
and the second kind are, respectively, defined by
\begin{align}
\alpha_n(T_n)&:=\Tr\rho_n(I_n-T_n),
\label{eq:31}
\\
\beta_n(T_n)&:=\Tr\sigma_nT_n.
\label{eq:32}
\end{align}

In general there is a trade-off between these error probabilities,
and we can not make these probabilities unconditionally small, as described below.
First, we consider the optimal value for
$\beta_n(T_n)$ under the constant constraint on $\alpha_n(T_n)$, that is,
\begin{align}
\beta_{n}^*(\epsilon):=\min
\Set{\beta_n(T_n) | T_n:\text{test},\,\alpha_n(T_n)\le\epsilon}.
\label{eq:33}
\end{align}
The quantum Stein's lemma \cite{HP,ON} states that
for all $\ep\in(0,1)$,
\begin{align}
\lim_{n\to\infty}\frac{1}{n}\log\beta_n^*(\epsilon) = -D(\rho\|\sigma),
\label{eq:34}
\end{align}
where $D(\rho\|\sigma)$
is the quantum relative entropy given in \eqref{Umegaki}.
This implies the existence of a sequence of tests $\{T_n\}_{n\in\bN}$ such that
\begin{align*}
\lim_{n\to\infty}\frac{1}{n}\log\beta_n(T_n)=-\sr{\rho}{\sigma}
\ds\ds\ds\text{and}\ds\ds\ds
\lim_{n\to\infty}\alpha_n(T_n)=0.
\end{align*}

For the study of the trade-off between the error probabilities, it is natural to ask what
happens if we require the type II error probabilities to vanish with an exponent below or
above the relative entropy, i.e., we want to study
the asymptotic behavior of
$\alpha_n(T_n)$
under the exponential constraint
$\beta_n(T_n)\le e^{-nr},\,r>0$.
Specifically, let us define
\begin{align}
B_e(r)&:=
\sup\left\{-\limsup_{n\to\infty}\frac{1}{n}\log\alpha_n(T_n)\Bigm|
\limsup_{n\to\infty}\frac{1}{n}\log\beta_n(T_n)\le -r\right\}\nonumber
\\
&=
\sup\Bigl\{ R \Bigm|
\exists\{T_n\}_{n=1}^\infty,\,0\le T_n\le I_n,\,\text{s.t.}\nonumber
\\
&\ds\ds\ds\ds\ds\ds\limsup_{n\to\infty}\frac{1}{n}\log\beta_n(T_n)\le -r,\,
\limsup_{n\to\infty}\frac{1}{n}\log\alpha_n(T_n)\le -R
\Bigr\},\label{eq:35}
\end{align}
where the supremum in the first line is taken over all sequences of tests $\{T_n\}_{n\in\bN}$ satisfying the condition.
It was shown in \cite{Hayashi,Nagaoka} that
\begin{align}
B_e(r)=\sup_{0\le s< 1}\frac{-sr-\log\Tr\rho^{1-s}\sigma^{s}}{1-s}=
\sup_{0<\alpha<1}\frac{\alpha-1}{\alpha}\left[r-\rsro{\rho}{\sigma}{\alpha}\right]=
H_r(\rho\|\sigma),
\label{eq:36}
\end{align}
where $D_{\alpha}^{\mathrm{(old)}}$ is the traditional definition of the
quantum R\'enyi relative entropy, given in \eqref{old Renyi},
and $H_r(\rho\|\sigma)$ is the Hoeffding divergence defined in \eqref{Hoeffding div}.
(Note that the roles of the type I and the type II errors are reversed here as compared to some previous work on the Hoeffding bound, and hence our $H_r(\rho\|\sigma)$ corresponds to $H_r(\sigma\|\rho)$ in those works.)
It can be shown that $B_e(r)>0$ when $0<r<D(\rho\|\sigma)$,
and $\alpha_n(T_n)$ goes to zero exponentially
with the rate $B_e(r)$
for an optimal sequence of tests $\{T_n\}_{n=1}^{\infty}$.

On the other hand,
if $\supp\rho\subseteq\supp\sigma$ and $\beta_n(T_n)\le e^{-nr}$ with $r>D(\rho\|\sigma)$ then
$\alpha_n(T_n)$ inevitably goes to 1 exponentially fast \cite{ON};
this is called the strong converse property.
In this case, we are interested in determing the exponent with which
the success probabilities $1-\alpha_n(T_n)=\Tr\rho_nT_n$
go to zero.
The optimal such exponent is the strong converse exponent $B_e^*(r)$; formally,
\begin{align}
B_e^*(r)&:=
\inf\left\{-\liminf_{n\to+\infty}\frac{1}{n}\log\Tr\rho_n T_n\Bigm|
\limsup_{n\to\infty}\frac{1}{n}\log\Tr\sigma_nT_n \le -r\right\},
\label{eq:72}
\end{align}
where the infimum is taken over all possible sequences of tests $\{T_n\}_{n\in\bN}$
satisfying the condition.
Note that one's aim is to make the success probabilities decay as slow as possible, and
hence optimality means taking the smallest possible exponent along all sequences of tests
with a fixed decay rate of the type II errors.
It is easy to see that $B_e^*(r)$ can be alternatively written as
\begin{align}
B_e^*(r)=
\sup\Bigl\{ R \Bigm|
&\forall \{T_n\}_{n=1}^{\infty},\; 0\le T_n \le I_n,
\nn
&\limsup_{n\to\infty}\frac{1}{n}\log\Tr\sigma_nT_n \le -r
\,\Rightarrow\, \liminf_{n\to\infty}\frac{1}{n}\log\Tr\rho_nT_n \le -R
\Bigr\}\nn
=
\inf\Bigl\{ R \Bigm|
&\exists \{T_n\}_{n=1}^{\infty}, \; 0\le T_n \le I_n,
\nn
&\limsup_{n\to\infty}\frac{1}{n}\log\Tr\sigma_nT_n \le -r,\,
\liminf_{n\to\infty}\frac{1}{n}\log\Tr\rho_nT_n \ge -R
\Bigr\} .
\label{sc rate def2}
\end{align}

The main result of Section \ref{sec:sc} is Theorem \ref{thm:exponent}, where we show that, in complete analogy with \eqref{eq:36},
\begin{align}\label{sc exponent equals converse Hoeffding}
B_e^*(r)=\sup_{1<\alpha}\frac{\alpha-1}{\alpha}\left[r-\rsrn{\rho}{\sigma}{\alpha}
\right]=H_r^*(\rho\|\sigma),
\end{align}
where $H_r^*(\rho\|\sigma)$ is the converse Hoeffding divergence \eqref{converse Hoeffding}.
The inequality $B_e^*(r)\ge H_r^*(\rho\|\sigma)$ follows easily from the
monotonicity of the R\'enyi divergences, as we show in Lemma \ref{lemma:converse rate lower
bound}. We show that this is in fact an equality by determining the asymptotics
of the error probabilities for the
Neyman-Pearson tests. This is interesting in itself, as these quantities play a central role in the information spectrum method \cite{Han,NH2007}.
We start with this problem in Section \ref{sec:NP}.

\begin{remark}\label{rem:supports}
Note that if $\supp\rho\subseteq\supp\sigma$ is not satisfied then the strong converse property doesn't hold; indeed, the choice $T_n:=I-\sigma_n^0,\,n\in\bN$, yields a sequence of tests for which $\beta_n(T_n)=0\le e^{-nr},\,r>0$, and
$\alpha_n(T_n)=(\Tr\rho\sigma^0)^n,\,n\in\bN$, which converges to zero exponentially fast with an exponent
$-\log\Tr\rho\sigma^0>0$. Hence, for the rest we will assume that $\supp\rho\subseteq\supp\sigma$.
\end{remark}

\subsection{Exponents for the Neyman-Pearson tests}\label{sec:NP}

Let $\rho$ and $\sigma$ be quantum states such that
\begin{equation}\label{supports}
\supp\rho\subseteq\supp\sigma,
\end{equation}
and let $\rho_n,\sigma_n$, etc. be defined as in \eqref{eq:10}.
To exclude a trivial case, we assume that $\rho\ne\sigma$.
Let us define the quantum Neyman-Pearson tests by
\begin{align}
S_n(a):=\Set{\rho_n-e^{na}\sigma_n>0},
\label{eq:42}
\end{align}
where $a\in\R$ is a trade-off parameter. Our goal in this section is to determine the
asymptotics of the corresponding type I success probabilities $\Tr\rho_n S_{n,a}$ and the
type II error probabilities $\Tr\sigma_n S_{n,a}$.
Note that
\begin{equation}\label{zero NP}
S_n(a)=0\ds\iff\ds a\ge\dmax{\rho}{\sigma}:=\inf\{\gamma\,:\,\rho\le e^{\gamma}\sigma\}.
\end{equation}
Here $\dmax{\rho}{\sigma}$ is the \ki{max-relative entropy} \cite{Datta,RennerPhD}, and it was shown in \cite[Theorem 4]{Renyi_new} that 
\begin{align*}
D_{+\infty}\nw(\rho\|\sigma):=\lim_{\alpha\to+\infty}\rsrn{\rho}{\sigma}{\alpha}=\dmax{\rho}{\sigma}.
\end{align*}
Thus,
\begin{equation*}
\Tr\rho_n S_{n,a}=\Tr\sigma_n S_{n,a}=0,\ds\ds\ds a\ge \dmax{\rho}{\sigma},
\end{equation*}
and, with the convention $\log 0:=-\infty$,
\begin{equation*}
\lim_{n\to+\infty}\frac{1}{n}\log\Tr\rho_n S_{n,a}=\lim_{n\to+\infty}\frac{1}{n}\log\Tr\sigma_n S_{n,a}=-\infty,\ds\ds\ds a\ge \dmax{\rho}{\sigma}.
\end{equation*}
Hence, for the rest we can restrict our attention to $a<\dmax{\rho}{\sigma}$.

For every $s\in\bR$, let
\begin{align}
\psi(s):=F_{s+1}(\rho\|\sigma)
=\log\Tr\left(
\sigma^{\frac{-s}{2(s+1)}}\rho\sigma^{\frac{-s}{2(s+1)}}\right)^{s+1},
\label{eq:41}
\end{align}
and
\begin{align}
\phi(a):=\sup_{s\ge 0}\{as-\psi(s)\}
\label{eq:45}
\end{align}
be its Legendre-Fenchel transform on the interval $[0,+\infty)$.
\begin{lemma}\label{lemma:psi properties}
We have
\begin{align}
\psi(0)&=0,\label{psi(0)}\\
\psi'(0)&=\sr{\rho}{\sigma},\label{phi'(0)}\\
\lim_{s\to+\infty}\psi'(s)&=\dmax{\rho}{\sigma},\label{phi'(infty)}
\end{align}
and
\begin{equation}\label{phi values}
\phi(a)\begin{cases}
=0,&a\le\sr{\rho}{\sigma}\\
>0,&\sr{\rho}{\sigma}<a\le \dmax{\rho}{\sigma},\\
=+\infty,&\dmax{\rho}{\sigma}<a.
\end{cases}
\end{equation}
\end{lemma}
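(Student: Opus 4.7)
The plan is to combine three ingredients: direct substitution at $s=0$, the identification $\psi(s)/s=\rsrn{\rho}{\sigma}{s+1}$ together with its known limits as $\alpha\to 1$ and $\alpha\to+\infty$, and convexity of $\psi$ (established in Corollary~\ref{cor:convexity}), which lets me run a standard Legendre--Fenchel analysis for $\phi$.

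For $\psi(0)=0$, the exponent $-s/(2(s+1))$ vanishes at $s=0$, so $\sigma^{-s/(2(s+1))}$ reduces to the support projection of $\sigma$; using $\supp\rho\subseteq\supp\sigma$ and $\Tr\rho=1$ then gives $\psi(0)=\log\Tr\rho=0$. Since $\Tr\rho=1$, one has $\psi(s)/s=\rsrn{\rho}{\sigma}{s+1}$ for $s>0$, so the limiting relation $\rsrn{\rho}{\sigma}{\alpha}\to\sr{\rho}{\sigma}$ as $\alpha\to 1$ recalled in the introduction, together with smoothness of $\psi$ near $0$ (everything lives on $\supp\sigma$, where $\sigma$ is invertible), yields $\psi'(0)=\lim_{s\to 0^+}\psi(s)/s=\sr{\rho}{\sigma}$. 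To handle $\lim_{s\to+\infty}\psi'(s)$, I use convexity of $\psi$ with $\psi(0)=0$ to get, for $s>0$, the sandwich
\begin{equation*}
\frac{\psi(s)}{s}\le \psi'(s)\le 2\frac{\psi(2s)}{2s}-\frac{\psi(s)}{s},
\end{equation*}
both bounds of which tend to $\lim_{s\to+\infty}\psi(s)/s=D_{+\infty}\nw(\rho\|\sigma)=\dmax{\rho}{\sigma}$, by the identification of $D_{+\infty}\nw$ with the max-relative entropy \cite{Renyi_new}.

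For the values of $\phi$, I carry out a three-case analysis. If $a\le\sr{\rho}{\sigma}=\psi'(0)$, convexity gives $\psi(s)\ge\psi(0)+s\psi'(0)\ge as$, so $as-\psi(s)\le 0$ on $[0,+\infty)$ with equality at $s=0$, hence $\phi(a)=0$. If $\sr{\rho}{\sigma}<a\le\dmax{\rho}{\sigma}$, the right derivative of $s\mapsto as-\psi(s)$ at $s=0$ equals $a-\psi'(0)>0$, so $as-\psi(s)>0$ for small $s>0$ and $\phi(a)>0$. If $a>\dmax{\rho}{\sigma}$, then $\psi(s)/s\le\dmax{\rho}{\sigma}$ for all $s>0$ (since $\psi(s)/s$ is non-decreasing with limit $\dmax{\rho}{\sigma}$), so $as-\psi(s)\ge s(a-\dmax{\rho}{\sigma})\to+\infty$, giving $\phi(a)=+\infty$.

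The main obstacle is extracting $\lim_{s\to+\infty}\psi'(s)=\dmax{\rho}{\sigma}$, since this conclusion about the asymptotic slope does not follow from differentiability of $\psi$ alone; the crux is that convexity forces $\psi'(s)$ to lie between quotients $\psi(s)/s$ and $\psi(2s)/s-\psi(s)/s$, whose common limit equals $\dmax{\rho}{\sigma}$ by the known identity $D_{+\infty}\nw=D_{\max}$. Once this is in hand, the $\phi$-analysis is a standard convex-duality exercise using convexity of $\psi$ together with the values of $\psi'(0)$ and $\lim_{s\to\infty}\psi'(s)$.
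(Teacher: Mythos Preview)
Your proof is correct and follows essentially the same approach as the paper: direct evaluation at $s=0$, identification $\psi(s)/s=\rsrn{\rho}{\sigma}{s+1}$ with the known limits at $\alpha\to 1$ and $\alpha\to+\infty$, and a convexity-based Legendre--Fenchel analysis for $\phi$. The only cosmetic difference is that for $\lim_{s\to+\infty}\psi'(s)$ the paper invokes L'Hospital's rule (tacitly using convexity to ensure the limit of $\psi'$ exists), whereas you give an explicit sandwich from the chord inequalities; your treatment of the three cases for $\phi(a)$ is also spelled out in more detail than the paper's ``follows immediately.''
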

\begin{proof}
The identity in \eqref{psi(0)} is immediate from the definition of $\psi$.
$\psi(0)=0$ yields $\psi'(0)=\lim_{s\to 0}\frac{1}{s}\psi(s)=\lim_{\alpha\to 1}\rsr{\rho}{\sigma}{\alpha}=\sr{\rho}{\sigma}$, where the last identity is due to \cite[Theorem 4]{Renyi_new}.
Using again \cite[Theorem 4]{Renyi_new} and the L'Hospital rule,
$\lim_{s\to+\infty}\psi'(s)=\lim_{s\to+\infty}\frac{1}{s}\psi(s)=\lim_{\alpha\to+\infty}\rsr{\rho}{\sigma}{\alpha}=\dmax{\rho}{\sigma}$.
By Corollary \ref{cor:convexity}, $s\mapsto\psi(s)$ is convex, and hence \eqref{phi values} follows immediately from \eqref{psi(0)}--\eqref{phi'(infty)}.
\end{proof}

\begin{lemma}\label{lemma:NP upper bounds}
For any $a\in\R$ and $n\in\bN$, we have
\begin{align}
\frac{1}{n}\log\Tr\rho_nS_n(a)&\le -\phi(a),
\label{eq:46}\\
\frac{1}{n}\log\Tr\sigma_nS_n(a)
&\le
-\{a+\phi(a)\}.
\label{eq:65}
\end{align}
\end{lemma}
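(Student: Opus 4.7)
My plan is to prove \eqref{eq:46} first via a Chernoff-style bound derived from the monotonicity of $F_\alpha$ under the binary Neyman--Pearson POVM, and then to deduce \eqref{eq:65} by combining \eqref{eq:46} with the Markov-type inequality \eqref{eq:39}. The case $a\ge\dmax{\rho}{\sigma}$ is trivial: by \eqref{zero NP} the projection $S_n(a)$ vanishes, so both traces are zero, both right-hand sides are $+\infty$ by Lemma \ref{lemma:psi properties} (or just nonnegative), and both bounds hold with $\log 0=-\infty$. Henceforth I assume $S_n(a)\ne 0$.

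The key step is to apply Lemma \ref{mono:measurement} (monotonicity under measurements for $\alpha\ge 1$) to the binary POVM $M_n:=\{S_n(a),I-S_n(a)\}$ on $\hil^{\otimes n}$, with parameter $\alpha=s+1$ for $s\ge 0$. Since the $F_\alpha$-quantity is additive on tensor products, this gives
\begin{align*}
F_{s+1}^{M_n}(\rho_n\|\sigma_n)\le F_{s+1}(\rho_n\|\sigma_n)=n\psi(s).
\end{align*}
For $s\ge 0$, every term in the definition \eqref{eq:6} of $F_{s+1}^{M_n}$ is nonnegative, so I may drop the $I-S_n(a)$ contribution and exponentiate to obtain
\begin{align*}
\bigl(\Tr\rho_n S_n(a)\bigr)^{s+1}\bigl(\Tr\sigma_n S_n(a)\bigr)^{-s}\le e^{n\psi(s)}.
\end{align*}
(Implicitly $\Tr\sigma_n S_n(a)>0$, otherwise the left-hand side diverges while the right-hand side is finite under $\supp\rho\subseteq\supp\sigma$.) Now invoke \eqref{eq:39}, rewritten as $\Tr\sigma_n S_n(a)\le e^{-na}\Tr\rho_n S_n(a)$, substitute, and cancel one factor of $\Tr\rho_n S_n(a)$ to get $\Tr\rho_n S_n(a)\le e^{-n(as-\psi(s))}$ for every $s\ge 0$. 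Taking $\sup_{s\ge 0}$ yields \eqref{eq:46}.

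For \eqref{eq:65}, simply chain \eqref{eq:39} with the bound just established:
\begin{align*}
\Tr\sigma_n S_n(a)\le e^{-na}\Tr\rho_n S_n(a)\le e^{-n(a+\phi(a))}.
\end{align*}
The only real conceptual step is recognizing that the monotonicity inequality applied to the Neyman--Pearson POVM plays the role of a Chernoff bound for the new R\'enyi quantities; the rest is routine manipulation, and the potential nuisance of degenerate supports is handled automatically by the finiteness of $F_{s+1}(\rho_n\|\sigma_n)$ under the running assumption $\supp\rho\subseteq\supp\sigma$.
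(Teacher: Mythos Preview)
Your proof is correct and follows essentially the same approach as the paper's: both combine the monotonicity Lemma~\ref{mono:measurement} applied to the binary Neyman--Pearson POVM with the inequality \eqref{eq:39}, differing only in the order in which these two ingredients are invoked and in your explicit handling of the degenerate case $S_n(a)=0$.
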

\begin{proof}
For any $a\in\R$ and $s\ge 0$, we have
\begin{align}
\Tr\rho_nS_n(a)
&=\left\{\Tr\rho_nS_n(a)\right\}^{s+1}
\left\{\Tr\rho_nS_n(a)\right\}^{-s}
\nn
&\le e^{-nas}
\left\{\Tr\rho_nS_n(a)\right\}^{s+1}\left\{\Tr\sigma_nS_n(a)\right\}^{-s}
\nn
&\le e^{-nas}\Bigl[
\left\{\Tr\rho_nS_n(a)\right\}^{s+1}\left\{\Tr\sigma_nS_n(a)\right\}^{-s}
\nn
&\quad+\left\{\Tr\rho_n(I_n-S_n(a))\right\}^{s+1}
\left\{\Tr\sigma_n(I_n-S_n(a))\right\}^{-s}
\Bigr]
\nn
&\le e^{-nas}
\Tr\left(
\sigma_n^{\frac{-s}{2(s+1)}}\rho_n\sigma_n^{\frac{-s}{2(s+1)}}
\right)^{s+1}
\nn
&= e^{-nas}e^{n\psi(s)},
\label{eq:44}
\end{align}
where in the first inequality we used \eqref{eq:39}, the second inequality is trivial, and the last inequality follows from Lemma~\ref{mono:measurement}.
Taking the logarithm and the infimum in $s$ yields
the inequality in \eqref{eq:46}.

Using \eqref{eq:39} and \eqref{eq:44}, we get
\begin{align}
\Tr\sigma_nS_n(a)\le e^{-na}\Tr\rho_nS_n(a)\le e^{-na(s+1)}e^{n\psi(s)},
\label{eq:64}
\end{align}
which yields \eqref{eq:65}.
\end{proof}

Note that the bounds in \eqref{eq:46} and \eqref{eq:65} are trivial for $a\ge \dmax{\rho}{\sigma}$, due to
\eqref{zero NP}. For $a\le\sr{\rho}{\sigma}$ we have $\phi(a)=0$ (cf.~\eqref{phi values}), and hence the upper bound in \eqref{eq:46} is trivial in this range. More detailed information
about the values of $\Tr\sigma_nS_n(a)$ in this range is given in the setting of the Hoeffding bound;
Corollary 4.5 in \cite{HMO2} states that
\begin{align*}
\lim_{n\to\infty}\frac{1}{n}\log\Tr\sigma_nS_n(a)=-\sup_{0\le t\le 1}\{at-\log\Tr\rho^t\sigma^{1-t}\}
\le -a=-\{\phi(a)+a\},\ds\ds\ds a<\sr{\rho}{\sigma}.
\end{align*}
Theorems \ref{thm:alpha} and \ref{thm:beta} below show that the inequalities in \eqref{eq:46} and \eqref{eq:65}  hold asymptotically as an equality in the non-trivial range $\sr{\rho}{\sigma}<a<\dmax{\rho}{\sigma}$.

\begin{theorem}
\label{thm:alpha}
For any $a\in\bz D(\rho\|\sigma),\dmax{\rho}{\sigma}\jz$, we have
\begin{align}
\lim_{n\to\infty}\frac{1}{n}\log\Tr\rho_nS_n(a)
=\lim_{n\to\infty}\frac{1}{n}\log\Tr(\rho_n-e^{na}\sigma_n)_+
= -\phi(a).
\label{eq:47}
\end{align}
\end{theorem}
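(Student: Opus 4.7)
My plan is to sandwich both $\Tr\rho_nS_n(a)$ and $\Tr(\rho_n-e^{na}\sigma_n)_+$ between matching bounds of exponent $-\phi(a)$. The upper bound is immediate: since $(\rho_n-e^{na}\sigma_n)_+=(\rho_n-e^{na}\sigma_n)S_n(a)$, we have $\Tr(\rho_n-e^{na}\sigma_n)_+\le\Tr\rho_nS_n(a)$, and Lemma \ref{lemma:NP upper bounds} gives $\frac{1}{n}\log\Tr\rho_nS_n(a)\le-\phi(a)$ for every $n$, so the $\limsup$ of both quantities is bounded above by $-\phi(a)$.

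For the lower bound I first reduce to the commuting pair $(\widehat\rho_n,\sigma_n)$ via a pinching trick. The spectral projection $\widehat{S}_n(a):=\{\widehat\rho_n-e^{na}\sigma_n>0\}$ commutes with $\sigma_n$, so by \eqref{eq:9} one has $\Tr\rho_n\widehat{S}_n(a)=\Tr\widehat\rho_n\widehat{S}_n(a)$, and therefore $\Tr(\rho_n-e^{na}\sigma_n)_+\ge\Tr(\rho_n-e^{na}\sigma_n)\widehat{S}_n(a)=\Tr(\widehat\rho_n-e^{na}\sigma_n)_+$. Since $\widehat\rho_n$ and $\sigma_n$ commute, their joint spectral decomposition converts the remaining task into a purely classical large-deviations estimate.

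Next I would execute a Cram\'er-type change of measure. By Lemma \ref{lemma:psi properties}, $\psi$ is convex with $\psi'(0)=\sr{\rho}{\sigma}$ and $\psi'(s)\to\dmax{\rho}{\sigma}$ as $s\to\infty$, so for any $b\in(\sr{\rho}{\sigma},\dmax{\rho}{\sigma})$ there exists $s(b)>0$ with $\psi'(s(b))=b$ and $\phi(b)=s(b)b-\psi(s(b))$. Fix small $\delta,\epsilon>0$ and pick $s^+$ with $\psi'(s^+)\in(a+\delta,a+\delta+\epsilon)$. Define the tilted operator $\tilde\omega_n:=\widehat\rho_n^{s^++1}\sigma_n^{-s^+}/Z_n$ with $Z_n:=\Tr\widehat\rho_n^{s^++1}\sigma_n^{-s^+}$; by Theorem \ref{thm:attainability}, $\frac{1}{n}\log Z_n\to\psi(s^+)$. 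Writing $X_n:=\log\widehat\rho_n-\log\sigma_n$ on $\supp\sigma_n$, the identity $\widehat\rho_n=Z_n\tilde\omega_n e^{-s^+X_n}$ together with the inequality $\widehat\rho_n-e^{na}\sigma_n\ge(1-e^{-n\delta})\widehat\rho_n$ valid on the common eigenspaces where $X_n>n(a+\delta)$ yields
\[
\Tr(\widehat\rho_n-e^{na}\sigma_n)_+\ge(1-e^{-n\delta})\,Z_n\,e^{-s^+n(a+\delta+\epsilon)}\,\Tr\tilde\omega_n\{n(a+\delta)<X_n\le n(a+\delta+\epsilon)\}.
\]
Taking $\frac{1}{n}\log$ produces a lower bound with exponent $\psi(s^+)-s^+(a+\delta+\epsilon)+o(1)$, which tends to $-\phi(a)$ as $\epsilon,\delta\to 0^+$ by continuity of $\phi$ on $(\sr{\rho}{\sigma},\dmax{\rho}{\sigma})$.

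The hard step is showing that $\Tr\tilde\omega_n\{n(a+\delta)<X_n\le n(a+\delta+\epsilon)\}$ is not exponentially small, i.e., a law-of-large-numbers-type statement for $X_n/n$ under the tilted state $\tilde\omega_n$. Since $\widehat\rho_n$ is not a tensor product, the classical Cram\'er theorem for i.i.d.~sums does not apply directly. The natural remedy is a G\"artner-Ellis argument: further tilting and applying Theorem \ref{thm:attainability} gives $\frac{1}{n}\log\Tr\tilde\omega_n e^{tX_n}=\frac{1}{n}\log(Z_n(s^++t)/Z_n(s^+))\to\psi(s^++t)-\psi(s^+)$, whose limit is differentiable and strictly convex by Lemma \ref{lemma:psi properties}, which suffices for the desired concentration around $\psi'(s^+)$. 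An alternative, exploiting $v_n\le(n+1)^{\dim\hil}$, would decompose $\widehat\rho_n$ across polynomially many eigenspaces of $\sigma_n$ and run the method of types on the spectrum of $\sigma$ directly.
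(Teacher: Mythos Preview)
Your argument is correct and hinges on the same key tool (Theorem~\ref{thm:attainability}), but it takes a genuinely different route from the paper. The paper sidesteps the non-i.i.d.\ obstacle you identify by a two-scale reduction: it fixes a block size $m$, pinches once at level $m$ to get the commuting pair $(\widehat\rho_m,\sigma_m)$, and then works with the \emph{i.i.d.}\ sequence $(\widehat\rho_m^{\otimes k},\sigma_m^{\otimes k})$, so that classical Cram\'er applies directly. This yields a rate in terms of $\psi_m(s):=\tfrac1m\log\Tr\widehat\rho_m^{1+s}\sigma_m^{-s}$, and the inequality $\psi(s)\le\psi_m(s)+(1+s)\tfrac{\log v_m}{m}$ (essentially \eqref{eq:27}) closes the gap as $m\to\infty$. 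You instead pinch once at level $n$ and invoke G\"artner--Ellis, using Theorem~\ref{thm:attainability} to identify the limiting cumulant generating function $t\mapsto\psi(s^{+}+t)-\psi(s^{+})$. The paper's approach is more elementary (only i.i.d.\ large deviations), while yours is shorter and, as the authors themselves remark in the Conclusion, generalises to correlated states. One small caution: what you actually need for the G\"artner--Ellis lower bound (equivalently, for the concentration of $\tilde\omega_n$) is differentiability of $t\mapsto\psi(s^{+}+t)$ at $t=0$, not strict convexity; Lemma~\ref{lemma:psi properties} asserts neither explicitly, but differentiability is clear from the explicit formula~\eqref{eq:41}.
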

\begin{proof}
For a fixed $m\in\bN$,
let $\widehat\rho_m:=\E_{\sigma_m}(\rho_m)$, and define
\begin{align}
\widehat S_{m,k}(a)
:=\Set{\widehat\rho_m^{\otimes k}-e^{kma}\sigma_m^{\otimes k}>0}.
\label{eq:48}
\end{align}
Write $n\in\bN$ in the form $n=km+r$, $k,r\in\bN,\,0\le r<m$.
For any $a,b\in\bR$, we have
\begin{align}
\Tr\rho_nS_n(a)
&=\Tr(\rho_n-e^{na}\sigma_n)S_n(a)+e^{na}\Tr\sigma_nS_n(a)
\nn
&\ge\Tr(\rho_n-e^{na}\sigma_n)_+
\nn
&\ge\Tr(\widehat\rho_m^{\otimes k}-e^{na}\sigma_m^{\otimes k})_+
\label{eq:49}
\\
&\ge\Tr(\widehat\rho_m^{\otimes k}-e^{na}\sigma_m^{\otimes k})
\widehat S_{m,k}(b)
\label{eq:50}
\\
&\ge\Tr\widehat\rho_m^{\otimes k}\widehat S_{m,k}(b)
-e^{na}e^{-kmb}\Tr\widehat\rho_m^{\otimes k}\widehat S_{m,k}(b)
\label{eq:51}
\\
&=\{1-e^{ra}e^{-km(b-a)}\}\Tr\widehat\rho_m^{\otimes k}\widehat S_{m,k}(b),
\label{eq:52}
\end{align}
where \eqref{eq:49} follows from Lemma~\ref{mono:positive}
(with the choice $\F:=\E_{\sigma_m}^{\otimes k}\otimes\Tr_{[km+1,r]})$,
\eqref{eq:50} follows from \eqref{eq:38},
and we used \eqref{eq:39} in \eqref{eq:51}.
Hence, by choosing $b>a$, we get
\begin{align}
-\phi(a)&\ge \limsup_{n\to+\infty}\frac{1}{n}\log\Tr\rho_nS_n(a)
\ge
\liminf_{n\to+\infty}\frac{1}{n}\log\Tr\rho_nS_n(a)\nonumber\\
&\ge
\liminf_{n\to\infty}\frac{1}{n}\log\Tr(\rho_n-e^{na}\sigma_n)_+
\ge
\frac{1}{m}\liminf_{k\to\infty}\frac{1}{k}\Tr\widehat\rho_m^{\otimes k}\widehat S_{m,k}(b),
\label{success prob lower bound}
\end{align}
where the first inequality is due to \eqref{eq:46}.

Note that $\widehat\rho_m$ and $\sigma_m$ are commuting density operators, and hence they can be represented as probability density
functions on some finite set $\X$, which is the interpretation we will be using in the following. Then
$Y:=\log\frac{\widehat\rho_m}{\sigma_m}$ is a random variable on $\X$, and its logarithmic moment generating function w.r.t.~
$\widehat \rho_m$ is
\begin{align}
m\psi_m(s):=\Psi_m(s)&:=
\log\Exp_{\widehat\rho_m}e^{ s\log\frac{\widehat\rho_m}{\sigma_m} }=
\log\Tr\widehat\rho_m e^{ s\log\frac{\widehat\rho_m}{\sigma_m} }
=\log\Tr\widehat\rho_m^{1+s}\sigma_m^{-s}.
\label{eq:53}
\end{align}
Note that $\log\frac{\widehat\rho_m^{\otimes k}}{\sigma_m^{\otimes k}}$ can naturally be identified with
$Y_1+\ldots+Y_k$, where $Y_i$ is the $i$th translate of $Y$ on $\times_{j=1}^{+\infty}\X$. Obviously, these translates form a
sequence of i.i.d.~random variables under the product law $\widehat\rho_m^{\otimes\infty}$, and hence, by
Cram\'er's theorem \cite[Theorem 2.1.24]{DZ}, we have
\begin{align*}
\liminf_{k\to\infty}\frac{1}{k}\log
\Tr\widehat\rho_m^{\otimes k}\widehat S_{m,k}(b)
=\liminf_{k\to\infty}\frac{1}{k}\log
\Tr\widehat\rho_m^{\otimes k}
\Set{\frac{1}{k}\log\frac{\widehat\rho_m^{\otimes k}}{\sigma_m^{\otimes k}}>mb}
\ge-\inf_{\kappa>mb}\sup_{s\in\R}\left\{\kappa s-\Psi_m(s) \right\}.
\end{align*}
Assume now that $\sr{\rho}{\sigma}<a<b<\dmax{\rho}{\sigma}$. Then we have
\begin{align*}
mb>mD(\rho\|\sigma)=\sr{\rho_m}{\sigma_m}\ge D(\widehat\rho_m\|\sigma_m)
=\Exp_{\widehat\rho_m}\log\frac{\widehat\rho_m}{\sigma_m}
=\Psi_m'(0),
\end{align*}
where the second inequality is due to the
monotonicity of the quantum relative entropy.
Since $\Psi_m$ is convex, it follows that
\begin{align*}
\inf_{\kappa>mb}\sup_{s\in\R}\left\{\kappa s-\Psi_m(s) \right\} &=
\sup_{s\in\R}\left\{mb s-\Psi_m(s) \right\}
=
\sup_{s\ge 0}\left\{mb s-\Psi_m(s) \right\}
=m\sup_{s\ge 0}\left\{ bs-\psi_m(s) \right\}.
\end{align*}
Let $\delta_m:=\frac{\log v_m}{m}$.
From \eqref{eq:27}, we obtain
\begin{align}
\psi(s)\le\psi_m(s)+(1+s)\delta_m,
\label{eq:58}
\end{align}
and hence,
\begin{align*}
\sup_{s\ge 0}\left\{ bs-\psi_m(s) \right\}
&\le\sup_{s\ge 0}\left\{ bs-\psi(s)+(1+s)\delta_m\right\}
\nn
&=\sup_{s\ge 0}\left\{ \left(b+\delta_m\right)s-\psi(s) \right\}+\delta_m
\nn
&\le\phi(b+\delta_m) +\delta_m.
\end{align*}
Putting it all together, we get
\begin{align}
\frac{1}{m}\liminf_{k\to\infty}\frac{1}{k}\log
\Tr\widehat\rho_m^{\otimes k}\widehat S_{m,k}(b)
\ge
-\left\{\phi(b+\delta_m) +\delta_m\right\}.
\label{eq:60}
\end{align}
Substituting it back to \eqref{success prob lower bound}, taking the limit $m\to+\infty$ and
using that $\lim_{m\to+\infty}\delta_m=0$, and that $\phi$ is continuous on
$\bz\sr{\rho}{\sigma},\dmax{\rho}{\sigma}\jz$, we obtain
the assertion.
\end{proof}

\begin{theorem}
\label{thm:beta}
For any $a\in\bz D(\rho\|\sigma),\dmax{\rho}{\sigma}\jz$, we have
\begin{align}
\lim_{n\to\infty}\frac{1}{n}\log\Tr\sigma_nS_n(a)
= -\{\phi(a)+a\}.
\label{eq:63}
\end{align}
\end{theorem}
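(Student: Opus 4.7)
The upper bound $\frac{1}{n}\log\Tr\sigma_nS_n(a)\le -(a+\phi(a))$ is already available for every $n$ from \eqref{eq:65} in Lemma \ref{lemma:NP upper bounds}, so all the work is in the matching asymptotic lower bound. The naive inequality $\Tr\sigma_nS_n(a)\le e^{-na}\Tr\rho_nS_n(a)$ coming from \eqref{eq:39} goes the wrong way, so my plan is instead to exploit the fact that $S_n(a)$ itself is a feasible test for the variational characterization \eqref{eq:38} of $\Tr(\rho_n-e^{nb}\sigma_n)_+$ at a slightly larger threshold $b>a$, and then invoke Theorem \ref{thm:alpha} at both $a$ and $b$ to extract the exponent.

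Concretely, I would fix $a<b<\dmax{\rho}{\sigma}$ and apply \eqref{eq:38} to $\rho_n-e^{nb}\sigma_n$ with $T:=S_n(a)$, obtaining
\begin{align*}
\Tr(\rho_n-e^{nb}\sigma_n)_+ \;\ge\; \Tr(\rho_n-e^{nb}\sigma_n)S_n(a) \;=\; \Tr\rho_nS_n(a)-e^{nb}\Tr\sigma_nS_n(a),
\end{align*}
which rearranges to
\begin{align*}
e^{nb}\Tr\sigma_nS_n(a)\;\ge\;\Tr\rho_nS_n(a)-\Tr(\rho_n-e^{nb}\sigma_n)_+.
\end{align*}
Theorem \ref{thm:alpha} identifies the two terms on the right as exponentials with rates $-\phi(a)$ and $-\phi(b)$ respectively; if $\phi(b)>\phi(a)$, then the subtracted term is exponentially negligible compared to the leading one, so the right-hand side also has exponential rate $-\phi(a)$. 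This gives
\begin{align*}
\liminf_{n\to\infty}\frac{1}{n}\log\Tr\sigma_nS_n(a)\;\ge\;-\phi(a)-b,
\end{align*}
and letting $b\downarrow a$ completes the argument, provided $\phi$ is continuous at $a$.

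The only delicate point, and the step I expect to be the main obstacle, is the strict monotonicity (and continuity) of $\phi$ on $\bz\sr{\rho}{\sigma},\dmax{\rho}{\sigma}\jz$, needed both for $\phi(b)>\phi(a)$ and for the final passage $b\downarrow a$. This, however, is forced by the Legendre-Fenchel structure already recorded in Lemma \ref{lemma:psi properties}: $\psi$ is convex (Corollary \ref{cor:convexity}) with $\psi(0)=0$, $\psi'(0)=\sr{\rho}{\sigma}$ and $\lim_{s\to+\infty}\psi'(s)=\dmax{\rho}{\sigma}$, so for each $a$ in the open interval the supremum in $\phi(a)=\sup_{s\ge 0}\{as-\psi(s)\}$ is attained at some $s^*(a)>0$, which forces $\phi$ to be strictly increasing and continuous there. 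Combined with the upper bound from Lemma \ref{lemma:NP upper bounds}, this yields the claimed limit $-\{\phi(a)+a\}$.
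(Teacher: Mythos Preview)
Your argument is correct and is essentially the same as the paper's: you apply \eqref{eq:38} with $T=S_n(a)$ at a threshold $b>a$, rearrange, invoke Theorem \ref{thm:alpha} at both $a$ and $b$, use $\phi(b)>\phi(a)$ to discard the subtracted term, and then let $b\searrow a$; the paper phrases the same comparison via the $\max$ of the two rates rather than a subtraction, but the content is identical. Your extra discussion of strict monotonicity of $\phi$ on $\bigl(\sr{\rho}{\sigma},\dmax{\rho}{\sigma}\bigr)$ makes explicit a point the paper uses without further comment.
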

\begin{proof}
By \eqref{eq:38}, we have
\begin{align}
\Tr(\rho_n-e^{nb}\sigma_n)_+
\ge \Tr(\rho_n-e^{nb}\sigma_n)S_n(a)
\label{eq:66}
\end{align}
for any $b\in\R$, and hence,
\begin{align}
\Tr(\rho_n-e^{nb}\sigma_n)_+ +e^{nb}\Tr\sigma_nS_n(a)
\ge \Tr\rho_nS_n(a).
\label{eq:67}
\end{align}
Assume now that $\sr{\rho}{\sigma}<a<b<\dmax{\rho}{\sigma}$.
Applying Theorem~\ref{thm:alpha} to
\eqref{eq:67}, we get
\begin{align*}
-\phi(a)&=
\liminf_{n\to\infty}\frac{1}{n}\log\Tr\rho_nS_n(a)
\le
\max\left\{-\phi(b),b+\liminf_{n\to\infty}\frac{1}{n}\log\Tr\sigma_nS_n(a)\right\}.
\end{align*}
Note that $\sr{\rho}{\sigma}<a<b<\dmax{\rho}{\sigma}$ implies $\phi(a)<\phi(b)$, and hence we have
\begin{align}
-\phi(a)\le b+\liminf_{n\to\infty}\frac{1}{n}\log\Tr\sigma_nS_n(a).
\label{eq:68}
\end{align}
Taking $b\searrow a$, we obtain
\begin{align}
-\{\phi(a)+a\}\le\liminf_{n\to\infty}\frac{1}{n}\log\Tr\sigma_nS_n(a).
\label{eq:70}
\end{align}
Now combining \eqref{eq:65} and \eqref{eq:70} yields the assertion.
\end{proof}
\medskip

Theorems \ref{thm:alpha} and \ref{thm:beta} yield the following refinement of Corollary \ref{cor:Renyi attainability}.
Note that $\phi(a)$ can also be written as
$\phi(a)=\sup_{\alpha>1}\{a(\alpha-1)-F_{\alpha}(\rho\|\sigma)\}$,
where $F_{\alpha}(\rho\|\sigma)$ is defined in \eqref{eq:5}. For simplicity, we will use the notation
$F(\alpha):=F_{\alpha}(\rho\|\sigma)$. By Corollary \ref{cor:convexity}, $\alpha\mapsto F(\alpha)$ is convex on $(1,+\infty)$,
and Lemma \ref{lemma:psi properties} yields that for every $\alpha\in(1,+\infty)$ there exists an $a_{\alpha}\in(D(\rho\|\sigma),\dmax{\rho}{\sigma})$ such that
\begin{equation}\label{a alpha}
\phi(a_{\alpha})=a_{\alpha}(\alpha-1)-F(\alpha).
\end{equation}

\begin{corollary}\label{cor:achievability by NP}
For every $\alpha>1$, let $a_{\alpha}$ be as above, and let
$p_{n,\alpha}:=\{\Tr\rho_n S_{n}(a_{\alpha}),\Tr\rho_n(I_n- S_{n}(a_{\alpha}))\}$,
$q_{n,\alpha}:=\{\Tr\sigma_n S_{n}(a_{\alpha}),\Tr\sigma_n(I_n- S_{n}(a_{\alpha}))\}$
be the post-measurement states corresponding to the Neyman-Pearson test $S_n(a_{\alpha})$. Then
\begin{equation*}
\lim_{n\to+\infty}\frac{1}{n}\rsr{p_{n,\alpha}}{q_{n,\alpha}}{\alpha}=\rsrn{\rho}{\sigma}{\alpha}.
\end{equation*}
\end{corollary}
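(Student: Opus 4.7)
The plan is to write the classical Rényi divergence of the binary post-measurement distributions explicitly as a log-sum over the two outcomes, identify the exponential rate of each summand using Theorems \ref{thm:alpha} and \ref{thm:beta}, and verify that the dominant contribution matches $F(\alpha)$.

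First I would set $A_n := \Tr\rho_n S_n(a_\alpha)$ and $B_n := \Tr\sigma_n S_n(a_\alpha)$, so that $p_{n,\alpha} = (A_n, 1-A_n)$ and $q_{n,\alpha} = (B_n, 1-B_n)$. Since $p_{n,\alpha}$ is already a probability distribution, the normalization correction in the classical definition \eqref{Renyi def} vanishes, giving
\[
\tfrac{1}{n}\rsr{p_{n,\alpha}}{q_{n,\alpha}}{\alpha} \;=\; \tfrac{1}{n(\alpha-1)}\log\bigl[\,A_n^{\alpha} B_n^{1-\alpha} + (1-A_n)^{\alpha}(1-B_n)^{1-\alpha}\,\bigr].
\]
Theorems \ref{thm:alpha} and \ref{thm:beta} yield $\tfrac{1}{n}\log A_n \to -\phi(a_\alpha)$ and $\tfrac{1}{n}\log B_n \to -\{a_\alpha+\phi(a_\alpha)\}$, so a direct arithmetic combined with the defining identity \eqref{a alpha} for $a_\alpha$ gives
\[
\tfrac{1}{n}\log\bigl(A_n^{\alpha} B_n^{1-\alpha}\bigr) \;\longrightarrow\; -\alpha\phi(a_\alpha) - (1-\alpha)\bigl(a_\alpha+\phi(a_\alpha)\bigr) \;=\; (\alpha-1)a_\alpha - \phi(a_\alpha) \;=\; F(\alpha).
\]

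It remains to check that the second summand is asymptotically negligible. Because $A_n,B_n\to 0$ (both tend to zero exponentially, since $\phi(a_\alpha)>0$ and $a_\alpha+\phi(a_\alpha)>0$ on the relevant interval), one has $(1-A_n)^{\alpha}(1-B_n)^{1-\alpha}\to 1$, so its exponential rate is $0$. On the other hand $F(\alpha) = (\alpha-1)\rsrn{\rho}{\sigma}{\alpha} > 0$, as $\rho\neq\sigma$ forces the new Rényi divergence to be strictly positive (this follows from Corollary \ref{Renyi monotonicity} together with monotonicity of $D_\alpha\nw$ in $\alpha$ and $\lim_{\alpha\searrow 1}D_\alpha\nw(\rho\|\sigma) = D(\rho\|\sigma)>0$). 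Therefore the first summand dominates the log-sum, and dividing by $\alpha-1$ yields the stated limit $\rsrn{\rho}{\sigma}{\alpha}$.

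I do not anticipate a serious obstacle: Theorems \ref{thm:alpha} and \ref{thm:beta} supply exactly the exponential rates required, and the only care is to verify that the dominant exponent is strictly positive so that the subdominant summand washes out when one takes the $\log$ and divides by $n$.
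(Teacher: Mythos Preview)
Your proposal is correct and follows essentially the same approach as the paper's proof: both use Theorems \ref{thm:alpha} and \ref{thm:beta} to identify the exponential rates of the two summands in $Q_{\alpha}(p_{n,\alpha}\|q_{n,\alpha})$, check via \eqref{a alpha} that the first summand grows like $e^{nF(\alpha)}$ while the second tends to $1$, and use $F(\alpha)>0$ to conclude. Your write-up is in fact a bit more explicit than the paper's, which compresses the same steps into an informal ``$\sim$'' argument and omits the justification of $F(\alpha)>0$.
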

\begin{proof}
Omitting a standard $\ep-\delta$ argument, we can write Theorems \ref{thm:alpha} and \ref{thm:beta} as
$\Tr\rho_n S_{n}(a_{\alpha})\sim e^{-n\phi(a_{\alpha})}$ and
$\Tr\sigma_n S_{n}(a_{\alpha})\sim e^{-n(\phi(a_{\alpha})+a_{\alpha})}$, which then yields
\begin{equation*}
\bz\Tr\rho_n S_{n}(a_{\alpha})\jz^{\alpha}\bz\Tr\sigma_n S_{n}(a_{\alpha})\jz^{1-\alpha}
\sim
\exp\bz-n\left[\alpha\phi(a_{\alpha})+(1-\alpha)(\phi(a_{\alpha})+a_{\alpha})\right]\jz
=
\exp(n F(\alpha)),
\end{equation*}
where the last identity is due to \eqref{a alpha}. Note that $F(\alpha)>0$ for $\alpha>1$, and
$\lim_{n\to+\infty}\Tr\rho_n(I_n- S_{n}(a_{\alpha}))=
\lim_{n\to+\infty}\Tr\sigma_n(I_n- S_{n}(a_{\alpha}))=1$. Hence,
$Q_{\alpha}\nw(p_{n,\alpha}\|q_{n,\alpha})\sim \exp(nF(\alpha))$,
from which the assertion follows.
\end{proof}

\subsection{The strong converse exponent}

Consider the hypothesis testing problem from Section \ref{sec:hypotesting}. Our aim here is
to prove the identity \eqref{sc exponent equals converse Hoeffding}, i.e., that the
strong converse exponent $B_e^*(r)$, defined in \eqref{eq:72}, is equal to the converse Hoeffding bound $H_r^*(\rho\|\sigma)$ defined in
\eqref{converse Hoeffding}.
We will assume that $\rho\ne\sigma$ to avoid a trivial case, and that
$\supp\rho\subseteq\supp\sigma$ so that we actually have a strong converse (cf.~Remark
\ref{rem:supports}).

We start with the following lemma, which is a direct analogue of Nagaoka's proof of the strong converse to the quantum Stein's lemma \cite{Nagaoka2},
except that we use the new R\'enyi divergences instead of the old ones.

\begin{lemma}\label{lemma:converse rate lower bound}
For any $r\ge 0$, we have
\begin{align}
B_e^*(r)\ge 
H_r^*(\rho\|\sigma).
\label{sc rate lower bound}
\end{align}
\end{lemma}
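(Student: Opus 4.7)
My plan is to mimic Nagaoka's classical strong-converse argument, using the new R\'enyi divergences and the monotonicity under measurements that was already established in Lemma \ref{mono:measurement} (plus additivity on i.i.d.\ states).

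Fix any $\alpha>1$ and any sequence of tests $\{T_n\}_{n\in\bN}$ with $\limsup_n\frac{1}{n}\log\Tr\sigma_nT_n\le -r$. Write $a_n:=\Tr\rho_nT_n$ and $b_n:=\Tr\sigma_nT_n$, and consider the binary POVM $M_n:=\{T_n,I_n-T_n\}$. Since $F_{\alpha}$ tensorizes multiplicatively (so $F_{\alpha}(\rho_n\|\sigma_n)=nF_{\alpha}(\rho\|\sigma)$), Lemma \ref{mono:measurement} gives
\begin{equation*}
n(\alpha-1)\rsrn{\rho}{\sigma}{\alpha} \;=\; F_{\alpha}(\rho_n\|\sigma_n) \;\ge\; F_{\alpha}^{M_n}(\rho_n\|\sigma_n) \;=\; \log\bz a_n^{\alpha}b_n^{1-\alpha}+(1-a_n)^{\alpha}(1-b_n)^{1-\alpha}\jz.
\end{equation*}

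The second summand on the right is nonnegative, so it can be dropped, giving $n(\alpha-1)\rsrn{\rho}{\sigma}{\alpha}\ge \alpha\log a_n+(1-\alpha)\log b_n$. Dividing by $n\alpha$ and rearranging,
\begin{equation*}
-\frac{1}{n}\log a_n \;\ge\; \frac{\alpha-1}{\alpha}\left[-\frac{1}{n}\log b_n-\rsrn{\rho}{\sigma}{\alpha}\right].
\end{equation*}
Taking $\liminf_{n\to\infty}$ on both sides and using $\liminf_n(-\frac{1}{n}\log b_n)\ge r$, I get
\begin{equation*}
-\liminf_{n\to\infty}\frac{1}{n}\log\Tr\rho_nT_n \;\ge\; -\limsup_{n\to\infty}\frac{1}{n}\log a_n \;\ge\; \frac{\alpha-1}{\alpha}\bz r-\rsrn{\rho}{\sigma}{\alpha}\jz.
\end{equation*}
Taking the infimum over admissible test sequences and then the supremum over $\alpha>1$ yields $B_e^*(r)\ge H_r^*(\rho\|\sigma)$.

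There is no real obstacle here: the only nontrivial input is the monotonicity of $F_{\alpha}$ under the binary measurement $M_n$ for $\alpha>1$, which the paper already supplied as Lemma \ref{mono:measurement}. The arithmetic is a standard Markov-style manipulation of the classical R\'enyi divergence of the two Bernoulli distributions $(a_n,1-a_n)$ and $(b_n,1-b_n)$. One small point worth emphasizing in the write-up is that additivity $F_{\alpha}(\rho_n\|\sigma_n)=nF_{\alpha}(\rho\|\sigma)$ follows from the multiplicativity of both $(\cdot)^{\alpha}$ and the functional calculus across tensor products, so applying monotonicity on $n$ copies produces a single-letter bound. Matching the lower bound will of course be the harder direction, handled separately via the Neyman-Pearson analysis of Section \ref{sec:NP}.
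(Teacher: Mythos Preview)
Your proof is correct and follows essentially the same route as the paper's: both apply the measurement monotonicity of $F_{\alpha}$ (Lemma \ref{mono:measurement}) to the binary POVM $\{T_n,I_n-T_n\}$, drop the second summand, and rearrange into the Nagaoka-type bound \eqref{Nagaoka bound} before taking limits and optimizing over $\alpha>1$. The only cosmetic difference is that you make the additivity $F_{\alpha}(\rho_n\|\sigma_n)=nF_{\alpha}(\rho\|\sigma)$ explicit, whereas the paper uses it silently when passing from $\rsrn{\rho_n}{\sigma_n}{\alpha}$ to $\rsrn{\rho}{\sigma}{\alpha}$.
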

\begin{proof}
Let $T_n\in\L(\hil_n)$ be a test and let $p_n:=\bz\Tr\rho_n T_n,\Tr\rho_n (I-T_n)\jz$ and
$q_n:=\bz\Tr\sigma_n T_n,\Tr\sigma_n (I-T_n)\jz$ be the post-measurement states.
By the monotonicity of the R\'enyi relative entropies under measurements (Lemma \ref{mono:measurement}), we have, for any
$\alpha>1$,
\begin{align*}
\rsrn{\rho_n}{\sigma_n}{\alpha}
\ge
\rsrn{p_n}{q_n}{\alpha}
&\ge
\frac{1}{\alpha-1}\log\left[(\Tr\rho_n T_n)^{\alpha}(\Tr\sigma_n T_n)^{1-\alpha}\right]\\
&=
\frac{\alpha}{\alpha-1}\log(1-\alpha_n(T_n))-\log\beta_n(T_n),
\end{align*}
or equivalently,
\begin{align}\label{Nagaoka bound}
\frac{1}{n}\log(1-\alpha_n(T_n))
\le
\frac{\alpha-1}{\alpha}\left[\rsrn{\rho}{\sigma}{\alpha}+\frac{1}{n}\log\beta_n(T_n)\right].
\end{align}
If $\limsup_{n\to\infty}\frac{1}{n}\log\Tr\sigma_nT_n \le -r$
then
\begin{align*}
\limsup_{n\to\infty}\frac{1}{n}\log(1-\alpha_n(T_n))
\le
\frac{\alpha-1}{\alpha}\left[\rsrn{\rho}{\sigma}{\alpha}-r\right],\ds\ds\ds \alpha>1.
\end{align*}
Taking the infimum in $\alpha>1$, the statement follows.
\end{proof}

\begin{remark}
Using that the old R\'enyi relative entropies are also monotonic under measurements
\cite{H:text}, exactly the same argument as above yields that
\begin{align}\label{HN bound}
B_e^*(r)\ge
\sup_{1<\alpha}\frac{\alpha-1}{\alpha}\left[r-\rsro{\rho}{\sigma}{\alpha}
\right].
\end{align}
This was already pointed out in \cite{ON} with a restricted optimization over
$\alpha\in(1,2]$, and later extended by Hayashi to the above form \cite{H:text}.
\end{remark}
\medskip

Our goal in the rest of the section is to show that \eqref{sc rate lower bound} holds as an equality.
To start with, we give some alternative expressions for $H_r^*(\rho\|\sigma)$.
Let
\begin{equation}\label{amax}
a_{\max}:=\dmax{\rho}{\sigma},\ds\ds\text{and}\ds\ds
r_{\max}:=\phi(a_{\max})+a_{\max}.
\end{equation}
Note that
\begin{equation}\label{converse Hoeffding 1}
H_r^*(\rho\|\sigma)=\sup_{s\ge 0}\frac{rs-\psi(s)}{s+1}=\sup_{0\le u<1}\{ur-\tilde\psi(u)\},
\end{equation}
where
\begin{equation*}
\tilde\psi(u):=(1-u)\psi\bz \frac{u}{1-u}\jz,\ds\ds\ds u\in[0,1).
\end{equation*}
It is easy to see that $\tilde\psi'(u)=-\psi(s)+(1+s)\psi'(s)$ with the notational convention
$u=s/(s+1)$, and hence
\begin{equation}\label{psi at 0}
\tilde\psi(0)=\psi(0)=0,\ds\ds\ds
\tilde\psi'(0)=\psi'(0)=\sr{\rho}{\sigma},
\end{equation}
and
\begin{align*}
\lim_{u\nearrow 1}\tilde\psi'(u)&=\lim_{s\to+\infty}\bz s\psi'(s)-\psi(s)\jz+
\lim_{s\to+\infty}\psi'(s)=
\lim_{s\to+\infty}\phi\bz\psi'(s)\jz+\dmax{\rho}{\sigma}=
\phi(a_{\max})+a_{\max}\\
&=r_{\max}.
\end{align*}
It is also easy to see, by computing the second derivative, that $\tilde\psi$ is convex for
commuting $\rho$ and $\sigma$; convexity in the general case then follows the same way as in
Corollary \ref{cor:convexity}. Convexity and \eqref{psi at 0} yield
\begin{equation}\label{converse Hoeffding nullity}
H_r^*(\rho\|\sigma)=0,\ds\ds\ds r\le \sr{\rho}{\sigma}.
\end{equation}

\begin{lemma}\label{lemma:converse Hoeffding repr}
For any $r\ge 0$, we have
\begin{align}
H_r^*(\rho\|\sigma)
=\begin{cases}
r-a_r=\phi(a_r), & r< \phi(a_{\max})+a_{\max}, \\
r - \dmax{\rho}{\sigma}, & r\ge \phi(a_{\max})+a_{\max},
\end{cases}
\label{eq:73-3}
\end{align}
where $a_{\max}$ and $r_{\max}$ are defined in \eqref{amax}, and
$a_r$ is the unique solution of $r-a_r=\phi(a_r)$.
\end{lemma}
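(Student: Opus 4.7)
The plan is to exploit Legendre--Fenchel duality between $\psi$ and $\phi$. For any $a \ge 0$, the definition \eqref{eq:45} gives $as - \psi(s) \le \phi(a)$, so I would write the master inequality
\begin{align*}
\frac{rs - \psi(s)}{s+1} \;=\; \frac{(r - a)s + (as - \psi(s))}{s+1} \;\le\; \frac{(r - a)s + \phi(a)}{s+1}
\end{align*}
for every $s \ge 0$. The strategy is then to pick $a$ so that the right-hand side becomes constant (or monotone) in $s$, yielding a tight upper bound on $H_r^*(\rho\|\sigma) = \sup_{s \ge 0}(rs-\psi(s))/(s+1)$, which I would match by a single well-chosen value of $s$.

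For the generic range $\sr{\rho}{\sigma} \le r < r_{\max}$, I would first establish that $h(a) := a + \phi(a)$ is continuous and strictly increasing on $[\sr{\rho}{\sigma}, a_{\max}]$ with $h(\sr{\rho}{\sigma}) = \sr{\rho}{\sigma}$ and $h(a_{\max}) = r_{\max}$; this uses $\phi \equiv 0$ on $[0, \sr{\rho}{\sigma}]$ from \eqref{phi values}, strict positivity (hence strict monotonicity) of $\phi$ on $(\sr{\rho}{\sigma}, a_{\max}]$, and the convexity of $\psi$ (Corollary \ref{cor:convexity}). Hence there is a unique $a_r \in [\sr{\rho}{\sigma}, a_{\max})$ with $r - a_r = \phi(a_r)$. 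Plugging $a = a_r$ into the master inequality, the numerator collapses to $\phi(a_r)(s+1)$, so $(rs-\psi(s))/(s+1) \le \phi(a_r)$ for all $s$, giving $H_r^*(\rho\|\sigma) \le \phi(a_r) = r - a_r$. For the matching lower bound, I would pick $s_r \ge 0$ with $\psi'(s_r) = a_r$, which exists by continuity of $\psi'$ together with \eqref{phi'(0)} and \eqref{phi'(infty)}, so that the Legendre transform is attained at $s_r$: $\phi(a_r) = a_r s_r - \psi(s_r)$. Substitution then yields $(rs_r - \psi(s_r))/(s_r+1) = \phi(a_r)(s_r+1)/(s_r+1) = \phi(a_r)$, hence $H_r^*(\rho\|\sigma) \ge \phi(a_r)$.

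The case $r < \sr{\rho}{\sigma}$ is degenerate: since $\phi \equiv 0$ on $[0, \sr{\rho}{\sigma}]$ and strictly positive beyond, the unique solution of $r - a = \phi(a)$ is $a_r = r$ with $\phi(a_r) = 0$, consistent with $H_r^*(\rho\|\sigma) = 0$ from \eqref{converse Hoeffding nullity}. For $r \ge r_{\max}$, I would apply the master inequality with $a = a_{\max}$: since $r - a_{\max} \ge r_{\max} - a_{\max} = \phi(a_{\max})$, the function $s \mapsto ((r - a_{\max})s + \phi(a_{\max}))/(s+1)$ is non-decreasing with limit $r - a_{\max}$, giving $H_r^*(\rho\|\sigma) \le r - \dmax{\rho}{\sigma}$. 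The reverse inequality follows by taking $s \to +\infty$ in the definition of $H_r^*$ and using $\lim_{s\to+\infty}\psi(s)/s = \psi'(+\infty) = \dmax{\rho}{\sigma}$ from Lemma \ref{lemma:psi properties}, so that $\lim_{s\to+\infty}(rs-\psi(s))/(s+1) = r - \dmax{\rho}{\sigma}$.

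The main technical obstacle I anticipate is verifying that $h(a) = a + \phi(a)$ continuously and strictly monotonically sweeps $[\sr{\rho}{\sigma}, a_{\max}]$ onto $[\sr{\rho}{\sigma}, r_{\max}]$ and that the Legendre transform $\phi(a_r)$ is attained at some $s_r \ge 0$; both rest on the convexity of $\psi$ from Corollary \ref{cor:convexity} and the limiting derivatives in Lemma \ref{lemma:psi properties}. Once these are in place, the device of choosing $a = a_r$ to neutralise the $s$-dependence collapses the supremum to a single evaluation, which is the crux of the argument.
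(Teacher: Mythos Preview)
Your proof is correct and follows essentially the same route as the paper's: both use the Legendre--Fenchel inequality $as-\psi(s)\le\phi(a)$ with the choice $a=a_r$ (respectively $a=a_{\max}$) to obtain the upper bound on $H_r^*$, and the attainment of the supremum at some $s_r$ (respectively the limit $s\to+\infty$) for the matching lower bound. Your ``master inequality'' packaging is a slightly different algebraic organisation, but the underlying argument is identical to the paper's.
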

\begin{proof}
First, we consider the case $0\le r<r_{\max}$.
Note that $a\mapsto \phi(a)+a$ is strictly increasing and continuous on $(-\infty,a_{\max})$, and hence
for every $r<r_{\max}$ there exists a unique $a_r$ such that $r=\phi(a_r)+a_r$
By definition,
\begin{equation*}
\phi(a_r)\ge a_r s-\psi(s)=s(r-\phi(a_r))-\psi(s),\ds\ds\ds s\ge 0,
\end{equation*}
and equality holds in the above inequality for some $s_r\in[0,+\infty)$.
Rearranging, we get
\begin{equation*}
\phi(a_r)\ge\frac{sr-\psi(s)}{1+s},\ds\ds\ds s\ge 0,
\end{equation*}
with equality for $s_r$, and hence
\begin{equation*}
\phi(a_r)=\max_{s\ge 0}\frac{sr-\psi(s)}{1+s}.
\end{equation*}
Taking into account \eqref{converse Hoeffding 1}, this proves the assertion.

Next, assume that $r\ge r_{\max}$.
Note that
\begin{align}
\lim_{s\to +\infty}\frac{rs-\psi(s)}{s+1} =
r-\lim_{s\to +\infty}\frac{\psi(s)}{s+1}
 =r-\dmax{\rho}{\sigma},
\end{align}
due to \cite[Theorem 4]{Renyi_new}.
Hence it is enough to show that
\begin{align}
\frac{rs-\psi(s)}{s+1}\le r-\dmax{\rho}{\sigma}
\end{align}
for every $s\ge 0$. 
Note that $r\ge r_{\max}=\phi(a_{\max})+a_{\max}$ implies
\begin{align}
r-a_{\max}\ge\phi(a_{\max})\ge a_{\max}s-\psi(s)
\end{align}
for every $s\ge 0$, from which we obtain
\begin{align}
\frac{r+\psi(s)}{s+1}\ge a_{\max}.
\end{align}
Thus we have
\begin{align}
r-a_{\max}
\ge r-\frac{r+\psi(s)}{s+1}
=\frac{rs-\psi(s)}{s+1},
\end{align}
and hence $H_r^*(\rho\|\sigma)=r-\dmax{\rho}{\sigma}$, as required.
\end{proof}

Now we are ready to prove the identity \eqref{sc exponent equals converse Hoeffding}
for the strong converse exponent.

\begin{theorem}
\label{thm:exponent}
For any $r\ge 0$, we have
\begin{align}
B_e^*(r)
&=H_r^*(\rho\|\sigma).
\end{align}
\end{theorem}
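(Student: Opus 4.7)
The plan is to bracket $B_e^*(r)$ between the two matching bounds. The inequality $B_e^*(r)\ge H_r^*(\rho\|\sigma)$ is already furnished by Lemma \ref{lemma:converse rate lower bound}, so all the effort goes into constructing, for each $r\ge 0$, test sequences $\{T_n\}$ with $\limsup_n\frac{1}{n}\log\beta_n(T_n)\le -r$ and $\liminf_n\frac{1}{n}\log(1-\alpha_n(T_n))\ge -H_r^*(\rho\|\sigma)$. The natural candidates are the Neyman--Pearson tests $S_n(a)$ of Section \ref{sec:NP}, whose two trace asymptotics are pinned down by Theorems \ref{thm:alpha} and \ref{thm:beta} as $\Tr\rho_n S_n(a)\sim e^{-n\phi(a)}$ and $\Tr\sigma_n S_n(a)\sim e^{-n(\phi(a)+a)}$ for $a\in(\sr{\rho}{\sigma},\dmax{\rho}{\sigma})$.

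For the regime $r<r_{\max}$, Lemma \ref{lemma:converse Hoeffding repr} identifies $H_r^*(\rho\|\sigma)=\phi(a_r)$, where $a_r$ is the unique solution of $a_r+\phi(a_r)=r$. I would pick $T_n:=S_n(a_r+\delta)$ for small $\delta>0$. Because $a\mapsto a+\phi(a)$ is continuous and strictly increasing on $(\sr{\rho}{\sigma},a_{\max})$, Theorem \ref{thm:beta} yields $\limsup_n\frac{1}{n}\log\beta_n(T_n)=-(a_r+\delta+\phi(a_r+\delta))<-r$, securing the type~II constraint, while Theorem \ref{thm:alpha} gives $\liminf_n\frac{1}{n}\log(1-\alpha_n(T_n))=-\phi(a_r+\delta)$. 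Sending $\delta\searrow 0$ and invoking continuity of $\phi$ gives $B_e^*(r)\le\phi(a_r)=H_r^*(\rho\|\sigma)$; the subregime $r\le\sr{\rho}{\sigma}$ corresponds to $H_r^*=0$ and can be handled identically, or directly by Stein's lemma which already supplies tests with $\alpha_n\to 0$ and $\beta_n\sim e^{-nD}\le e^{-nr}$.

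For the regime $r\ge r_{\max}$, Lemma \ref{lemma:converse Hoeffding repr} gives $H_r^*(\rho\|\sigma)=r-\dmax{\rho}{\sigma}$, and the bare tests $S_n(a)$ are inadequate since $\Tr\sigma_n S_n(a)$ cannot decay faster than $e^{-nr_{\max}}$. My remedy is to rescale: fix $\delta>0$ small, set $a:=a_{\max}-\delta$, and define $T_n:=c_n S_n(a)$ with $c_n:=e^{-nr}/\Tr\sigma_n S_n(a)$. Theorem \ref{thm:beta}, combined with the strict inequality $r\ge r_{\max}>a+\phi(a)$, ensures $c_n\in(0,1]$ for all sufficiently large $n$, so $T_n$ is a legitimate test, and by construction $\beta_n(T_n)=e^{-nr}$. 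The ratio identity $\Tr\rho_n S_n(a)/\Tr\sigma_n S_n(a)\sim e^{na}$ implied by Theorems \ref{thm:alpha}--\ref{thm:beta} then gives $\Tr\rho_n T_n\sim e^{-n(r-a_{\max}+\delta)}$, and sending $\delta\searrow 0$ yields $B_e^*(r)\le r-\dmax{\rho}{\sigma}=H_r^*(\rho\|\sigma)$.

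The main obstacle is the high-rate regime $r\ge r_{\max}$: the unscaled Neyman--Pearson tests cannot meet the required type~II decay, so one must exploit the $e^{na}$ amplification between the two traces via the scalar $c_n$ and carefully track the order of the nested limits $n\to\infty$ followed by $\delta\searrow 0$. Once Theorems \ref{thm:alpha}--\ref{thm:beta} and the representation Lemma \ref{lemma:converse Hoeffding repr} are in hand, the rest of the argument reduces to routine bookkeeping.
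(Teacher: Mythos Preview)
Your proposal is correct and follows essentially the same three-case strategy as the paper. The only cosmetic differences are: in the middle regime $D(\rho\|\sigma)<r<r_{\max}$ the paper simply takes $T_n=S_n(a_r)$ (the limit $\limsup_n\frac{1}{n}\log\beta_n(T_n)=-r$ already satisfies the constraint, so your $\delta$-perturbation is unnecessary but harmless), and in the high-rate regime $r\ge r_{\max}$ the paper uses the deterministic scalar $e^{-n(r-a-\phi(a))}$ in place of your data-dependent $c_n=e^{-nr}/\Tr\sigma_n S_n(a)$, which has the mild advantage of being $\le 1$ for all $n$ rather than only eventually.
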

\begin{proof}
Since we have already shown $B_e^*(r)\ge H_r^*(\rho\|\sigma)$
in Lemma \ref{lemma:converse rate lower bound}, we only have to show the converse inequality
$B_e^*(r)\le H_r^*(\rho\|\sigma)$. Due to the definition \eqref{sc rate def2} of $B^*_e(r)$ as an infimum of rates, this is
equivalent to showing
that for any rate $R>H_r^*(\rho\|\sigma)$
there exists a sequence of tests $\{T_n\}_{n=1}^{\infty}$ satisfying
\begin{align}\label{semioptimal tests}
\limsup_{n\to\infty}\frac{1}{n}\log\Tr\sigma_nT_n \le -r
\ds\ds\ds\ds\ds\text{and}\ds\ds\ds\ds\ds
\liminf_{n\to\infty}\frac{1}{n}\log\Tr\rho_n T_n\ge -R.
\end{align}
We prove the claim by considering three different regions of $r$.
\begin{enumerate}
\item[(i)]
In the case $D(\rho\|\sigma)< r<r_{\max}$,
there exists a unique $a_r\in(\sr{\rho}{\sigma},\dmax{\rho}{\sigma})$ satisfying $r-a_r=\phi(a_r)$,
and Theorems \ref{thm:alpha} and \ref{thm:beta} yield
\begin{align*}
\lim_{n\to\infty}\frac{1}{n}\log\Tr\sigma_nS_n(a_{r}) &=-(\phi(a_r)+a_r)= -r,\\
\lim_{n\to\infty}\frac{1}{n}\log\Tr\rho_nS_n(a_{r}) &= -\phi(a_{r})=
- H_r^*(\rho\|\sigma),
\end{align*}
where the last identity is due to Lemma \ref{lemma:converse Hoeffding repr}.

\item[(ii)]
In the case $0\le r\le D(\rho\|\sigma)$, we have $H_r^*(\rho\|\sigma)=0$, according to
\eqref{converse Hoeffding nullity}.
For any $R>0$, we can find an $a\in(\sr{\rho}{\sigma},\dmax{\rho}{\sigma})$
such that $0<\phi(a)<R$. Note that $\phi(a)+a>D(\rho\|\sigma)\ge r$, and
Theorems \ref{thm:alpha} and \ref{thm:beta} yield
\begin{align*}
\lim_{n\to\infty}\frac{1}{n}\log\Tr\sigma_nS_n(a) &=-(\phi(a)+a)< -r,\\
\lim_{n\to\infty}\frac{1}{n}\log\Tr\rho_nS_n(a) &= -\phi(a)>-R.
\end{align*}

\item[(iii)] In the case $r\ge r_{\max}$,
we use a modification of the Neyman-Pearson tests,
following the method of the proof of Theorem 4 in \cite{NH2007}.
For every $a,r\in\bR$, let
\begin{align*}
T_{n}(r,a):=e^{-n\{r-a-\phi(a)\}}S_n(a).
\end{align*}
Note that for $r\ge r_{\max}$ we have
$H_r^*(\rho\|\sigma)=r-\dmax{\rho}{\sigma}$ due to Lemma \ref{lemma:converse Hoeffding repr}. Assume now that
$a\in\bz\sr{\rho}{\sigma},\dmax{\rho}{\sigma}\jz$. Then
$r>\phi(a)+a$, and hence $0\le T_{n}(r,a)\le I$, i.e., $T_{n}(r,a)$ is a test, and
\begin{align*}
\lim_{n\to\infty}\frac{1}{n}\log\Tr\sigma_n T_n(r,a)
&=-r+a+\phi(a)-(a+\phi(a))= -r,\\
\lim_{n\to\infty}\frac{1}{n}\log\Tr\rho_n T_n(r,a)
&=-r+a+\phi(a)-\phi(a)=-(r-a),
\end{align*}
by Theorems \ref{thm:alpha} and \ref{thm:beta}. Now for a given $R>H_r^*(\rho\|\sigma)=r-\dmax{\rho}{\sigma}$, we can find
an $a\in\bz\sr{\rho}{\sigma},\dmax{\rho}{\sigma}\jz$ such that $r-\dmax{\rho}{\sigma}<r-a<R$, and the assertion follows.
\end{enumerate}
\end{proof}

\begin{remark}\label{rem:diagonal}
It is easy to see, by applying a standard diagonal argument, that there exists a sequence of tests $\{T_n\}_{n\in\bN}$ such
that \eqref{semioptimal tests} holds with $H_r^*(\rho\|\sigma)$ in place of $R$, and the proof of
Theorem \ref{thm:exponent} yields that for this sequence, we actually have
\begin{align*}
\limsup_{n\to\infty}\frac{1}{n}\log\Tr\sigma_nT_n \le -r
\ds\ds\ds\ds\ds\text{and}\ds\ds\ds\ds\ds
\liminf_{n\to\infty}\frac{1}{n}\log\Tr\rho_n T_n=-H_r^*(\rho\|\sigma).
\end{align*}
Moreover, it is also possible to have
$\limsup_{n\to\infty}\frac{1}{n}\log\Tr\sigma_nT_n=-r$ above;
this is obvious in cases (i) and (iii) in the proof of Theorem \ref{thm:exponent}, and in
case (ii) this follows from the Hoeffding bound theorem \cite{Hayashi,Nagaoka}.
\end{remark}

\begin{remark}
The direct region $(0\le r<D(\rho\|\sigma))$
and the strong converse region $(r>D(\rho\|\sigma))$
in quantum hypothesis testing are considered to be dual,
and the theory of both regions can be developed logically independently of the other,
which is the approach that we followed here.

Following a different approach, one could prove
$B_e^*(r)\le H_r^*(\rho\|\sigma)$ in the case $0\le r<D(\rho\|\sigma)$ (case (ii) of the above proof)
based on Stein's lemma rather than our argument.
Indeed, applying \eqref{eq:39} with $a=r$, we  have $\Tr\sigma_nS_n(a) \le e^{-nr}$,
and at the same time,
the direct part of the quantum Stein's lemma \cite{HP}  yields
$\lim_{n\to\infty}\Tr\rho_nS_n(a)=1$.
Thus,
\begin{align*}
\limsup_{n\to\infty}\frac{1}{n}\log\Tr\sigma_nS_n(a) \le -r
\ds\ds\ds\ds\ds\text{and}\ds\ds\ds\ds\ds
\liminf_{n\to\infty}\frac{1}{n}\log\Tr\rho_nS_n(a)=0=H_r^*(\rho\|\sigma).
\end{align*}
\end{remark}

\begin{remark}
By Theorem \ref{thm:exponent} and \eqref{converse Hoeffding 1}, we have
\begin{equation*}
B_e^*(r)=H_r^*(\rho\|\sigma)=\sup_{0\le u<1}\{ru-\tilde\psi(u)\},
\end{equation*}
where $\tilde\psi(u)$
is a continuous convex function on $[0,1)$. Hence, $B_e^*(r)$ is the Legendre-Fenchel transform (polar function) of
$\tilde\psi$, and the bipolar theorem says that
\begin{equation}\label{inverse Legendre}
\sup_{r\ge 0}\{ur-B_e^*(r)\}=\tilde\psi(u)=\frac{\alpha-1}{\alpha}\rsrn{\rho}{\sigma}{\alpha},\ds\ds\ds \alpha>1,
\end{equation}
where in the last formula we set $\alpha:=1/(1-u)$ and used the definition \eqref{eq:41} of $\psi$. That is, the new R\'enyi relative entropies can be expressed essentially as the Legendre-Fenchel transform of the operational quantities $B_e^*(r),\,r\ge 0$.
A more direct operational interpretation is provided in the next section.
\end{remark}

\begin{remark}\label{Remark-Hayashi-2}
A possible proof for the following representation of the strong converse exponent:
\begin{align}
B_e^*(r)=\max_{s\ge 0}\frac{rs-\lim_{m\to\infty}\psi_m(s)}{s+1},
\label{eq:87}
\end{align}
where $\psi_m$ is defined in \eqref{eq:53},
has been outlined in Hayashi's book \cite{H:text}, although
it seems to have not been fully worked out. Apart from identifying the limit
$\lim_{m\to\infty}\psi_m(s)$ as $s\rsrn{\rho}{\sigma}{1+s}$, our approach here differs
from Hayashi's proposal also in that we prove the achievability part by computing explicitly the asymptotic error rates
of the Neyman-Pearson tests, providing yet another operational interpretation for the
new R\'enyi divergences.
\end{remark}
\medskip

We note that Theorem \ref{thm:exponent} yields an operational proof of the 
Lieb-Thirring inequality.
Indeed, combining \eqref{HN bound} with \eqref{inverse Legendre}, we get that
\begin{equation*}
\rsro{\rho}{\sigma}{\alpha}
\ge\rsrn{\rho}{\sigma}{\alpha},\ds\ds\ds\alpha>1,
\end{equation*}
or equivalently,
\begin{equation*}
\Tr\rho^{\alpha}\sigma^{1-\alpha}\ge
\Tr\bz\rho^{\half}\sigma^{\frac{1-\alpha}{\alpha}}\rho^{\half}\jz^{\alpha},\ds\ds\ds\alpha>1.
\end{equation*}
Introducing $A:=\rho^{\half}$ and $B:=\sigma^{\frac{1-\alpha}{\alpha}}$, the above can be rewritten as
\begin{equation}\label{LT inequality}
\Tr A^{\alpha}B^{\alpha}A^{\alpha}\ge
\Tr\bz ABA\jz^{\alpha},\ds\ds\ds\alpha>1.
\end{equation}
Since we were interested in hypothesis testing, we only derived Theorem \ref{thm:exponent}
for density operators; however, it is easy to see that it also holds, with obvious modifications, for arbitrary positive semidefinite operators. Hence we arrive at the following:
\begin{corollary}
[Lieb-Thirring inequality]
For any positive semidefinite operators $A$ and $B$,
\eqref{LT inequality} holds.
\end{corollary}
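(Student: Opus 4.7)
The plan is to deduce the Lieb--Thirring inequality by reading off the pointwise inequality $D^{\mathrm{(old)}}_\alpha(\rho\|\sigma)\ge D^{\mathrm{(new)}}_\alpha(\rho\|\sigma)$ for $\alpha>1$ from two different representations of the strong converse exponent $B_e^*(r)$, and then unfolding this inequality with the substitution $A:=\rho^{1/2}$, $B:=\sigma^{(1-\alpha)/\alpha}$ suggested in the text preceding the corollary.

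First I would combine Theorem~\ref{thm:exponent} with \eqref{converse Hoeffding 1} and the change of variable $u=(\alpha-1)/\alpha$ to write
\begin{equation*}
B_e^*(r)=\sup_{0\le u<1}\{ur-\tilde\psi^{\mathrm{(new)}}(u)\},
\qquad
\tilde\psi^{\mathrm{(new)}}(u)=\tfrac{\alpha-1}{\alpha}D^{\mathrm{(new)}}_\alpha(\rho\|\sigma).
\end{equation*}
Independently, the Nagaoka-style argument of Lemma~\ref{lemma:converse rate lower bound} applied with the OLD R\'enyi divergences (which, though not monotone under general CPTP maps for $\alpha>2$, are still monotone under measurements) gives the bound \eqref{HN bound},
\begin{equation*}
B_e^*(r)\ge\sup_{0\le u<1}\{ur-\tilde\psi^{\mathrm{(old)}}(u)\},
\qquad
\tilde\psi^{\mathrm{(old)}}(u)=\tfrac{\alpha-1}{\alpha}D^{\mathrm{(old)}}_\alpha(\rho\|\sigma).
\end{equation*}

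Next I would invoke the Fenchel--Moreau (bipolar) theorem: both $\tilde\psi^{\mathrm{(new)}}$ and $\tilde\psi^{\mathrm{(old)}}$ are convex and continuous on $[0,1)$ with value $0$ at $0$ (convexity of $\tilde\psi^{\mathrm{(new)}}$ is Corollary~\ref{cor:convexity}; convexity of $\tilde\psi^{\mathrm{(old)}}$ is the classical/commutative H\"older argument), so their Legendre--Fenchel transforms recover them. Since the transform reverses inequalities, the chain
\begin{equation*}
\sup_u\{ur-\tilde\psi^{\mathrm{(new)}}(u)\}=B_e^*(r)\ge\sup_u\{ur-\tilde\psi^{\mathrm{(old)}}(u)\}
\end{equation*}
implies $\tilde\psi^{\mathrm{(new)}}(u)\le\tilde\psi^{\mathrm{(old)}}(u)$ pointwise, i.e.\ $D^{\mathrm{(new)}}_\alpha(\rho\|\sigma)\le D^{\mathrm{(old)}}_\alpha(\rho\|\sigma)$ for every $\alpha>1$. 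Writing this out explicitly gives
\begin{equation*}
\Tr\rho^\alpha\sigma^{1-\alpha}\ge\Tr\bigl(\rho^{1/2}\sigma^{(1-\alpha)/\alpha}\rho^{1/2}\bigr)^\alpha,
\end{equation*}
which, under $A:=\rho^{1/2}$ and $B:=\sigma^{(1-\alpha)/\alpha}$, is exactly \eqref{LT inequality} for \emph{density} operators with $\supp\rho\subseteq\supp\sigma$.

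Finally I would upgrade to arbitrary positive semidefinite $A,B$ by two routine reductions: (i) homogeneity, since both sides of \eqref{LT inequality} scale as $t^{2\alpha}s^\alpha$ under $A\mapsto tA$, $B\mapsto sB$, so one may normalize to unit trace; and (ii) a limit $\sigma\mapsto\sigma+\varepsilon I$, $\varepsilon\searrow 0$, to remove the support assumption, using continuity of both sides in $B$ for fixed $\alpha>1$. The main obstacle is really conceptual rather than computational: one must be careful that the Fenchel--Moreau step genuinely applies, which needs the mild regularity of $\tilde\psi^{\mathrm{(old)}}$ and $\tilde\psi^{\mathrm{(new)}}$ noted above; once that is in place, no new inequality is proved by hand and the operational statement of Theorem~\ref{thm:exponent} does all the work.
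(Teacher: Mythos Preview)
Your proposal is correct and follows essentially the same route as the paper: combine the lower bound \eqref{HN bound} with the identity coming from Theorem~\ref{thm:exponent} to deduce $D_{\alpha}^{\mathrm{(old)}}\ge D_{\alpha}^{\mathrm{(new)}}$ for $\alpha>1$, then substitute $A=\rho^{1/2}$, $B=\sigma^{(1-\alpha)/\alpha}$ and reduce to the general case by homogeneity and continuity. The only minor difference is that you invoke the bipolar theorem for \emph{both} $\tilde\psi^{\mathrm{(new)}}$ and $\tilde\psi^{\mathrm{(old)}}$, whereas the paper uses it only once (in \eqref{inverse Legendre}) and then reads off the inequality directly from \eqref{HN bound}; this spares you the need to verify convexity of $\tilde\psi^{\mathrm{(old)}}$, though that verification is straightforward via the perspective construction. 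Also, in your final reduction the perturbation should be on $B$ (i.e.\ $B\mapsto B+\varepsilon I$) rather than on $\sigma$, since for general $A,B$ you construct $\sigma$ from $B$ via $\sigma=B^{-\alpha/(\alpha-1)}$, which requires $B$ invertible; this is a cosmetic point and your continuity argument goes through.
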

\medskip

To close the section, we give one more representation of $H_r^*(\rho\|\sigma)$. This is closely related to the information spectrum approach \cite{NH2007}, and although we didn't need it in our proof for the strong converse exponent, an alternative proof could be given based on this representation.

\begin{lemma}
For any $r\ge 0$, we have
\begin{align}
H_r^*(\rho\|\sigma)
&=\inf_{a\in\R}\max\{\phi(a),r-a\}\label{eq:73-2}
\\
&=
\inf\Set{ \max\{\phi(a),r-a\} | \sr{\rho}{\sigma}<a<\dmax{\rho}{\sigma} }.
\label{eq:73-5}
\end{align}
\end{lemma}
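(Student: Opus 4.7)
The plan is to analyze how the two functions $a\mapsto\phi(a)$ and $a\mapsto r-a$ compete in the maximum, and then match the infimum against the closed-form expressions for $H_r^*(\rho\|\sigma)$ provided by Lemma \ref{lemma:converse Hoeffding repr}. Recall from the properties of $\phi$ (established after Lemma \ref{lemma:psi properties}) that $\phi$ vanishes on $(-\infty,\sr{\rho}{\sigma}]$, is strictly increasing and continuous on $[\sr{\rho}{\sigma},\dmax{\rho}{\sigma}]$, and equals $+\infty$ on $(\dmax{\rho}{\sigma},+\infty)$, while $a\mapsto r-a$ is strictly decreasing and continuous. Hence $a\mapsto \max\{\phi(a),r-a\}$ is continuous on $(-\infty,\dmax{\rho}{\sigma})$ and tends to $+\infty$ as $a\to-\infty$ or $a\to\dmax{\rho}{\sigma}^+$, so only the range $(\sr{\rho}{\sigma},\dmax{\rho}{\sigma})$ is relevant, which gives \eqref{eq:73-5} once \eqref{eq:73-2} is proved.

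For \eqref{eq:73-2}, I split on whether $r$ lies below or above $r_{\max}=\phi(a_{\max})+a_{\max}$. If $\sr{\rho}{\sigma}<r<r_{\max}$, Lemma \ref{lemma:converse Hoeffding repr} gives a unique $a_r\in(\sr{\rho}{\sigma},\dmax{\rho}{\sigma})$ with $\phi(a_r)=r-a_r=H_r^*(\rho\|\sigma)$. For $a<a_r$, strict monotonicity yields $r-a>r-a_r$, while for $a>a_r$ one gets $\phi(a)>\phi(a_r)$; in either case $\max\{\phi(a),r-a\}>\phi(a_r)$, and equality is attained at $a=a_r$. This proves $\inf_a\max\{\phi(a),r-a\}=\phi(a_r)=H_r^*(\rho\|\sigma)$. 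The boundary case $0\le r\le\sr{\rho}{\sigma}$ is handled similarly: here $H_r^*(\rho\|\sigma)=0$ by \eqref{converse Hoeffding nullity}, and choosing $a=r$ gives $\max\{\phi(r),0\}=\phi(r)$, which can be made arbitrarily close to $0$ by taking $a\nearrow r\wedge\sr{\rho}{\sigma}$; nonnegativity of the max then pins down the infimum.

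If $r\ge r_{\max}$, then for every $a\in(\sr{\rho}{\sigma},\dmax{\rho}{\sigma})$ we have $r-a>r-\dmax{\rho}{\sigma}\ge\phi(a_{\max})\ge\phi(a)$, so $\max\{\phi(a),r-a\}=r-a$, whose infimum over this interval is $r-\dmax{\rho}{\sigma}$, attained in the limit $a\nearrow\dmax{\rho}{\sigma}$. Together with the previous reduction, this gives $\inf_a\max\{\phi(a),r-a\}=r-\dmax{\rho}{\sigma}$, which equals $H_r^*(\rho\|\sigma)$ by the second branch of Lemma \ref{lemma:converse Hoeffding repr}.

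There is no genuine obstacle: the argument is elementary convex analysis based on the already established monotonicity, continuity, and boundary behaviour of $\phi$. The only place to be a little careful is ensuring that the case distinction matches exactly the two branches of \eqref{eq:73-3}, and that for $r\ge r_{\max}$ the infimum is a limit rather than an attained minimum, which is why the statement is written as an infimum rather than a minimum.
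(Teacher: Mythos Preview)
Your approach is essentially the same as the paper's: split on whether $r<r_{\max}$ or $r\ge r_{\max}$, invoke Lemma \ref{lemma:converse Hoeffding repr} for the value of $H_r^*(\rho\|\sigma)$, and locate the optimizing $a$ using the monotonicity of $\phi$ versus the strict decrease of $a\mapsto r-a$. The argument is correct in substance, with two minor slips worth tightening.

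First, your reduction to the interval $\bz\sr{\rho}{\sigma},\dmax{\rho}{\sigma}\jz$ is not justified by the stated reasons: continuity on $(-\infty,\dmax{\rho}{\sigma})$ together with blow-up at $a\to-\infty$ and $a\to\dmax{\rho}{\sigma}^+$ does \emph{not} by itself exclude the half-line $(-\infty,\sr{\rho}{\sigma}]$. What you actually need (and have already recorded) is that $\phi\equiv 0$ there while $r-a$ is non-increasing, so $\max\{\phi(a),r-a\}$ is non-increasing on $(-\infty,\sr{\rho}{\sigma}]$ and attains its infimum over that half-line at the endpoint $a=\sr{\rho}{\sigma}$; continuity from the right then shows this value is no smaller than the infimum over the open interval. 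The paper avoids this issue by verifying \eqref{eq:73-5} separately in each case rather than by a one-shot reduction.

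Second, your handling of $0\le r\le\sr{\rho}{\sigma}$ is more convoluted than necessary: with $a=r$ you get $\max\{\phi(r),r-r\}=\max\{0,0\}=0$ \emph{exactly} (since $\phi(r)=0$ for $r\le\sr{\rho}{\sigma}$), so no ``arbitrarily close'' limiting argument is needed; nonnegativity of $\phi$ then gives $\inf_a\max\{\phi(a),r-a\}=0=H_r^*(\rho\|\sigma)$ directly. The paper treats this uniformly with the case $\sr{\rho}{\sigma}<r<r_{\max}$ by noting $a_r=r$ solves $r=\phi(a_r)+a_r$.
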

\begin{proof}
Let $a_{\max}$ and $r_{\max}$ as in \eqref{amax}.
First, we consider the case $0\le r<r_{\max}$.
Let $a_r$ be the unique solution of $r=\phi(a_r)+a_r$, as in the proof of Lemma \ref{lemma:converse Hoeffding repr}.
Then
\begin{equation*}
\max\{\phi(a_r),r-a_r\}=\phi(a_r)=r-a_r.
\end{equation*}
Now if $a<a_r$ then $r-a>r-a_r$ and $\phi(a)\le\phi(a_r)$, which implies
$\max\{\phi(a),r-a\}=r-a>r-a_r$.
On the other hand, if $a>a_r$ then $r-a<r-a_r$, while $\phi(a)\ge\phi(a_r)$, and hence
$\max\{\phi(a),r-a\}=\phi(a)\ge\phi(a_r)$. Thus
\begin{align}
R(r):=\inf_{a\in\bR}\max\{\phi(a),r-a\}=
\max\{\phi(a_r),r-a_r\}=
\phi(a_r)=r-a_r,
\label{eq:78}
\end{align}
and \eqref{eq:73-2} follows by taking into account \eqref{eq:73-3}.

Note that when $\sr{\rho}{\sigma}<r<r_{\max}$ then $\sr{\rho}{\sigma}<a_r<\dmax{\rho}{\sigma}$, and
\eqref{eq:73-5} is immediate from \eqref{eq:78}.
In the case $0\le r\le D(\rho\|\sigma)$,
we have $r=a_r$ and $R(r)=\phi(a_r)=r-a_r=0$.
On the other hand, for every $\sr{\rho}{\sigma}<a<\dmax{\rho}{\sigma}$ we have
$\phi(a)> 0>r-a$, and thus
\begin{align*}
\inf\Set{ \max\{\phi(a),r-a\} | \sr{\rho}{\sigma}<a<\dmax{\rho}{\sigma} }
&=\inf\Set{ \phi(a) | \sr{\rho}{\sigma}<a<\dmax{\rho}{\sigma} }\\
&=0=R(r),
\end{align*}
proving \eqref{eq:73-5}.

Next, assume that $r\ge r_{\max}$. Then $r\ge\phi(a)+a$, or equivalently,
$r-a\ge\phi(a)$ for every $a\le a_{\max}$, and hence
$\max\{\phi(a),r-a\}=r-a$ for $a\le a_{\max}$, while for $a> a_{\max}$ we have
$\max\{\phi(a),r-a\}=\phi(a)=+\infty$. Hence,
\begin{align*}
R(r)=\inf_{a\in\bR}\max\{\phi(a),r-a\}&=
\inf\Set{ \max\{\phi(a),r-a\} | \sr{\rho}{\sigma}<a<\dmax{\rho}{\sigma} }\\
&=\inf_{a\le a_{\max}}\{r-a\}=r-a_{\max}=r-\dmax{\rho}{\sigma}.
\end{align*}
Taking into account \eqref{eq:73-3}, we get \eqref{eq:73-2} and \eqref{eq:73-5}.
\end{proof}

\subsection{Representation as cutoff rates}

In the setting of Section \ref{sec:hypotesting},
let
\begin{equation*}
\alpha_{n,r}:=\alpha_{e^{-nr}}(\rho^{\otimes n}\|\sigma^{\otimes n}):=\min\{\Tr\rho_n (I-T)\,:\,0\le T\le I,\,\Tr\sigma_n T\le e^{-nr}\}.
\end{equation*}
Following \cite{Csiszar}, we define the \ki{generalized $\kappa$-cutoff rate}
$C_{\kappa}(\rho\|\sigma)$ for any
$\kappa>0$ as the smallest $r_0$ such that
\begin{equation}\label{cutoff def}
\limsup_{n\to\infty}\frac{1}{n}\log(1-\alpha_{n,r})\le-\kappa(r-r_0),\ds\ds\ds r>0.
\end{equation}
As before, we assume that $\supp\rho\subseteq\supp\sigma$ and $\rho\ne\sigma$. 

\begin{lemma}\label{lemma:alphanr}
For every $r>0$, 
\begin{align*}
\lim_{n\to+\infty}\frac{1}{n}\log(1-\alpha_{n,r})=-H_r^*(\rho\|\sigma).
\end{align*}
\end{lemma}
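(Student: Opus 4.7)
The plan is to prove matching upper and lower bounds on $\frac{1}{n}\log(1-\alpha_{n,r})$, closely following the argument of Theorem \ref{thm:exponent}. The essential adaptation is that the constraint $\Tr\sigma_n T\le e^{-nr}$ here is sharp at every $n$, rather than asymptotic. It is convenient to rewrite the quantity of interest as $1-\alpha_{n,r}=\max\{\Tr\rho_n T:\,0\le T\le I_n,\,\Tr\sigma_n T\le e^{-nr}\}$.

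For the upper bound $\limsup_{n\to\infty}\frac{1}{n}\log(1-\alpha_{n,r})\le-H_r^*(\rho\|\sigma)$, I would apply the Nagaoka-type inequality \eqref{Nagaoka bound} from the proof of Lemma \ref{lemma:converse rate lower bound} to the optimal test $T_n^*$ at each $n$. Since $T_n^*$ satisfies $\Tr\sigma_n T_n^*\le e^{-nr}$ by definition, this yields $\frac{1}{n}\log(1-\alpha_{n,r})\le\frac{\alpha-1}{\alpha}\bz\rsrn{\rho}{\sigma}{\alpha}-r\jz$ for every $\alpha>1$. The right-hand side is independent of $n$, so taking $\limsup_{n\to\infty}$ and then the infimum over $\alpha>1$ gives the bound, using the definition \eqref{converse Hoeffding} of $H_r^*$.

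For the lower bound $\liminf_{n\to\infty}\frac{1}{n}\log(1-\alpha_{n,r})\ge-H_r^*(\rho\|\sigma)$, I would exhibit concrete feasible tests in two regimes. When $r<r_{\max}$, let $a_r$ be as in Lemma \ref{lemma:converse Hoeffding repr} and fix any $a$ with $\max\{a_r,\sr{\rho}{\sigma}\}<a<\dmax{\rho}{\sigma}$; then $\phi(a)+a>r$, so Theorem \ref{thm:beta} guarantees $\Tr\sigma_n S_n(a)\le e^{-nr}$ for all sufficiently large $n$, i.e., $S_n(a)$ is feasible, hence $1-\alpha_{n,r}\ge\Tr\rho_n S_n(a)$ eventually, and Theorem \ref{thm:alpha} gives $\liminf_{n\to\infty}\frac{1}{n}\log(1-\alpha_{n,r})\ge-\phi(a)$. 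Letting $a$ decrease to $\max\{a_r,\sr{\rho}{\sigma}\}$ and invoking continuity of $\phi$ on $\bz\sr{\rho}{\sigma},\dmax{\rho}{\sigma}\jz$ yields $\liminf\ge-\phi(a_r)=-H_r^*(\rho\|\sigma)$ (noting $\phi(a_r)=0=H_r^*(\rho\|\sigma)$ when $r\le\sr{\rho}{\sigma}$). For $r\ge r_{\max}$, the Neyman--Pearson test $S_n(a)$ alone does not suffice since its type II decay rate never exceeds $r_{\max}$; instead, I would use the rescaled tests $T_n(r',a):=e^{-n(r'-a-\phi(a))}S_n(a)$ from case (iii) of the proof of Theorem \ref{thm:exponent}, with $r'>r$ and $a\in\bz\sr{\rho}{\sigma},\dmax{\rho}{\sigma}\jz$. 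By Theorems \ref{thm:alpha} and \ref{thm:beta}, $\Tr\sigma_n T_n(r',a)\sim e^{-nr'}$ (hence feasible for $n$ large) and $\Tr\rho_n T_n(r',a)\sim e^{-n(r'-a)}$, so letting $a\nearrow\dmax{\rho}{\sigma}$ and $r'\searrow r$ produces $\liminf\ge-(r-\dmax{\rho}{\sigma})=-H_r^*(\rho\|\sigma)$ via Lemma \ref{lemma:converse Hoeffding repr}.

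The principal technical point is guaranteeing feasibility at every sufficiently large $n$ rather than only in an asymptotic sense; this is why one must pick $a$ strictly above the critical value $a_r$ (and $r'$ strictly above $r$ in the high-$r$ regime) and then recover the sharp exponent by continuity of $\phi$. Once this is handled, all ingredients are already in place from Section \ref{sec:NP} and the proof of Theorem \ref{thm:exponent}.
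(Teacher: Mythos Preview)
Your proposal is correct and follows essentially the same approach as the paper. The upper bound is identical; for the lower bound, the paper streamlines your case analysis by invoking Remark~\ref{rem:diagonal} as a black box (which already packages the Neyman--Pearson and rescaled-test constructions from the proof of Theorem~\ref{thm:exponent}) to obtain, for each $r'>r$, a test sequence with asymptotic type~II rate $\le -r'$ and type~I success rate $\ge -H_{r'}^*(\rho\|\sigma)$, and then lets $r'\searrow r$ using the continuity of $r\mapsto H_r^*(\rho\|\sigma)$---exactly the ``overshoot and recover by continuity'' mechanism you spell out explicitly.
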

\begin{proof}
Consider the inequality \eqref{Nagaoka bound}.
Taking the supremum over all test $T_n$ such that $\Tr\sigma_n T_n\le e^{-nr}$, we get 
\begin{align*}
\frac{1}{n}\log(1-\alpha_{n,r})
\le
\frac{\alpha-1}{\alpha}\left[\rsrn{\rho}{\sigma}{\alpha}-r\right].
\end{align*}
Taking now the limsup in $n$ and the infimum in $\alpha$, we obtain
\begin{align}\label{cutoff limsup}
\limsup_{n\to+\infty}\frac{1}{n}\log(1-\alpha_{n,r})
\le
-H_r^*(\rho\|\sigma).
\end{align}

According to Remark \ref{rem:diagonal}, for every $r'>0$,
there exists a sequence of tests $T_{n,r'}$,$\,n\geq1$, such that
\begin{align}\label{alphanr1}
\limsup_{n\to+\infty}\frac{1}{n}\log\Tr\sigma_nT_{n,r'}
\le
-r'\ds\ds\ds\text{ and }\ds\ds\ds
\liminf_{n\to+\infty}\frac{1}{n}\log\Tr\rho_n T_{n,r'} 
\ge
-H_{r'}(\rho\|\sigma).
\end{align}
Hence, for any $r'>r$, there exists an $N_{r'}$ such that for all $n>N_{r'}$, 
$\Tr\sigma_nT_{n,r'}\le e^{-nr}$, and thus $\Tr\rho_n T_{n,r'}\le 1-\alpha_{n,r}$. By the second inequality in 
\eqref{alphanr1},
\begin{equation}
\label{sc proof1}
\liminf_{n\to+\infty}\frac{1}{n}\log(1-\alpha_{n,r}) 
\ge
\liminf_{n\to+\infty}\frac{1}{n}\log\Tr\rho_n T_{n,r'} 
\ge
-H_{r'}(\rho\|\sigma).
\end{equation}
From the definition \eqref{converse Hoeffding} of the converse Hoeffding divergence,
it is clear that $r\mapsto H_{r}^*(\rho\|\sigma)$ is a monotone increasing convex function on $(0,+\infty)$.
Moreover, Lemma \ref{lemma:converse Hoeffding repr} implies that 
$H_{r}^*(\rho\|\sigma)$ is finite for every $r>0$. Thus,
$r\mapsto H_{r}^*(\rho\|\sigma)$ is
continuous on $(0,+\infty)$, and \eqref{sc proof1} yields
\begin{align}\label{cutoff liminf}
\liminf_{n\to+\infty}\frac{1}{n}\log(1-\alpha_{n,r}) 
\ge
\sup_{r'>r}-H_{r'}(\rho\|\sigma)=-H_{r}(\rho\|\sigma).
\end{align}
Finally, \eqref{cutoff limsup} and \eqref{cutoff liminf} yield the assertion.
\end{proof}

\begin{theorem}\label{thm:cutoff}
For every $\kappa\in(0,1)$,
\begin{equation*}
C_{\kappa}(\rho\|\sigma)=\rsrn{\rho}{\sigma}{\frac{1}{1-\kappa}}.
\end{equation*}
\end{theorem}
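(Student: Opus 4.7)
The plan is to combine Lemma \ref{lemma:alphanr}, which converts the cutoff-rate definition into a statement about $H_r^*$, with the Legendre-Fenchel inversion identity \eqref{inverse Legendre}, which identifies $D_{\alpha}^{\mathrm{(new)}}$ as a polar function of $B_e^*(\cdot)=H_r^*(\rho\|\sigma)$.

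First I would unfold the definition of the $\kappa$-cutoff rate: by Lemma \ref{lemma:alphanr}, the condition \eqref{cutoff def} is equivalent to $-H_r^*(\rho\|\sigma)\le-\kappa(r-r_0)$ for every $r>0$, i.e., $r_0\ge r-\kappa^{-1}H_r^*(\rho\|\sigma)$ for every $r>0$. Taking the smallest such $r_0$ yields the closed-form expression
\begin{align*}
C_{\kappa}(\rho\|\sigma)=\sup_{r>0}\left[r-\frac{1}{\kappa}H_r^*(\rho\|\sigma)\right]
=\frac{1}{\kappa}\sup_{r>0}\left[\kappa r-H_r^*(\rho\|\sigma)\right].
\end{align*}
Since $H_0^*(\rho\|\sigma)=0$ by \eqref{converse Hoeffding nullity} and $r\mapsto H_r^*(\rho\|\sigma)$ is continuous, extending the supremum to $r\ge 0$ does not change its value.

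Next, I would invoke Theorem \ref{thm:exponent} to replace $H_r^*(\rho\|\sigma)$ by $B_e^*(r)$, and then apply the Legendre-Fenchel identity \eqref{inverse Legendre}, which states that
\begin{align*}
\sup_{r\ge 0}\{ur-B_e^*(r)\}=\tilde\psi(u)=\frac{\alpha-1}{\alpha}\rsrn{\rho}{\sigma}{\alpha}
\ds\text{with}\ds \alpha=\frac{1}{1-u}.
\end{align*}
Setting $u:=\kappa\in(0,1)$, so that $\alpha=\frac{1}{1-\kappa}$ and $\frac{\alpha-1}{\alpha}=\kappa$, gives
\begin{align*}
C_{\kappa}(\rho\|\sigma)=\frac{1}{\kappa}\cdot\frac{\alpha-1}{\alpha}\rsrn{\rho}{\sigma}{\alpha}
=\rsrn{\rho}{\sigma}{\frac{1}{1-\kappa}},
\end{align*}
which is the claimed identity.

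There is no real obstacle here — the work has already been done in establishing Lemma \ref{lemma:alphanr}, Theorem \ref{thm:exponent}, and the Legendre transform formula \eqref{inverse Legendre}. The only mild care needed is in the boundary behaviour at $r=0$ (resolved by $H_0^*=0$ and continuity) and in tracking the substitution $\alpha=1/(1-\kappa)$ so that the factor $\kappa$ cancels the prefactor $(\alpha-1)/\alpha$. One could equivalently give a self-contained minimax argument by writing $H_r^*(\rho\|\sigma)$ explicitly as $\sup_{\alpha>1}\frac{\alpha-1}{\alpha}[r-\rsrn{\rho}{\sigma}{\alpha}]$ and observing that in $r-\frac{1}{\kappa}\frac{\alpha-1}{\alpha}(r-D_\alpha)$ the coefficient of $r$ vanishes precisely at $\alpha=\frac{1}{1-\kappa}$, leaving the constant $\rsrn{\rho}{\sigma}{1/(1-\kappa)}$; but the Legendre route via \eqref{inverse Legendre} is shorter and conceptually cleaner.
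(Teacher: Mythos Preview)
Your proof is correct and follows essentially the same route as the paper: both invoke Lemma \ref{lemma:alphanr} to reduce the cutoff condition \eqref{cutoff def} to the inequality $H_r^*(\rho\|\sigma)\ge\kappa(r-r_0)$ and then identify the smallest admissible $r_0$ via Legendre duality between $H_r^*$ and $\tilde\psi$. The only cosmetic difference is that the paper works directly from the representation \eqref{converse Hoeffding 1}, noting that $H_r^*(\rho\|\sigma)\ge\kappa r-\tilde\psi(\kappa)$ with equality at $r_\kappa=\tilde\psi'(\kappa)$, whereas you pass through Theorem \ref{thm:exponent} and the bipolar identity \eqref{inverse Legendre}; these are the same duality statement read in opposite directions.
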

\begin{proof}
By Lemma \ref{lemma:alphanr} and \eqref{converse Hoeffding 1}, we have
\begin{equation*}
\lim_{n\to\infty}\frac{1}{n}\log(1-\alpha_{n,r})
=
-H_r^*(\rho\|\sigma)=-\sup_{0\le u<1}\{ru-\tilde\psi(u)\}.
\end{equation*}
By definition, we have
\begin{equation*}
H_r^*(\rho\|\sigma)\ge r\kappa-\tilde\psi(\kappa)=\kappa\bz r-\frac{1}{\kappa}\tilde\psi(\kappa) \jz,
\end{equation*}
and the above inequality holds with equality for $r_{\kappa}:=\tilde\psi'(\kappa)$, and hence
\begin{align*}
\frac{1}{\kappa}\tilde\psi(\kappa)&=
\frac{1}{\kappa}(1-\kappa)\psi\bz\frac{\kappa}{1-\kappa}\jz
=
\rsrn{\rho}{\sigma}{\frac{1}{1-\kappa}}
\end{align*}
is the smallest $r_0$ for which \eqref{cutoff def} holds.
\end{proof}

The above Theorem immediately yields the following operational interpretation of the new R\'enyi relative entropies:
\begin{corollary}\label{cor:cutoff}
For every $\alpha>1$,
\begin{equation*}
\rsrn{\rho}{\sigma}{\alpha}=C_{\frac{\alpha-1}{\alpha}}(\rho\|\sigma).
\end{equation*}
\end{corollary}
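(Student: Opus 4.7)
The plan is to observe that Corollary \ref{cor:cutoff} is a purely cosmetic reparametrization of Theorem \ref{thm:cutoff}, so essentially no new work is required beyond checking that the change of variable is a bijection on the correct range.

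First, I would verify the algebra of the substitution. Setting $\kappa:=\frac{\alpha-1}{\alpha}=1-\frac{1}{\alpha}$, one has $1-\kappa=\frac{1}{\alpha}$, and hence $\frac{1}{1-\kappa}=\alpha$. Moreover, the map $\alpha\mapsto\kappa$ is a strictly increasing bijection from $(1,+\infty)$ onto $(0,1)$, with the claimed inverse. Therefore the parameter range $\alpha>1$ in the corollary corresponds exactly to the parameter range $\kappa\in(0,1)$ for which Theorem \ref{thm:cutoff} is stated.

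Second, I would simply apply Theorem \ref{thm:cutoff} with this value of $\kappa$:
\begin{equation*}
C_{\frac{\alpha-1}{\alpha}}(\rho\|\sigma)
=\rsrn{\rho}{\sigma}{\frac{1}{1-\frac{\alpha-1}{\alpha}}}
=\rsrn{\rho}{\sigma}{\alpha},
\end{equation*}
which is precisely the claimed identity.

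There is no real obstacle, since all the analytical content (the achievability via Neyman--Pearson tests and the optimality via monotonicity of $D_{\alpha}^{\mathrm{(new)}}$) has already been absorbed into Theorem \ref{thm:cutoff} via Lemma \ref{lemma:alphanr} and the Legendre--Fenchel duality between $\tilde\psi$ and $H_r^*(\rho\|\sigma)$. The only thing worth highlighting in the short write-up is the interpretation: the cutoff rate exponent $\kappa=(\alpha-1)/\alpha$ is exactly the prefactor that appears in front of $\rsrn{\rho}{\sigma}{\alpha}-r$ inside the sup defining the converse Hoeffding divergence \eqref{converse Hoeffding}, so the parametrization by $\alpha$ rather than by $\kappa$ is in fact the more natural one from the point of view of R\'enyi divergences, and this is the form that directly provides the promised operational interpretation of $D_{\alpha}^{\mathrm{(new)}}$ for $\alpha>1$.
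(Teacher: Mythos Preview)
Your proposal is correct and is exactly the same as the paper's approach: the paper states the corollary as an immediate consequence of Theorem~\ref{thm:cutoff} without any further proof, and you have simply spelled out the bijective change of variable $\kappa=(\alpha-1)/\alpha$ that makes this immediate.
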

\bigskip

The above operational interpretation yields as an immediate consequence 
an alternative proof for 
the monotonicity of the new R\'enyi divergences, Corollary \ref{Renyi monotonicity} and Remark \ref{rem:EPPMON}:
\begin{corollary}\label{cor:EPPMON}
Let $\rho,\sigma\in\B(\hil)_+$ and $\map:\,\B(\hil)\to\B(\kil)$ be a trace-preserving linear map such that 
$\map^{\otimes n}$ is positive for every $n\in\bN$. Then
\begin{align*}
D_{\alpha}\nw(\map(\rho)\|\map(\sigma))\le D_{\alpha}\nw(\rho\|\sigma),\ds\ds\ds \alpha>1.
\end{align*}
In particular, $D_{\alpha}\nw$ is monotone non-increasing under CPTP maps for every $\alpha>1$.
\end{corollary}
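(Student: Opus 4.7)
The plan is to leverage the operational characterization of $D_{\alpha}\nw$ as a generalized cutoff rate (Corollary \ref{cor:cutoff}) and to turn the almost trivial monotonicity of the underlying hypothesis testing quantities under $\map$ into monotonicity of the R\'enyi divergence. First I would reduce to the case where $\rho,\sigma$ are density operators with $\rho\ne\sigma$ and $\supp\rho\subseteq\supp\sigma$: positive rescalings are absorbed by the $-\frac{1}{\alpha-1}\log\Tr\rho$ term in \eqref{new Renyi def}, the case $\supp\rho\not\subseteq\supp\sigma$ already makes the right-hand side $+\infty$, and $\rho=\sigma$ is trivial. Moreover, $\supp\rho\subseteq\supp\sigma$ entails $\rho\le c\,\sigma$ for some $c>0$, and positivity of $\map$ yields $\map(\rho)\le c\,\map(\sigma)$, hence $\supp\map(\rho)\subseteq\supp\map(\sigma)$; the subcase $\map(\rho)=\map(\sigma)$ is again trivial, so we may also assume $\map(\rho)\ne\map(\sigma)$.

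The key observation is that $\map^{\otimes n}$ being positive and trace-preserving forces its adjoint $(\map^{\otimes n})^{*}$ to be positive and unital. Consequently, for any test $T$ on $\kil^{\otimes n}$, the pullback $T':=(\map^{\otimes n})^{*}(T)$ satisfies $0\le T'\le I$ and, via the duality \eqref{eq:29},
\begin{equation*}
\Tr\rho^{\otimes n}T'=\Tr\map(\rho)^{\otimes n}T,\ds\ds\ds \Tr\sigma^{\otimes n}T'=\Tr\map(\sigma)^{\otimes n}T.
\end{equation*}
Thus every test feasible for the hypothesis testing problem associated with $(\map(\rho),\map(\sigma))$ at threshold $e^{-nr}$ pulls back to a feasible test with the same type I error for $(\rho,\sigma)$; taking minima over tests yields
\begin{equation*}
\alpha_{n,r}(\rho\|\sigma)\le \alpha_{n,r}(\map(\rho)\|\map(\sigma)),\ds\ds\ds n\in\bN,\ r>0.
\end{equation*}
Dividing $\log(1-\alpha_{n,r})$ by $n$, letting $n\to\infty$, and invoking Lemma \ref{lemma:alphanr} (applicable to both pairs thanks to the reduction above) converts this into $H_r^{*}(\rho\|\sigma)\le H_r^{*}(\map(\rho)\|\map(\sigma))$ for every $r>0$.

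Finally, the proof of Theorem \ref{thm:cutoff} exhibits the representation $C_\kappa(\rho\|\sigma)=\sup_{r>0}\bigl\{r-H_r^{*}(\rho\|\sigma)/\kappa\bigr\}$, so the inequality on $H_r^{*}$ flips upon inversion: $C_\kappa(\map(\rho)\|\map(\sigma))\le C_\kappa(\rho\|\sigma)$ for every $\kappa\in(0,1)$. Specializing to $\kappa=(\alpha-1)/\alpha$ and invoking Corollary \ref{cor:cutoff} gives $D_{\alpha}\nw(\map(\rho)\|\map(\sigma))\le D_{\alpha}\nw(\rho\|\sigma)$, and the ``in particular'' statement follows since CP-ness of $\map$ forces $\map^{\otimes n}$ to be CP, hence positive, for every $n\in\bN$. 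I do not expect any genuine obstacle: the analytic heavy lifting has already been invested in establishing the operational interpretation, and the present argument is a mechanical pull-back of tests combined with the Legendre-type duality between $C_\kappa$ and $H_r^{*}$.
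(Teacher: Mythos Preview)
Your proof is correct and follows essentially the same route as the paper's own proof: pull back tests via the positive unital adjoint $(\map^{\otimes n})^*$ to compare the hypothesis-testing quantities $\alpha_{n,r}$, and then convert this into monotonicity of $D_{\alpha}\nw$ via the cutoff-rate characterization of Theorem~\ref{thm:cutoff}/Corollary~\ref{cor:cutoff}. The only cosmetic differences are that you route through Lemma~\ref{lemma:alphanr} and the dual representation $C_\kappa=\sup_{r>0}\{r-H_r^*/\kappa\}$, whereas the paper plugs the bound on $\limsup_n\frac{1}{n}\log(1-\alpha_{n,r})$ directly into the defining inequality \eqref{cutoff def} for the cutoff rate; and you spell out the reduction to states with $\supp\rho\subseteq\supp\sigma$ and $\rho\ne\sigma$ (and likewise for the image pair), which the paper leaves implicit.
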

\begin{proof}
By assumption, the Hilbert-Schmidt dual $(\map^{\otimes n})^*$ is positive and unital for every $n\in\bN$, and hence
\begin{align*}
\alpha_{e^{-nr}}(\map(\rho)^{\otimes n}\|\map(\sigma)^{\otimes n})&=
\min\{\Tr\map^{\otimes n}(\rho^{\otimes n}) (I-T)\,:\,0\le T\le I,\,\Tr\map^{\otimes n}(\sigma^{\otimes n}) T\le e^{-nr}\}
\\
&=
\min\{\Tr\rho^{\otimes n} (I-(\map^{\otimes n})^*(T))\,:\,0\le T\le I,\,\Tr\sigma^{\otimes n} (\map^{\otimes n})^*(T)\le e^{-nr}\}\\
&\ge
\min\{\Tr\rho^{\otimes n} (I-T)\,:\,0\le T\le I,\,\Tr\sigma^{\otimes n} T\le e^{-nr}\}\\
&=
\alpha_{e^{-nr}}(\rho^{\otimes n}\|\sigma^{\otimes n}).
\end{align*}
Thus for every $\kappa\in(0,1)$, and every $r>0$,
\begin{align*}
\limsup_{n\to+\infty}\frac{1}{n}\log (1-\alpha_{e^{-nr}}(\map(\rho)^{\otimes n}\|\map(\sigma)^{\otimes n}))
\le
\limsup_{n\to+\infty}\frac{1}{n}\log (1-\alpha_{e^{-nr}}(\rho^{\otimes n}\|\sigma^{\otimes n}))
\le
-\kappa r+\kappa \rsrn{\rho}{\sigma}{\frac{1}{1-\kappa}},
\end{align*}
where in the last inequality we used Theorem \ref{thm:cutoff}. By the definition of the $\kappa$-cutoff rate
and Theorem \ref{thm:cutoff}, we get
\begin{align*}
\rsrn{\map(\rho)}{\map(\sigma)}{\frac{1}{1-\kappa}}=C_{\kappa}(\map(\rho)\|\map(\sigma))
\le
\rsrn{\rho}{\sigma}{\frac{1}{1-\kappa}},
\end{align*}
proving the assertion.
\end{proof}

\section{Conclusion}

In this paper we have determined the exact strong converse exponent for binary quantum hypothesis testing, and showed that it can be expressed in terms of the recently introduced version of quantum
R\'enyi $\alpha$-relative entropies $D_{\alpha}\nw$ \cite{Renyi_new,WWY} with parameters $\alpha>1$. Following then Csisz\'ar's approach, we gave a direct operational interpretation of these R\'enyi relative entropies as generalized cutoff rates.
Our results show that, at least in the context of hypothesis testing, the operationally relevant quantum generalization of R\'enyi's $\alpha$-relative entropies for $\alpha>1$ are
given by $D_{\alpha}\nw$. On the other hand, previous results \cite{ANSzV,Hayashi,MH,Nagaoka} show that for $\alpha<1$, the operationally relevant quantum generalization is the traditional notion
$D_{\alpha}\old$.

Our proof for the optimality of the converse Hoeffding divergence for the strong converse rate follows immediately from the monotonicity of $D_{\alpha}\nw,\,\alpha>1$,
under measurements; this proof technique goes back to Nagaoka's proof for the strong converse \cite{Nagaoka2}.
We proved the achievability of the converse Hoeffding divergence for the strong converse rate by
showing that the quantum Neyman-Pearson tests (or suitable modifications for large $r$) achieve it for a suitably chosen trade-off parameter.
The proof uses the pinching technique developed by Hayashi \cite{H:pinching,H:text}, classical large deviation theory, and,
for \eqref{eq:58}, the asymptotic attainability of the new R\'enyi relative entropies by pinching.
An alternative proof for the achievability of the converse Hoeffding divergence can be obtained by combining the pinching 
technique with the G\"artner-Ellis theorem; this approach can be used also for the hypothesis testing problem of various for non-i.i.d.~states \cite{MO}.

\appendix
\section{Monotonicity and attainability properties of the R\'enyi divergences}
\label{sec:mon}

For a general quantum divergence $D$ (i.e., a function on pairs of density operators), one can consider various monotonicity and attainability properties.
By a monotonicity property we mean that for every $\rho,\sigma\in\B(\hil)_+$ and every 
$\map:\,\B(\hil)\to\B(\kil)$ belonging to a certain class of maps,
\begin{align}\label{mon def}
D(\map(\rho)\|\map(\sigma))\le D(\rho\|\sigma).
\end{align}
Here we will consider the monotonicity properties
MON, SMON, EPPMON, MMON and PMON, where in each case, the map $\map$ in \eqref{mon def}
is a trace-preserving positive linear map, with the following additional properties:
\vspace{.4cm}

\noindent MON:\phantom{EPI} \ds \begin{minipage}[t]{18cm}
$\map$ is completely positive.
\end{minipage}
\vspace{.01cm}

\noindent SMON:\phantom{EP} \ds \begin{minipage}[t]{14cm}
$\map$ is a stochastic map in the sense of \cite{HMPB}, i.e., it is the convex combination of two trace-preserving maps
$\map_1$ and $\map_2$, such that the adjoint (w.r.t.~the Hilbert-Schmidt inner product) of $\map_1$ is a Schwarz map, and the adjoint of $\map_2$ is a Schwarz map composed with the transposition in some basis.
\end{minipage}
\vspace{.01cm}

\noindent EPPMON: \ds \begin{minipage}[t]{14cm}
$\map$ is such that every tensor power $\map^{\otimes n}$ is positive, $n\in\bN$.
\end{minipage}
\vspace{.01cm}

\noindent MMON:\phantom{EI} \ds \begin{minipage}[t]{14cm}
$\map$ is a measurement, i.e., all operators in $\map(\B(\hil))$ commute with each other.
\end{minipage}
\vspace{.01cm}

\noindent PMON:\phantom{EP} \ds \begin{minipage}[t]{14cm}
$\map$ is the pinching with respect to the reference state $\sigma$.
\end{minipage}
\vspace{.01cm}

\noindent The following implications are obvious:

\begin{align*}
\begin{array}{ccccc}
\text{SMON} & & & & \\
\Downarrow & & & & \\
\text{MON} & \imp & \text{MMON} & \imp & \text{PMON}  \\
\Uparrow & & & & \\
\text{EPPMON} & & & & 
\end{array}
\end{align*}

By an asymptotic attainability property we mean that 
for every $\rho,\sigma\in\B(\hil)_+$, there exists a sequence of maps $\map_n:\,\B(\hil^{\otimes n})\to\B(\kil_n),\,n\in\bN$,
with each $\map_n$ belonging to some class further specified below, such that 
\begin{align*}
D(\rho\|\sigma)=\lim_{n\to+\infty}\frac{1}{n}D(\map_n(\rho^{\otimes n})\|\map_n(\sigma^{\otimes n})).
\end{align*}
Here we will consider
\vspace{.4cm}

\noindent AAM:\phantom{EI} \ds \begin{minipage}[t]{14cm}
(asymptotic attainability by measurements) \ds
Every $\map_n$ is a measurement.
\end{minipage}
\vspace{.01cm}

\noindent AAP:\phantom{EP} \ds \begin{minipage}[t]{14cm}
(asymptotic attainability by pinching) \ds
Every $\map_n$ is the pinching with respect to the reference state $\sigma^{\otimes n}$.
\end{minipage}
\vspace{.4cm}

\noindent The following implication is obvious:
\begin{align}\label{AAP->AAM}
\text{AAP}\imp\text{AAM}.
\end{align}
Furthermore, we say that $D$ satisfies AAMmax if
\begin{align*}
D(\rho\|\sigma)=\lim_{n\to+\infty}\frac{1}{n}\max_{\map_n\mathrm{ measurement}}D(\map_n(\rho^{\otimes n})\|\map_n(\sigma^{\otimes n})).
\end{align*}
We have
\begin{align}\label{MMON+AAM=EPPMON}
\text{MMON+AAM}\imp\text{AAMmax}\imp\text{EPPMON},
\end{align}
where the first implication is straightforward to verify, and the second one follows the same way as in 
Corollary \ref{cor:F monotonicity}.

The following table summarizes the monotonicity and attainability properties of the old and the new R\'enyi relative entropies
(NK stands for ``Not Known''):
\vspace{.4cm}

{\renewcommand{\arraystretch}{1.5}
\renewcommand{\tabcolsep}{0.2cm}
\begin{tabular}{|l|l|c|c|c|c|c|}
\hline
                &  &  \ds $(0,1/2)$ \ds & \ds $[1/2,1)$ \ds & \ds $(1,2]$ \ds & $(2,+\infty)$ \\
\hline
SMON & $D_{\alpha}\old$  & \multicolumn{3}{c|}{YES$^1$}                 &  NO$^2$  \\  \cline{2-6}
    & $D_{\alpha}\nw$   & NO$^3$ & \multicolumn{3}{c|}{NK}  \\ 
\hline
EPPMON & $D_{\alpha}\old$  & \multicolumn{2}{c|}{YES$^1$} & NK &       NO$^2$  \\  \cline{2-6}
    & $D_{\alpha}\nw$   & NO$^3$ &  \multicolumn{3}{c|}{YES$^4$}  \\ 
\hline
MON & $D_{\alpha}\old$  & \multicolumn{3}{c|}{YES$^1$}                 &  NO$^2$  \\  \cline{2-6}
    & $D_{\alpha}\nw$   & NO$^3$ & \multicolumn{3}{c|}{YES$^4$}  \\ 
\hline
MMON & $D_{\alpha}\old$  & \multicolumn{4}{c|}{YES$^1$}                  \\  \cline{2-6}
    & $D_{\alpha}\nw$   & NK & \multicolumn{3}{c|}{YES$^4$}  \\ 
\hline
PMON & $D_{\alpha}\old$  & \multicolumn{4}{c|}{YES$^1$}                  \\  \cline{2-6}
    & $D_{\alpha}\nw$   & \multicolumn{4}{c|}{YES$^4$}  \\ 
\hline    
AAP & $D_{\alpha}\old$  & \multicolumn{4}{c|}{NO$^5$}                  \\  \cline{2-6}
    & $D_{\alpha}\nw$   & \multicolumn{4}{c|}{YES$^4$}  \\ 
\hline    
AAM & $D_{\alpha}\old$  & NK & \multicolumn{3}{c|}{NO$^5$}                  \\  \cline{2-6}
    & $D_{\alpha}\nw$   & \multicolumn{4}{c|}{YES$^4$}  \\ 
\hline    
\end{tabular}}
\vspace{.4cm}

\noindent $^1$: Monotonicity of $D_{\alpha}\old$ for $\alpha\in[0,2]$ under $2$-positive maps has been proved in \cite{Petz},
and has been extended to stochastic maps in \cite{HMPB}. MMON and PMON for  $\alpha\in[0,2]$ are immediate consequences,
and for $\alpha>2$ they have been proved by a different method in \cite[Section 3.7]{H:text}.
EPPMON follows from the operational interpretation of $D_{\alpha}\old$ for $\alpha\in(0,1)$ in the context of the Hoeffding bound; see, e.g., \cite{Nagaoka}.
\medskip

\noindent $^2$: Failure of MON for $\alpha>2$ was pointed out in \cite[page 7]{Renyi_new}. 
One can easily see that MON is equivalent to joint convexity for the core quantities of the old R\'enyi divergences,
$Q_{\alpha}(\rho\|\sigma):=\Tr\rho^{\alpha}\sigma^{1-\alpha}$; see, e.g., \cite{Petz}. An easy argument \cite{HTp}, omitted in 
\cite{Renyi_new}, shows that even convexity of $Q_{\alpha}$ in its first argument implies the operator convexity of 
the power function $\bR_+\ni x\mapsto x^{\alpha}$. Since the latter is not true for $\alpha>2$ (see, e.g., \cite[Exercise V.2.11]{Bhatia}), MON cannot hold for $D_{\alpha}\old,\,\alpha>2$, from which the failure of SMON and EPPMON for the same range of 
$\alpha$ are obvious. 
\medskip

\noindent$^3$: MON for $D_{\alpha}\nw$ is also equivalent to joint convexity, the failure of which for $\alpha<1/2$ has been 
confirmed by numerical examples according to \cite{Renyi_new}. Failure of MON obviously yields failure of 
SMON and EPPMON.
\medskip

\noindent $^4$: 
MON for $D_{\alpha}\nw$ have been proved by various methods, applicable to different parameter ranges, in
\cite{Beigi,FL,Renyi_new,WWY}. These approaches either prove monotonicity directly, or through joint convexity, and rely on 
techniques from matrix analysis or functional analysis. 

In this paper we followed a different approach, starting from PMON, that has been proved for all parameter values
$\alpha\ge 0$ in \cite{Renyi_new}. We then proved, for $\alpha>1$, MMON in Lemma \ref{mono:measurement} and AAP in Theorem \ref{thm:attainability}, which in turn yield AAM and the stronger monotonicity property EPPMON, according to \eqref{AAP->AAM} 
and \eqref{MMON+AAM=EPPMON}; see also 
Corollary \ref{Renyi monotonicity} and Remark \ref{rem:EPPMON}.

AAP for $\alpha\in[0,1)$ has been proved very recently in \cite{HT}. It is not clear whether MMON and thus EEPMON for $\alpha
\in[1/2,1)$ can be obtained from it the same way as for $\alpha>1$ in the present paper. However, when combined with MON for 
$\alpha\in[1/2,1)$, derived
by other methods as mentioned above, it implies AAM and thus EPPMON for $\alpha\in[1/2,1)$, according to \eqref{AAP->AAM} 
and \eqref{MMON+AAM=EPPMON}.
\medskip

\noindent $^5$: 
For commuting states the old and the new R\'enyi relative entropies coincide, whereas 
for non-commuting states the inequality in \eqref{ALT} is strict according to \cite{Hiai ALT}.
Thus, AAP for $D_{\alpha}\nw$ implies that AAP cannot hold for $D_{\alpha}\old$, for any fixed value
$\alpha\in(0,+\infty)\setminus\{1\}$.
For $\alpha\ge 1/2$, AAM+MMON yields AAMmax according to \eqref{MMON+AAM=EPPMON}, and hence
\begin{align*}
\lim_{n\to+\infty}\frac{1}{n}\max_{\map_n\mathrm{ measurement}}D_{\alpha}\old(\map_n(\rho^{\otimes n})\|\map_n(\sigma^{\otimes n}))=D_{\alpha}\nw<D_{\alpha}\old
\end{align*}
whenever $\rho$ and $\sigma$ don't commute, showing that AAM fails for $D_{\alpha}\old,\,\alpha\ge 1/2$.
\medskip

\begin{remark}
In Corollary \ref{cor:EPPMON} we presented an approach to obtain EPPMON from the 
operational representation in Corollary \ref{cor:cutoff}.
However, to obtain Corollary \ref{cor:cutoff}, we used MMON (to prove Lemma \ref{lemma:converse rate lower bound}) and 
AAP (for \eqref{eq:58}), from which properties EEPMON is immediate, as we have seen above. It is an interesting open question 
whether the cutoff rate representation, or Theorem \ref{thm:exponent}, can be obtained without the use of monotonicity and achievability 
properties, thus providing a fully operational proof for the monotonicity of the new R\'enyi divergences for $\alpha>1$. We 
remark that such a fully operational proof for $D_{\alpha}\old,\,\alpha\in(0,1)$, follows from the Hoeffding bound theorem, as it 
was pointed out in \cite{Nagaoka}.
\end{remark}

\begin{remark}
For $\alpha=1$, the old and the new R\'enyi relative entropies yield the same limit $D_1$, Umegaki's relative emtropy. 
SMON and EPPMON for $D_{\alpha}\old$ yield immediately the same properties for $D_1$ by taking the limit $\alpha\to 1$.
AAP has been shown in \cite{HP}, and it was the key technical tool to prove the direct part of the quantum Stein's lemma 
\cite{HP}, and various generalizations of it \cite{BS-Sch,Shannon-McMillan,BDKSSSz}.  
From these, the rest of the properties, MON, MMON, PMON, AAM and AMMmax, follow immediately, as we have seen before.
\end{remark}
\medskip

The above properties show that the new R\'enyi relative entropies provide the smallest possible quantum extension
of the classical R\'enyi relative entropies, under very mild conditions.
\begin{prop}
For a fixed $\alpha\ge 0$, let $\what D_{\alpha}$ be a function on pairs of quantum states on the same Hilbert space, with the following properties:
\begin{enumerate}
\item
$\what D_{\alpha}$ coincides with the classical R\'enyi relative entropy $D_{\alpha}$ on commuting states;
\item
$\what D_{\alpha}$ is additive, i.e., for every $\rho,\sigma$ and every $n\in\bN$,
$\what D_{\alpha}(\rho^{\otimes n}\|\sigma^{\otimes n})=n\what D_{\alpha}(\rho\|\sigma)$;
\item
$\what D_{\alpha}$ satisfies PMON.
\end{enumerate}
Then $D_{\alpha}\nw\le \what D_{\alpha}$.
In particular, $D_{\alpha}\nw\le D_{\alpha}\old$ for every $\alpha\in[0,+\infty]\setminus\{1\}$.
\end{prop}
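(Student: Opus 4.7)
The natural strategy is to combine the three hypotheses on $\what D_{\alpha}$ with the asymptotic attainability of $D_{\alpha}\nw$ by pinching (Theorem \ref{thm:attainability}). Set $\rho_n:=\rho^{\otimes n}$, $\sigma_n:=\sigma^{\otimes n}$, and $\what\rho_n:=\E_{\sigma_n}(\rho_n)$ as in \eqref{eq:10}. The plan is to produce a chain of inequalities that relates $\what D_{\alpha}(\rho\|\sigma)$ to the classical R\'enyi divergence of a commuting pair of post-pinching operators, and then to let $n\to\infty$.

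Concretely, I would first apply PMON (hypothesis 3) to the pinching $\E_{\sigma_n}$ on $\hil^{\otimes n}$, giving $\what D_{\alpha}(\rho_n\|\sigma_n)\ge \what D_{\alpha}(\what\rho_n\|\sigma_n)$. The left-hand side equals $n\what D_{\alpha}(\rho\|\sigma)$ by additivity (hypothesis 2). On the right-hand side, $\what\rho_n$ and $\sigma_n$ commute by construction, so hypothesis 1 identifies $\what D_{\alpha}(\what\rho_n\|\sigma_n)$ with the classical R\'enyi divergence $D_{\alpha}(\what\rho_n\|\sigma_n)$; this in turn equals $\rsrn{\what\rho_n}{\sigma_n}{\alpha}$, since the new R\'enyi divergence reduces to the classical one on commuting states (immediate from \eqref{new Renyi def}). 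Dividing by $n$ and invoking Theorem \ref{thm:attainability}, which asserts precisely that $\lim_n\frac{1}{n}\rsrn{\what\rho_n}{\sigma_n}{\alpha}=\rsrn{\rho}{\sigma}{\alpha}$, one obtains $\what D_{\alpha}(\rho\|\sigma)\ge \rsrn{\rho}{\sigma}{\alpha}$.

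For the ``in particular'' statement, it suffices to check that $D_{\alpha}\old$ satisfies the three hypotheses. Agreement with the classical $D_{\alpha}$ on commuting states is clear from the definition \eqref{old Renyi}; additivity follows because $\Tr(\rho\otimes\rho)^{\alpha}(\sigma\otimes\sigma)^{1-\alpha}$ factorizes under tensor products; and PMON of $D_{\alpha}\old$ holds for every $\alpha\ge 0$ (footnote~1 in Appendix~\ref{sec:mon}, via \cite{H:text}). The endpoints $\alpha=1$ and $\alpha=+\infty$ are trivial since the two R\'enyi families coincide there.

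The main obstacle is that Theorem \ref{thm:attainability} is established in this paper only for $\alpha\ge 1$, so the argument above is complete only in that range, which is precisely the regime the paper focuses on. For $\alpha\in[0,1)$ one would instead invoke the AAP result of \cite{HT} referenced in footnote~4 of the appendix; with that input the rest of the argument is identical. No other step involves a serious difficulty: once AAP is in hand, the proposition reduces to a one-line combination of PMON, additivity, and the collapse of both quantum R\'enyi extensions to the classical divergence on the commuting pair $(\what\rho_n,\sigma_n)$.
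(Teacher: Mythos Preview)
Your proposal is correct and follows essentially the same route as the paper: apply PMON to $\E_{\sigma_n}$, use additivity on the left, collapse the right-hand side to the classical R\'enyi divergence via hypothesis~1, and then invoke AAP for $D_{\alpha}\nw$ to pass to the limit. The paper's proof is the same chain compressed into two displayed lines; your remark that Theorem~\ref{thm:attainability} (more precisely Corollary~\ref{cor:Renyi attainability}) covers only $\alpha\ge 1$ and that one must cite \cite{HT} for $\alpha\in[0,1)$ is exactly what the paper does implicitly through footnote~4 of the appendix table.
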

\begin{proof}
Let $\rho$ and $\sigma$ be fixed. By assumption, we have
\begin{align*}
D_{\alpha}(\E_{\sigma^{\otimes n}}(\rho^{\otimes n})\|\sigma^{\otimes n})
=
\what D_{\alpha}(\E_{\sigma^{\otimes n}}(\rho^{\otimes n})\|\sigma^{\otimes n})
\le
\what D_{\alpha}(\rho^{\otimes n}\|\sigma^{\otimes n})
=
n\what D_{\alpha}(\rho\|\sigma).
\end{align*}
Using that $D_{\alpha}\nw$ satisfies AAP, we get
\begin{align*}
D_{\alpha}\nw(\rho\|\sigma)=
\lim_{n\to+\infty}\frac{1}{n}D_{\alpha}(\E_{\sigma^{\otimes n}}(\rho^{\otimes n})\|\sigma^{\otimes n})
\le
\what D_{\alpha}(\rho\|\sigma).
\end{align*}
\end{proof}
\vspace{.4cm}

\noindent\textbf{Sufficiency and single-shot attainability}
\vspace{.4cm}

Instead of the asymptotic attainability properties studied above, one can also consider single-shot attainability.
Here we will be interested in attainability by measurements (AM), which is satisfied by a quantum divergence $D$ if for every 
pair of states $\rho,\sigma$, there exists a measurement $\F$ such that 
$D(\F(\rho)\|\F(\sigma))=D(\rho\|\sigma)$. It is easy to see that 
\begin{align}\label{AM+MMON=posMON}
\text{AM+MMON}\imp\text{monotonicity under trace-preserving positive maps},
\end{align}
a very strong monotonicity property. It is clear that $D_{\alpha}\old$ cannot satisfy AM for any 
$\alpha\in(0,+\infty)\setminus\{1\}$, due to the strict inequality in 
\eqref{ALT} for non-commuting states. It is an open question whether AAM for $D_{\alpha}\nw$ can be strengthened to AM in 
general. However, we have the following special cases:
\begin{lemma}\label{A}
$D_{1/2}\nw$ and $D_{+\infty}\nw=D_{\max}$ satisfy AM.
\end{lemma}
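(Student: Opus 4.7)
The plan is to handle the two cases separately; in each case I would produce an explicit POVM $M$ such that the classical R\'enyi divergence of the post-measurement distributions $p_M(x):=\Tr\rho M_x$ and $q_M(x):=\Tr\sigma M_x$ equals the quantum divergence of $\rho,\sigma$.

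For $D_{1/2}\nw$, the plan is to rewrite it as a function of the fidelity and invoke the Fuchs--Caves theorem. From \eqref{new Renyi def},
\begin{equation*}
D_{1/2}\nw(\rho\|\sigma)=-2\log\Tr\bz\sigma^{1/2}\rho\sigma^{1/2}\jz^{1/2}=-2\log F(\rho,\sigma),
\end{equation*}
while for any POVM $M$ the corresponding classical expression is
\begin{equation*}
D_{1/2}(p_M\|q_M)=-2\log\sum_x\sqrt{p_M(x)q_M(x)}=-2\log F(p_M,q_M).
\end{equation*}
By the Fuchs--Caves theorem there exists a POVM $M^*$ (concretely, a projective measurement in the eigenbasis of an operator depending on $\rho$ and $\sigma$) such that $F(\rho,\sigma)=F(p_{M^*},q_{M^*})$; applying $-2\log$ to both sides then yields AM for $D_{1/2}\nw$.

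For $D_{\max}$, the plan is to construct a rank-one binary test directly. One may assume $\supp\rho\subseteq\supp\sigma$, since otherwise $\dmax{\rho}{\sigma}=+\infty$ and projecting onto a unit vector in $\supp\rho\cap(\supp\sigma)^{\perp}$ already gives post-measurement distributions with $D_{\max}=+\infty$. Then set $\lambda:=\exp(\dmax{\rho}{\sigma})$, which is the smallest scalar with $\rho\le\lambda\sigma$, equivalently the largest eigenvalue of $X:=\sigma^{-1/2}\rho\sigma^{-1/2}$ on $\supp\sigma$ (with $\sigma^{-1/2}$ interpreted by the convention introduced after \eqref{eq:11}). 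Pick a unit eigenvector $v\in\supp\sigma$ with $Xv=\lambda v$, set $w:=\sigma^{-1/2}v$, and let $T:=ww^*/\|w\|^2$. Then $Xv=\lambda v$ is equivalent to $\rho w=\lambda\sigma w$, which gives $\Tr\rho T=\lambda\Tr\sigma T$; moreover $\Tr\sigma T=\|v\|^2/\|w\|^2=1/\|w\|^2>0$, so the binary POVM $\{T,I-T\}$ has a well-defined post-measurement ratio $p_M(0)/q_M(0)=\lambda$. Hence $D_{\max}(p_M\|q_M)\ge\log\lambda=\dmax{\rho}{\sigma}$, and combining with MMON for $D_{\max}$ --- obtained by taking $\alpha\to+\infty$ in MMON for $D_\alpha\nw$, which is established in the table above --- yields the desired equality.

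I do not expect a substantive obstacle: the $D_{1/2}\nw$ case reduces immediately to the classical Fuchs--Caves theorem once the identification with fidelity is made, while the $D_{\max}$ case is a one-shot explicit construction. The only fiddly points are handling the support condition and verifying that the eigenvector $v$ of $X$ saturating $\rho\le\lambda\sigma$ does indeed produce a test with nonzero $\Tr\sigma T$, both of which are routine.
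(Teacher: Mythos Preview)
Your treatment of $D_{1/2}\nw$ is the same as the paper's: both identify $D_{1/2}\nw=-2\log F$ and appeal to the known fact (Fuchs--Caves, or \cite[Chapter 9]{NC} in the paper's version) that the fidelity is attained by a measurement.

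For $D_{\max}$ your approach differs from the paper's. The paper invokes the linear-programming dual representation
\[
\dmax{\rho}{\sigma}=\max\{\log\Tr M\rho:\,0\le M,\ \Tr M\sigma=1\},
\]
and rewrites it as a maximum over POVMs of the post-measurement $D_{\max}$. Your argument is more hands-on: you read off $\lambda=e^{\dmax{\rho}{\sigma}}$ as the top eigenvalue of $X=\sigma^{-1/2}\rho\sigma^{-1/2}$, take an eigenvector, and build a rank-one test that hits the ratio $\lambda$ exactly. This is a valid and arguably more elementary construction; the LP-duality route, on the other hand, yields the variational formula for $D_{\max}$ as a byproduct and avoids eigenvector bookkeeping.

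One point to fix: in the case $\supp\rho\nsubseteq\supp\sigma$, the intersection $\supp\rho\cap(\supp\sigma)^\perp$ can be $\{0\}$ (e.g., $\supp\sigma=\mathrm{span}\{e_1\}$, $\supp\rho=\mathrm{span}\{e_1+e_2\}$), so there may be no unit vector to project onto. The easy repair is to use the projection $I-\sigma^0$ itself: since $\supp\rho\nsubseteq\supp\sigma$ implies $\Tr\rho(I-\sigma^0)>0$ while $\Tr\sigma(I-\sigma^0)=0$, the binary POVM $\{I-\sigma^0,\sigma^0\}$ already gives post-measurement $D_{\max}=+\infty$.
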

\begin{proof}
Note that $D_{1/2}\nw=-2\log F$, where $F$ is Uhlmann's fidelity \cite{Uhlmann2}.
Since the fidelity is known to be attainable by measurements (see, e.g., \cite[Chapter 9]{NC}), the assertion follows for
$D_{1/2}\nw$.

If $\rho,\sigma\in\B(\hil)_+$ are such that $\supp\rho\le\supp\sigma$ then one can use the duality of linear programming to write the max-relative entropy of $\rho$ and $\sigma$ as \cite{BFS,marco_thesis,Winter2}
\begin{align*}
D_{\max}(\rho\|\sigma)&=
\max\{\log\Tr M\rho\,:\,0\le M,\,\Tr M\sigma=1\}\\
&=
\max\left\{\log\frac{\Tr M\rho}{\Tr M\sigma}\,:\,0\le M\le I\right\}\\
&=
\max\left\{\max_{x\in\X}\left\{\log\frac{\Tr M_x\rho}{\Tr M_x\sigma}\right\}\,:\,\{M_x\}_{x\in\X}\text{ POVM}\right\}\\
&=
\max\left\{D_{\max}(\{\Tr\rho M_x\}_{x\in\X}\|\{\Tr\sigma M_x\}_{x\in\X})\,:\,\{M_x\}_{x\in\X}\text{ POVM}\right\}.
\end{align*}
The equality between the first and the last expression above holds trivially when $\supp\rho\le\supp\sigma$ is not satisfied.
\end{proof}

It is well-known that the fidelity is monotone non-decreasing, or equivalently, $D_{1/2}\nw$ is monotone non-increasing, under 
CPTP maps. Combining this with Lemma \ref{A}, we get the following stronger monotonicity property:
\begin{corollary}
The fidelity is monotone non-decreasing, or equivalently, $D_{1/2}\nw$ is monotone non-increasing, under trace-preserving positive maps.
\end{corollary}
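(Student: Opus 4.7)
The plan is to combine Lemma \ref{A} with the general implication \eqref{AM+MMON=posMON} stated earlier. Concretely, I would argue as follows.

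First, I would record that $D_{1/2}\nw$ satisfies MMON. This is essentially free: every measurement $M=\{M_x\}_{x\in\X}$ induces a CPTP map $\rho\mapsto\sum_x(\Tr\rho M_x)\diad{x}{x}$, so monotonicity of $D_{1/2}\nw$ under measurements is a special case of its monotonicity under CPTP maps, which has been established (see the ``YES$^{4}$'' entries in the table, e.g., Corollary \ref{Renyi monotonicity}). Equivalently, one can invoke the well-known fact that Uhlmann's fidelity is monotone non-decreasing under CPTP maps, which is in particular applicable to measurement channels.

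Next, I would apply the AM property from Lemma \ref{A}. Given a trace-preserving positive map $\F:\B(\hil)\to\B(\kil)$ and states $\rho,\sigma\in\S(\hil)$, Lemma \ref{A} provides a measurement $\map'$ on $\kil$ with
\begin{equation*}
D_{1/2}\nw(\map'(\F(\rho))\,\|\,\map'(\F(\sigma)))=D_{1/2}\nw(\F(\rho)\,\|\,\F(\sigma)).
\end{equation*}
The composition $\map'\circ\F$ is a trace-preserving positive map whose image is commutative, hence it is itself a measurement. By MMON applied to $\map'\circ\F$,
\begin{equation*}
D_{1/2}\nw(\map'(\F(\rho))\,\|\,\map'(\F(\sigma)))\le D_{1/2}\nw(\rho\,\|\,\sigma).
\end{equation*}
Combining the two displays yields $D_{1/2}\nw(\F(\rho)\|\F(\sigma))\le D_{1/2}\nw(\rho\|\sigma)$, which is the claimed monotonicity; the equivalent statement for the fidelity follows from $D_{1/2}\nw=-2\log F$.

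The only genuinely nontrivial input is Lemma \ref{A}, which has already been established; the rest of the argument is the two-line schematic \eqref{AM+MMON=posMON}. So I expect no real obstacle beyond making sure the classes of maps line up: measurements are automatically positive and trace-preserving, so MMON is indeed available in the form we need, and the composition of a positive trace-preserving map with a measurement is again a measurement.
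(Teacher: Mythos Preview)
Your proposal is correct and follows essentially the same approach as the paper: derive MMON for $D_{1/2}\nw$ from its known monotonicity under CPTP maps (equivalently, the monotonicity of the fidelity), and then combine this with the AM property from Lemma \ref{A} via the implication \eqref{AM+MMON=posMON}, which you unpack explicitly. One small remark: your pointer to Corollary \ref{Renyi monotonicity} is not quite on target since that result covers only $\alpha>1$; the relevant input here is the CPTP monotonicity of the fidelity (equivalently, of $D_{1/2}\nw$), which you also mention.
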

\begin{proof}
Monotonicity under CPTP maps implies MMON, and thus the assertion is immediate from 
Lemma \ref{A} and \eqref{AM+MMON=posMON}.
\end{proof}

\begin{remark}
Monotonicity of $D_{\max}$ under trace-preserving positive maps is trivial from its definition \eqref{zero NP}.
\end{remark}

\begin{remark}
It is easy to see that for fixed states, the classical R\'enyi relative entropies are monotone increasing in the parameter $\alpha$.
Lemma \ref{A} thus yields that 
\begin{align*}
D_{\max}(\rho\|\sigma)&=
\max_{\alpha\in[0,+\infty]}\max\left\{D_{\alpha}\bz\{\Tr M_i\rho\}\|\{\Tr M_i\sigma\}\jz\,:\,\{M_i\}\text{ POVM}\right\},
\end{align*}
i.e., the max-relative entropy of $\rho$ and $\sigma$ is the largest R\'enyi $\alpha$-relative entropy
of the classical distributions that can be obtained from $\rho$ and $\sigma$ after performing a measurement. 
\end{remark}
\smallskip

We say that a quantum divergence $D$ satisfies the sufficiency property (S) if the following holds: For every states
$\rho,\sigma\in\S(\hil)$, and CPTP map $\map:\,B(\hil)\to\B(\kil)$, 
\begin{align}\label{sat}
D(\map(\rho)\|\map(\sigma))=D(\rho\|\sigma)
\end{align}
implies the 
existence of a CPTP map $\mapp:\,\B(\kil)\to\B(\hil)$ such that 
\begin{align}\label{rev}
\mapp(\map(\rho))=\rho\ds\ds\ds\text{and}\ds\ds\ds \mapp(\map(\sigma))=\sigma.
\end{align} 
Obviously, if $D$ is monotone under CPTP maps then \eqref{rev} implies \eqref{sat}. Thus, for a monotone divergence, sufficiency 
means that the monotonicity inequality is strict in the sense that it can only be saturated in a trivial way.

The old R\'enyi relative entropies $D_{\alpha}\old$ satisfy MON for every $\alpha\in[0,2]$, and they are 
known to have the sufficiency property for every parameter value in this interval, except for its endpoints $0$ and $2$;
see \cite{HMPB,JP,JP2,Petz2,Petz3}.
Failure of (S) for $\alpha=0$ is trivial to see, and for $\alpha=2$ it follows from a counterexample given in
\cite[Example 2.2]{JPP} and \cite[Section 5]{HMPB}. 

Sufficiency for the new R\'enyi relative entropies is an open question for every parameter value, except at the endpoints
of the monotonicity interval $[1/2,+\infty]$. Below we show that, similarly to the case of the old R\'enyi relative entropies,
sufficiency fails at these points.

The following lemma is due to Petz \cite[Lemma 4.1]{Petz4}.
\begin{lemma}\label{lemma:Petz}
Let $\rho,\sigma$ be states and $\{M_x\}_{x\in\X}$ be a measurement such that
\begin{equation}\label{D half equality}
D_{1/2}\old\bz\{\Tr\rho M_x\}_{x\in\X}\|\{\Tr\sigma M_x\}_{x\in\X}\jz=
D_{1/2}\old(\rho\|\sigma).
\end{equation}
Then $\rho$ and $\sigma$ commute.
\end{lemma}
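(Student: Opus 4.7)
The plan is to exploit the equality conditions in the Cauchy-Schwarz and triangle inequalities that underlie the monotonicity of $D_{1/2}\old$ under measurement.

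First I would rewrite the hypothesis. Since $D_{1/2}\old(\rho\|\sigma)=-2\log\Tr\rho^{1/2}\sigma^{1/2}$ and the classical version is $-2\log\sum_x\sqrt{p_xq_x}$, the equality \eqref{D half equality} is equivalent to
\begin{equation*}
\sum_x\sqrt{\Tr\rho M_x}\sqrt{\Tr\sigma M_x}=\Tr\rho^{1/2}\sigma^{1/2}.
\end{equation*}
Using $\sum_x M_x=I$, I expand the right-hand side as $\Tr\rho^{1/2}\sigma^{1/2}=\sum_x\Tr\rho^{1/2}M_x\sigma^{1/2}$ and apply the Hilbert-Schmidt Cauchy-Schwarz inequality with $A=\sigma^{1/2}M_x^{1/2}$, $B=\rho^{1/2}M_x^{1/2}$ to each term:
\begin{equation*}
|\Tr\rho^{1/2}M_x\sigma^{1/2}|\le\sqrt{\Tr\rho M_x}\sqrt{\Tr\sigma M_x},
\end{equation*}
with equality iff $\rho^{1/2}M_x^{1/2}=c_x\sigma^{1/2}M_x^{1/2}$ for some $c_x\in\bC$. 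Combining with the triangle inequality and using that $\Tr\rho^{1/2}\sigma^{1/2}\ge 0$ yields the chain
\begin{equation*}
\Tr\rho^{1/2}\sigma^{1/2}\le\sum_x|\Tr\rho^{1/2}M_x\sigma^{1/2}|\le\sum_x\sqrt{\Tr\rho M_x}\sqrt{\Tr\sigma M_x},
\end{equation*}
which by hypothesis holds with equality throughout.

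Second, I extract the consequences of the saturation. Tightness of the triangle inequality (together with non-negativity of the total) forces each individual $\Tr\rho^{1/2}M_x\sigma^{1/2}$ to be a non-negative real number; tightness of Cauchy-Schwarz gives $\rho^{1/2}M_x^{1/2}=c_x\sigma^{1/2}M_x^{1/2}$. Computing the trace using this relation gives $\Tr\rho^{1/2}M_x\sigma^{1/2}=c_x\Tr\sigma M_x$, hence $c_x\in[0,+\infty)$ whenever $\Tr\sigma M_x>0$. If $\Tr\sigma M_x=0$ then $\sigma^{1/2}M_x^{1/2}=0$, which via the Cauchy-Schwarz equality also forces $\rho^{1/2}M_x^{1/2}=0$, so we may set $c_x=0$ in that case.

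Third, I convert these one-sided relations into a global commutation. Taking adjoints (and using that $c_x$ is real) yields $M_x^{1/2}\rho^{1/2}=c_x M_x^{1/2}\sigma^{1/2}$. Multiplying the first relation on the right by $M_x^{1/2}$ and the adjoint one on the left by $M_x^{1/2}$ gives
\begin{equation*}
\rho^{1/2}M_x=c_x\sigma^{1/2}M_x,\qquad M_x\rho^{1/2}=c_x M_x\sigma^{1/2}.
\end{equation*}
Summing over $x$ and using $\sum_x M_x=I$, with $L:=\sum_x c_x M_x$, I obtain $\rho^{1/2}=\sigma^{1/2}L=L\sigma^{1/2}$. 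Hence $L$ commutes with $\sigma^{1/2}$, and therefore $\rho=\rho^{1/2}\rho^{1/2}=(\sigma^{1/2}L)(L\sigma^{1/2})=\sigma L^2$, which commutes with $\sigma$.

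The only delicate point I anticipate is the bookkeeping in the rank-degenerate case where some $\Tr\sigma M_x$ vanishes; as noted, this is handled by observing that the Cauchy-Schwarz relation then trivialises to $0=0$, so $c_x$ may be chosen arbitrarily (say $0$) without affecting the argument.
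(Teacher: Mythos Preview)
The paper does not give its own proof of this lemma; it simply attributes the result to Petz \cite[Lemma 4.1]{Petz4}. Your argument is essentially the standard one (and, to my knowledge, the one Petz uses): rewrite the equality as $\sum_x\sqrt{p_xq_x}=\Tr\rho^{1/2}\sigma^{1/2}$, expand the right-hand side via $\sum_x M_x=I$, bound each term by Cauchy--Schwarz, and read off the proportionality $\rho^{1/2}M_x^{1/2}=c_x\,\sigma^{1/2}M_x^{1/2}$ from the equality case. Summing then yields $\rho^{1/2}=\sigma^{1/2}L=L\sigma^{1/2}$, and commutation follows. So the overall strategy is correct and matches the literature.

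There is, however, one genuine slip in your treatment of the degenerate outcomes. When $\Tr\sigma M_x=0$ you correctly note that $\sigma^{1/2}M_x^{1/2}=0$, but then assert that ``the Cauchy--Schwarz equality also forces $\rho^{1/2}M_x^{1/2}=0$''. This is false: equality in $|\langle A,B\rangle|\le\|A\|\,\|B\|$ with $A=0$ is the triviality $0=0$ and places no constraint whatsoever on $B$. Consequently, when you sum the relations $\rho^{1/2}M_x=c_x\,\sigma^{1/2}M_x$ over all $x$ to obtain $\rho^{1/2}=\sigma^{1/2}L$, the outcomes with $\Tr\sigma M_x=0$ contribute $\rho^{1/2}M_x$ on the left but zero on the right, and the identity can fail. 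The cleanest repair is to observe that under the natural hypothesis $\supp\rho\subseteq\supp\sigma$ (which holds in the paper's intended application, where the measurement is reversible on $\{\rho,\sigma\}$), $\Tr\sigma M_x=0$ forces $\ran M_x\subseteq\ker\sigma\subseteq\ker\rho$, whence $\rho^{1/2}M_x=0$ after all, but by a support argument rather than by Cauchy--Schwarz. Without that support condition one must argue separately for the component of $\rho$ orthogonal to $\supp\sigma$.
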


\begin{corollary}
No quantum divergence can satisfy (A)+(S). In particular, $D_{1/2}\nw$ and $D_{\infty}\nw$ do not satisfy (S).
\end{corollary}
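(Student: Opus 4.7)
The plan is to argue by contradiction, using the fact that Petz's Lemma \ref{lemma:Petz} is a rigidity statement: equality in the monotonicity of $D_{1/2}\old$ under a measurement forces the pair of states to commute. Combined with the known monotonicity of $D_{1/2}\old$ under general CPTP maps (valid for $\alpha=1/2\in[0,2]$), this is already enough machinery to rule out (A)+(S) simultaneously.

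In detail, suppose a quantum divergence $\what D$ satisfies both (A) and (S), and pick any pair of non-commuting states $\rho,\sigma\in\S(\hil)$ (which exists whenever $\dim\hil\ge 2$). By (A) there is a measurement $\map$ with $\what D(\map(\rho)\|\map(\sigma))=\what D(\rho\|\sigma)$, and by (S) there is a CPTP map $\mapp$ with $\mapp(\map(\rho))=\rho$ and $\mapp(\map(\sigma))=\sigma$. Now apply $D_{1/2}\old$ and sandwich:
\begin{align*}
D_{1/2}\old(\rho\|\sigma)
=D_{1/2}\old(\mapp(\map(\rho))\|\mapp(\map(\sigma)))
\le D_{1/2}\old(\map(\rho)\|\map(\sigma))
\le D_{1/2}\old(\rho\|\sigma),
\end{align*}
where both inequalities are monotonicity of $D_{1/2}\old$ under CPTP maps. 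Equality throughout gives $D_{1/2}\old(\map(\rho)\|\map(\sigma))=D_{1/2}\old(\rho\|\sigma)$, so Lemma \ref{lemma:Petz} forces $\rho$ to commute with $\sigma$, contradicting the choice of pair. Hence no $\what D$ can satisfy both (A) and (S).

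For the second assertion, Lemma \ref{A} established that $D_{1/2}\nw$ and $D_{+\infty}\nw=D_{\max}$ satisfy (A). If either of them also satisfied (S), it would satisfy (A)+(S), contradicting what was just proved; therefore neither can satisfy (S).

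I do not expect any real obstacle here: the whole argument is essentially a one-line application of Petz's rigidity lemma after one recovery step, and the ingredients (monotonicity of $D_{1/2}\old$ under CPTP maps for $\alpha=1/2$, and (A) for $D_{1/2}\nw$ and $D_{\max}$) are already in place. The only thing to be slightly careful about is to state (S) with the quantifier order used in the paper (for every CPTP $\map$ saturating the data-processing inequality there exists a recovery $\mapp$), which is exactly the form used in Step 3 above.
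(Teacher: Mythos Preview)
Your proof is correct and follows essentially the same route as the paper's: obtain a measurement saturating $\what D$ via (A), get a CPTP recovery via (S), then use monotonicity of $D_{1/2}\old$ under CPTP maps to force equality in \eqref{D half equality} and invoke Lemma~\ref{lemma:Petz} for a contradiction. The only difference is cosmetic---you spell out the two-sided sandwich inequality explicitly, while the paper compresses it into a single appeal to monotonicity.
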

\begin{proof}
Assume that $D$ satisfies (A) and (S), and let $\rho,\sigma$ be non-commuting states.
By (A), there exists a POVM $\{M_x\}_{x\in\X}$ such that
$D(\rho\|\sigma)=D\bz\{\Tr\rho M_x\}_{x\in\X}\|\{\Tr\sigma M_x\}_{x\in\X}\jz$. By (S), there exists a CPTP map $\Psi$ such that
$\Psi(\{\Tr\rho M_x\}_{x\in\X})=\rho$ and $\Psi(\{\Tr\sigma M_x\}_{x\in\X})=\sigma$. By the monotonicity of
$D_{1/2}\old$, we have \eqref{D half equality}, and by Lemma \ref{lemma:Petz}, $\rho$ and $\sigma$
commute, which is a contradiction.

The assertion about $D_{1/2}\nw$ and $D_{\infty}\nw$ follows as a special case, due to Lemma \ref{A}.
\end{proof}

\section*{Acknowledgments}

The authors would like to thank Dr.~Gen Kimura for his hospitality, and MM would further like to thank Prof.~Fumio Hiai, Dr.~Hiromichi Ohno and Prof.~Takashi Sano for their hospitality during his visit in Japan.
MM acknowledges support by the European
Commission (Marie Curie Fellowship ``QUANTSTAT'') and by the European Research Council (Advanced Grant``IRQUAT''). Part of this work was done when MM was a Marie Curie research fellow at the School of Mathematics, University of Bristol.
TO was partially supported by MEXT Grant-in-Aid
(A) No.~20686026 ``Project on Multi-user Quantum Network''.
The authors are grateful to Nilanjana Datta, Masahito Hayashi, Ke Li, Marco Tomamichel and Andreas Winter for comments 
and discussions, and to an 
anonymous referee for his/her comments. The authors are also grateful to Masahito Hayashi and Marco Tomamichel for sharing with them the manuscript \cite{HT} before publication.

\end{document}